\documentclass[11pt]{article}

\usepackage[margin=1in]{geometry}
\usepackage{amsmath,amssymb,amsthm,mathtools}
\usepackage{graphicx}
\usepackage{booktabs}
\usepackage{array}
\usepackage{enumitem}
\usepackage{hyperref}
\usepackage{xcolor}
\usepackage{caption}
\usepackage{subcaption}
\usepackage{float}

\hypersetup{colorlinks=true, linkcolor=blue!60!black, citecolor=blue!60!black, urlcolor=blue!60!black}

\newtheorem{proposition}{Proposition}
\newtheorem{remark}{Remark}
\newtheorem{assumption}{Assumption}
\newcommand{\R}{\mathbb{R}}
\newcommand{\E}{\mathbb{E}}
\newcommand{\dd}{\mathrm{d}}

\title{Quality-Adjusted Hit-Ratio Targeting in Corporate Bond Market Making\\

\large Credit Alpha, Client Flow Quality, and Style-Aware Warehousing}

\author{Bouna NIANG\thanks{The author publishes this paper in a personal capacity as an independent quantitative trader/researcher. The views expressed are solely those of the author and do not represent the views of any former employer, including HSBC and Citi, or any of their affiliates. All errors are the author's own. This paper builds on the cited academic literature and public sources; all modelling choices, simulations, interpretations, and conclusions are the author's. The author would like to thank Casimir Emako Kazianou for fruitful discussions on HJB equations and numerical methods.}}
\date{\today}

\begin{document}
\maketitle

\begin{abstract}
Hit ratio is a common service metric for electronic corporate bond market making, but raw hit-ratio targets can be economically misleading when client flow has heterogeneous adverse-selection content. This paper extends a stochastic-control framework for OTC bond RFQ market making with hit-ratio constraints by replacing raw hit ratio with a residual-quality-adjusted hit ratio. The key modelling distinction is that adverse post-trade markouts are first decomposed into observable credit factors, carry/rolldown, issuer-relative-value effects, index or ETF demand effects, and residual adverse selection. Only the residual component is treated as client-flow toxicity. The resulting control problem remains tractable: after dualizing the quality-hit-ratio penalty, the HJB retains separable Hamiltonians, and the dual variable is the solution of an exact one-dimensional nonlinear fixed point for each targeted tier. Under a quadratic value-function approximation, optimal quotes decompose into a riskless spread, inventory skew, credit-alpha skew, residual-toxicity charge, and quality-hit-ratio subsidy. Synthetic multi-bond simulations with nonlinear dual solves illustrate that raw hit-ratio targeting can subsidize residual-toxic flow, while residual-quality targeting reallocates service toward low-residual-toxicity flow and improves the attained service/economics frontier. A final reduced-form extension studies inventory-recycling value through risk-aware style-aligned client-flow warehousing. Sweep or portfolio-trade opportunities fill randomly, and participation is sized using the same quadratic value approximation as the RFQ quoting problem. A passive/index-demand experiment is reported in the appendix as a special case of forecastable client flow. The numerical evidence is synthetic and mechanism-oriented; no proprietary RFQ data are used.
\end{abstract}

\paragraph{Scope.}
The contribution is a tractable stochastic-control extension and a set of controlled synthetic simulations. The numerical experiments are not empirical performance claims: no proprietary RFQ, client, or markout data are used. Their purpose is to demonstrate the economic mechanisms of residual-quality targeting and to provide a reproducible research design for later calibration.

\section{Introduction}

Electronic corporate bond market making is not only an inventory-risk problem. Dealers also face client-service targets, relationship constraints, and platform-level performance metrics. A natural metric is hit ratio: the fraction of RFQs won, often size-weighted and segmented by client tier. Recent work by Barzykin and Ciceri \cite{barzykin_ciceri_2026} shows how to include a running hit-ratio target in an OTC bond market-making HJB while preserving tractability by dualizing the target penalty. Their framework is an attractive base because it produces exact quote maps and a scalable quadratic approximation.

This paper argues that, in credit, raw hit ratio is not the correct service objective. A dealer should not want to win all flow equally. Some RFQs are low residual-adverse-selection, help recycle inventory, or come from clients trading on transparent carry, rolldown, index, or issuer-relative-value considerations. Other RFQs have persistently adverse residual markouts even after controlling for observable credit factors. A uniform raw hit-ratio target can force the dealer to subsidize the latter flow.

The central proposal is to replace raw hit ratio with a \emph{quality-adjusted hit ratio}. Each RFQ bucket receives a quality weight based on residual toxicity, where residual toxicity is defined only after removing observable credit alpha and factor effects from signed markouts. A client buying high-carry or cheap issuer-curve bonds should not be labelled toxic simply because the subsequent markout is adverse to the dealer; that effect should be attributed to the public or structural credit signal that the dealer failed to incorporate. The toxicity term should capture the remaining, client-flow-specific adverse-selection component.

The contribution is fourfold.
\begin{enumerate}[leftmargin=*]
    \item We define residual client-flow toxicity through factor- and alpha-adjusted post-trade markouts.
    \item We introduce a residual-quality-adjusted hit ratio and show that the dualized HJB remains separable.
    \item We derive a quote decomposition under a quadratic/Riccati approximation: riskless spread, inventory skew, credit-alpha skew, residual-toxicity charge, and quality-hit-ratio subsidy.
    \item We use synthetic multi-bond credit simulations to demonstrate the service/economics frontier: conditional on delivering a service target, residual-quality targeting can dominate raw hit-ratio targeting.
\end{enumerate}

A reduced-form extension studies inventory-recycling value through style-aware client-flow warehousing: the dealer can selectively participate in client inventory opportunities when the offered risk is aligned with the desk's credit-style alpha, but participation must be sized for random fills and inventory curvature. A passive/index-flow experiment is retained in the appendix as a special case of forecastable low-residual-toxicity flow.

\subsection{Scope and main claim}

The main claim is deliberately narrow. Hit ratio is not intrinsically a bad objective; a uniform raw hit-ratio target is bad when the economic quality of fills is heterogeneous. The paper therefore studies a conditional problem: given that a dealer wishes to deliver a client-service target, what target should be used? The answer proposed here is a residual-quality-adjusted hit ratio rather than a raw hit ratio.

The synthetic experiments are designed to test the sign and mechanism of the result. They are not intended to estimate the profitability of any particular desk. An empirical implementation would require calibration of arrival intensities, fill curves, credit factor exposures, alpha signals, and factor-adjusted markouts from RFQ data.

\subsection{Relation to the literature}

The model sits in the stochastic-control market-making tradition of Avellaneda and Stoikov \cite{avellaneda_stoikov_2008} and Gueant, Lehalle, and Fernandez-Tapia \cite{gueant_lehalle_tapia_2013}. The closest reference is Barzykin and Ciceri \cite{barzykin_ciceri_2026}, who introduce a running hit-ratio target in an OTC bond RFQ market-making model and show that dualizing the target preserves separable Hamiltonians and a scalable Riccati approximation. The present paper keeps that tractability but changes the service objective from raw hit ratio to residual-quality-adjusted hit ratio.

\section{Baseline RFQ market-making model}

Consider $d$ corporate bonds indexed by $m\in\{1,\ldots,d\}$. Let $Q_t\in\R^d$ denote dealer inventory in notional units and let $S_t\in\R^d$ denote mid-prices in basis-point price units. The dealer receives RFQs indexed by
\[
    i=(m,\tau,s,k),
\]
where $m$ is the bond, $\tau$ is the client or flow segment, $s\in\{b,a\}$ is the side, and $k$ is the size bucket with notional $z_k>0$. A bid RFQ means the client sells and the dealer buys; an ask RFQ means the client buys and the dealer sells.

For RFQ bucket $i$, the baseline arrival intensity is $\lambda_i$ and the quote-dependent fill probability is $f_i(\delta_i)$, where $\delta_i$ is the dealer's quote offset from mid. The executed fill intensity is
\begin{equation}
    \Lambda_i(\delta_i)=\lambda_i f_i(\delta_i).
\end{equation}
Throughout, $f_i$ is decreasing in $\delta_i$; wider quotes win less flow.

The mid-price dynamics are
\begin{equation}
    \dd S_t = \mu_t\,\dd t + \sigma\,\dd W_t,
    \qquad \Sigma=\sigma\sigma^\top.
\end{equation}
For a credit book, a more useful specification is the factor model
\begin{equation}
    \dd S_t = \mu_t\,\dd t + B\,\dd F_t + \dd \epsilon_t,
    \qquad
    \Sigma = B\Omega B^\top+\Sigma_\epsilon,
    \label{eq:factor_cov}
\end{equation}
where $F_t$ contains systematic credit and rates factors and $\epsilon_t$ is idiosyncratic spread/price noise.

\subsection{Cash and inventory jumps}

If a bid RFQ in bond $m$ and size $z_k$ is filled, the dealer buys at $S^m_t-\delta_i$ and inventory changes by $+z_k e_m$. If an ask RFQ is filled, the dealer sells at $S^m_t+\delta_i$ and inventory changes by $-z_k e_m$. Let
\begin{equation}
    \Delta_i Q =
    \begin{cases}
        +z_k e_m, & s=b,\\
        -z_k e_m, & s=a.
    \end{cases}
\end{equation}

We use a reduced value function of the form
\begin{equation}
    V(t,x,q,s)=x+q^\top s+u(t,q).
    \label{eq:value_ansatz}
\end{equation}
This linear cash/mark-to-market reduction is standard in risk-neutral inventory-control formulations with quadratic inventory penalties.

Define the per-notional inventory opportunity cost
\begin{equation}
    p_i[u](t,q)
    =\frac{u(t,q)-u(t,q+\Delta_i Q)}{z_k}.
    \label{eq:p_def}
\end{equation}
For a bid fill, this is $(u(q)-u(q+z_ke_m))/z_k$; for an ask fill, it is $(u(q)-u(q-z_ke_m))/z_k$.

\subsection{Control objective and terminal condition}

For completeness, the finite-horizon objective underlying the reduced HJB is
\begin{align}
\sup_{\delta}\,\E\Bigg[&X_T+Q_T^\top S_T-g(Q_T) \\
&-\int_0^T\left(\frac{\phi}{2}Q_t^\top\Sigma Q_t+
\sum_{\tau\in\mathcal A}\frac{\kappa_\tau W^Q_\tau}{2}
\left(r^Q_\tau(\delta_t)-r^{Q,\star}_\tau\right)^2\right)\,\dd t\Bigg].
\label{eq:full_objective}
\end{align}
Here $g$ is a terminal inventory or liquidation penalty. The reduced terminal condition is therefore
\begin{equation}
    u(T,q)=-g(q).
    \label{eq:terminal_condition}
\end{equation}
The simulations report terminal liquidation costs explicitly; the HJB derivation below is written for a generic terminal penalty $g$.

\section{Residual toxicity and credit alpha}

A raw adverse markout is not the same as client toxicity. Let trade $j$ occur in bucket $i(j)$ and let $M^{raw}_{j,h}$ be the dealer-signed post-trade markout over horizon $h$, positive when adverse to the dealer. We decompose
\begin{equation}
    M^{raw}_{j,h}
    =
    M^{factor}_{j,h}
    +M^{carry/rolldown}_{j,h}
    +M^{issuerRV}_{j,h}
    +M^{index/ETF}_{j,h}
    +\varepsilon_{j,h}.
    \label{eq:markout_decomp}
\end{equation}
Only the residual component $\varepsilon_{j,h}$ should be used to define client-flow toxicity. The residual-toxicity cost for bucket $i$ is
\begin{equation}
    \chi_i^{res}
    =
    \E\left[M^{res}_{j,h}\mid i(j)=i,\,\text{fill}\right],
    \qquad
    M^{res}_{j,h}=\varepsilon_{j,h}.
    \label{eq:resid_tox}
\end{equation}
A positive $\chi_i^{res}$ means the fill has expected residual adverse-selection cost after controlling for observable credit and index effects.

Credit alpha enters separately through $\mu_t$. In a credit setting, one may write
\begin{equation}
    \mu^m_t
    =
    \mu^{carry}_{m,t}
    +\mu^{rolldown}_{m,t}
    +\mu^{issuerRV}_{m,t}
    +\mu^{flow}_{m,t}
    +\mu^{index}_{m,t}.
    \label{eq:credit_alpha}
\end{equation}
This separation is important: a fund that buys bonds with strong carry/rolldown or positive issuer-RV convergence may generate adverse raw markouts for a dealer who underprices those signals, but that is not residual toxicity.

\section{Quality-adjusted hit ratio}

For each targeted client tier or flow segment $\tau$, define a quality weight
\begin{equation}
    w_i=\exp(-\gamma \chi_i^{res}),
    \qquad i\in\mathcal I_\tau,
    \label{eq:weight}
\end{equation}
where $\gamma\geq0$ controls the strength of the quality adjustment. Other monotone bounded weights can be used; the exponential form is convenient and positive.

The quality-adjusted instantaneous hit ratio is
\begin{equation}
    r^Q_\tau(\delta)
    =
    \frac{\sum_{i\in\mathcal I_\tau} z_i w_i\Lambda_i(\delta_i)}
    {W^Q_\tau},
    \qquad
    W^Q_\tau=
    \sum_{i\in\mathcal I_\tau} z_i w_i\lambda_i.
    \label{eq:quality_hr}
\end{equation}
This normalization keeps $r^Q_\tau$ on the same scale as an ordinary hit ratio.

\begin{assumption}[Fill curves]
For every RFQ bucket $i$, $f_i$ is continuously differentiable, strictly decreasing, and satisfies $0<f_i(\delta)<1$. The Hamiltonian $H_i(p)=\sup_\delta \Lambda_i(\delta)(\delta-p)$ is finite and twice differentiable on the relevant range of $p$.
\end{assumption}

\begin{assumption}[Residual quality weights]
The weights $w_i$ are non-negative and bounded. The exponential specification \eqref{eq:weight} is used in the simulations, but the separability result only requires that $w_i$ be fixed at the quote-decision time and independent of the quote offset $\delta_i$.
\end{assumption}

\section{Reduced HJB and exact dualization}

The dealer maximizes expected mark-to-market wealth, net of running inventory risk and hit-ratio target penalties. With residual toxicity charged per unit notional on filled RFQs, the reduced HJB is
\begin{align}
0=&\,\partial_t u(t,q)+q^\top\mu_t-\frac{\phi}{2}q^\top\Sigma q \notag\\
&+\sup_{\delta}
\Bigg\{
\sum_i z_i\Lambda_i(\delta_i)
\left(\delta_i-p_i[u](t,q)-\chi_i^{res}\right)
-
\sum_{\tau\in\mathcal A}
\frac{\kappa_\tau W^Q_\tau}{2}
\left(r^Q_\tau(\delta)-r^{Q,\star}_\tau\right)^2
\Bigg\}.
\label{eq:hjb_raw}
\end{align}
Here $\mathcal A$ is the set of targeted tiers and $\kappa_\tau>0$ is the target penalty strength.

The negative quadratic penalty admits the scalar dual representation
\begin{equation}
    -\frac{\kappa W}{2}(r-r^\star)^2
    =
    \inf_{\xi\in\R}
    W\left[\xi(r-r^\star)+\frac{\xi^2}{2\kappa}\right].
    \label{eq:dual_identity}
\end{equation}
The pointwise identity rewrites the statewise penalty as a saddle expression. In the dual formulation used below, quote controls are maximized for a fixed scalar multiplier and the resulting one-dimensional dual objective is minimized over that multiplier. Under the fill-curve regularity assumptions above, this scalar objective is strictly convex, and its minimizer satisfies the saddle first-order condition of the penalized statewise problem. For fixed dual variables $\xi_\tau$, the HJB separates across RFQ buckets.

\begin{proposition}[Separable exact-dual Hamiltonian]
Let
\begin{equation}
    H_i(p)=\sup_{\delta\in\R}\Lambda_i(\delta)(\delta-p).
    \label{eq:Hamiltonian}
\end{equation}
After dualizing each quality-hit-ratio penalty, the RFQ contribution for bucket $i\in\mathcal I_\tau$ is
\begin{equation}
    z_i H_i\left(p_i[u](t,q)+\chi_i^{res}-w_i\xi_\tau\right).
    \label{eq:shifted_hamiltonian}
\end{equation}
Thus residual toxicity and the quality-hit-ratio target enter through a scalar shift of the Hamiltonian argument.
\end{proposition}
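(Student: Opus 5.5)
The plan is to substitute the scalar dual identity \eqref{eq:dual_identity} into the statewise supremum in \eqref{eq:hjb_raw}, interchange the supremum over quotes with the infimum over multipliers, and expand $r^Q_\tau$ using the linear normalization \eqref{eq:quality_hr}. Fix $(t,q)$ and write $p_i=p_i[u](t,q)$. For each $\tau\in\mathcal A$, replace the penalty $-\frac{\kappa_\tau W^Q_\tau}{2}(r^Q_\tau(\delta)-r^{Q,\star}_\tau)^2$ by
\[
\inf_{\xi_\tau\in\R} W^Q_\tau\Bigl[\xi_\tau\bigl(r^Q_\tau(\delta)-r^{Q,\star}_\tau\bigr)+\frac{\xi_\tau^2}{2\kappa_\tau}\Bigr].
\]
This step is exact pointwise: the infimum is attained at $\xi_\tau=-\kappa_\tau(r^Q_\tau-r^{Q,\star}_\tau)$. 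The statewise problem therefore becomes $\sup_\delta\inf_\xi L(\delta,\xi)$ with
\[
L(\delta,\xi)=\sum_i z_i\Lambda_i(\delta_i)\bigl(\delta_i-p_i-\chi_i^{res}\bigr)+\sum_{\tau\in\mathcal A}W^Q_\tau\Bigl[\xi_\tau\bigl(r^Q_\tau(\delta)-r^{Q,\star}_\tau\bigr)+\frac{\xi_\tau^2}{2\kappa_\tau}\Bigr].
\]

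Next I would use the definition of $r^Q_\tau$, which gives $W^Q_\tau\,\xi_\tau\, r^Q_\tau(\delta)=\xi_\tau\sum_{i\in\mathcal I_\tau} z_i w_i\Lambda_i(\delta_i)$. This expansion is linear in the fill intensities, and it uses the assumption that $w_i$ does not depend on $\delta_i$. Collecting terms bucket by bucket, the sign convention of \eqref{eq:dual_identity} makes the multiplier enter with a minus sign:
\[
L(\delta,\xi)=\sum_{\tau}\sum_{i\in\mathcal I_\tau} z_i\Lambda_i(\delta_i)\bigl(\delta_i-(p_i+\chi_i^{res}-w_i\xi_\tau)\bigr)+\sum_\tau W^Q_\tau\Bigl[\frac{\xi_\tau^2}{2\kappa_\tau}-\xi_\tau r^{Q,\star}_\tau\Bigr].
\]
Buckets outside targeted tiers carry no shift. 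For fixed $\xi$, the objective is a sum of terms each depending on a single $\delta_i$, so the supremum separates. By the definition \eqref{eq:Hamiltonian}, it equals $\sum_i z_iH_i(p_i+\chi_i^{res}-w_i\xi_\tau)$ plus a $\delta$-free function of $\xi$. This is the claimed shifted Hamiltonian \eqref{eq:shifted_hamiltonian}.

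The main obstacle is justifying the exchange $\sup_\delta\inf_\xi=\inf_\xi\sup_\delta$, because that is what licenses "fix $\xi$, then maximize." Weak duality ($\le$) is automatic. For equality I would argue through the scalar dual function
\[
D(\xi)=\sum_i z_iH_i(p_i+\chi_i^{res}-w_i\xi_\tau)+\sum_\tau W^Q_\tau\Bigl(\frac{\xi_\tau^2}{2\kappa_\tau}-\xi_\tau r^{Q,\star}_\tau\Bigr).
\]
It is strictly convex and coercive: $H_i$ is convex as a supremum of affine functions of $p$, and the quadratic term is strictly convex. Hence $D$ has a unique minimizer $\xi^\star$. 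By the envelope theorem, $H_i'(p)=-\Lambda_i(\delta_i^\star(p))$ with $H_i$ twice differentiable by assumption. The first-order condition then reads $\xi^\star_\tau=-\kappa_\tau\bigl(r^Q_\tau(\delta^\star)-r^{Q,\star}_\tau\bigr)$, where $\delta^\star$ are the maximizing quotes at $\xi^\star$. This is exactly the attaining multiplier for $\delta^\star$ in the pointwise identity, so $(\delta^\star,\xi^\star)$ is a saddle point and no duality gap arises.

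If the fill curves do not make $\delta\mapsto\Lambda_i(\delta)$ concave in fill space, I would instead reparametrize quotes by fill intensity. In that variable the penalty is concave in the fills, which gives the saddle property directly. In either case, the HJB contribution of bucket $i$ is $z_iH_i(p_i[u]+\chi_i^{res}-w_i\xi_\tau)$, with $\xi_\tau$ determined by the one-dimensional minimization of $D$.
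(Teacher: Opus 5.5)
Your proposal is correct and follows the paper's proof: fix $\xi_\tau$, use $W^Q_\tau\xi_\tau r^Q_\tau(\delta)=\sum_{i\in\mathcal I_\tau}z_iw_i\xi_\tau\Lambda_i(\delta_i)$ to absorb $w_i\xi_\tau$ into each bucket's per-fill margin, and take the supremum bucket by bucket. Your extra saddle-point step justifies the exchange $\sup_\delta\inf_\xi=\inf_\xi\sup_\delta$, which the paper's proof leaves unstated and only sketches in the surrounding text: the first-order condition of the strictly convex, coercive dual at its minimizer returns exactly the multiplier that attains the pointwise identity at $\delta^\star$. That step needs no concavity in $\delta$, so the fill-space reparametrization fallback is unnecessary, and as stated it would also need concavity of the revenue term, not just of the penalty.
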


\begin{proof}
For fixed $\xi_\tau$, the dual term contributes
\[
    W^Q_\tau\xi_\tau r^Q_\tau(\delta)
    =
    \sum_{i\in\mathcal I_\tau} z_i w_i\xi_\tau\Lambda_i(\delta_i).
\]
Combining this with the per-fill reward gives
\[
    z_i\Lambda_i(\delta_i)
    \left(\delta_i-p_i[u]-\chi_i^{res}+w_i\xi_\tau\right)
    =
    z_i\Lambda_i(\delta_i)
    \left(\delta_i-\big(p_i[u]+\chi_i^{res}-w_i\xi_\tau\big)\right).
\]
Taking the supremum over each $\delta_i$ gives \eqref{eq:shifted_hamiltonian}.
\end{proof}

For a targeted tier $\tau$, write $p_i^0=p_i[u]+\chi_i^{res}$. The statewise dual Hamiltonian for that tier is
\begin{equation}
\mathcal H_\tau^Q(u,q)
=
\inf_{\xi_\tau\in\R}\mathcal J_\tau(\xi_\tau),
\label{eq:dual_hamiltonian}
\end{equation}
where
\begin{equation}
\mathcal J_\tau(\xi_\tau)
=
\sum_{i\in\mathcal I_\tau}z_iH_i(p_i^0-w_i\xi_\tau)
+W^Q_\tau\left[-\xi_\tau r^{Q,\star}_\tau+\frac{\xi_\tau^2}{2\kappa_\tau}\right].
\label{eq:dual_objective}
\end{equation}
The term $-W^Q_\tau\xi_\tau r^{Q,\star}_\tau+W^Q_\tau\xi_\tau^2/(2\kappa_\tau)$ is the non-RFQ part of the dual penalty, while the quote-dependent part is embedded in the shifted Hamiltonians. Differentiating \eqref{eq:dual_objective} gives the scalar first-order condition below.

The exact dual variable solves a one-dimensional nonlinear equation. Since $H_i'(p)=-\Lambda_i(\delta_i^\star(p))$, the exact fixed point is
\begin{equation}
    \boxed{
    \xi_\tau
    =
    \kappa_\tau
    \left[
    r^{Q,\star}_\tau
    +
    \frac{1}{W^Q_\tau}
    \sum_{i\in\mathcal I_\tau}
    z_i w_i
    H_i'\left(p_i^0-w_i\xi_\tau\right)
    \right],
    }
    \label{eq:exact_xi}
\end{equation}
where $p_i^0=p_i[u]+\chi_i^{res}$. Equivalently,
\begin{equation}
    \xi_\tau=\kappa_\tau\left(r^{Q,\star}_\tau-r^Q_\tau(\xi_\tau)\right).
    \label{eq:xi_target_gap}
\end{equation}
Because $r^Q_\tau(\xi_\tau)$ is increasing in $\xi_\tau$ under decreasing fill curves, the equation has a unique solution under mild regularity. Numerically it is solved by a one-dimensional nonlinear root solver. No linearization is required for the main algorithm.

\begin{proposition}[Existence and uniqueness of the exact dual variable]
Fix $(t,q)$ and a targeted tier $\tau$. Suppose $H_i$ is twice differentiable and convex, with $H_i''\geq 0$, and suppose at least one bucket in $\mathcal I_\tau$ has $w_i>0$. Define
\begin{equation}
    F_\tau(\xi)=
    \xi-
    \kappa_\tau
    \left[
    r^{Q,\star}_\tau+
    \frac{1}{W^Q_\tau}
    \sum_{i\in\mathcal I_\tau}z_iw_iH_i'(p_i^0-w_i\xi)
    \right].
\end{equation}
Then
\begin{equation}
    F_\tau'(\xi)=
    1+
    \frac{\kappa_\tau}{W^Q_\tau}
    \sum_{i\in\mathcal I_\tau}z_iw_i^2H_i''(p_i^0-w_i\xi)>0.
\end{equation}
Hence $F_\tau$ is strictly increasing and the exact dual equation has at most one root. Under the usual logistic or exponential fill specifications and any target $r^{Q,\star}_\tau\in[0,1]$, $F_\tau(\xi)\to-\infty$ as $\xi\to-\infty$ and $F_\tau(\xi)\to+\infty$ as $\xi\to+\infty$. Therefore a unique root exists and can be found robustly by a scalar root solver.
\end{proposition}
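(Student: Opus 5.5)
The proof has three parts: a derivative computation, a sign argument for strict monotonicity, and a limit analysis at $\pm\infty$ that rests on boundedness of the fill probabilities.

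\emph{Derivative.} Differentiate $F_\tau$ term by term. By the chain rule, $\frac{\dd}{\dd\xi}H_i'(p_i^0-w_i\xi)=-w_iH_i''(p_i^0-w_i\xi)$. Multiplying by the prefactor $-\kappa_\tau z_iw_i/W^Q_\tau$ gives the stated expression for $F_\tau'(\xi)$. Differentiating under the finite sum needs nothing beyond the assumed twice-differentiability of each $H_i$.

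\emph{Strict monotonicity.} Each summand $z_iw_i^2H_i''(\cdot)$ is non-negative, since $z_i>0$, $w_i^2\ge 0$ and $H_i''\ge 0$ by convexity. We also have $\kappa_\tau>0$, and $W^Q_\tau>0$ because some $w_i>0$ and $z_i,\lambda_i>0$. Hence $F_\tau'(\xi)\ge 1>0$. This gives at most one root. It also gives a slightly stronger fact that I will reuse: $F_\tau(\xi)-F_\tau(0)\ge \xi$ for $\xi>0$, and $\le \xi$ for $\xi<0$. That bound alone already forces $F_\tau(\xi)\to\pm\infty$ as $\xi\to\pm\infty$.

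\emph{Limits.} To make the limit claim independent of the specific fill family, I would show that the bracket term in $F_\tau$ is bounded. By the envelope identity quoted before the statement, $H_i'(p)=-\Lambda_i(\delta_i^\star(p))=-\lambda_if_i(\delta_i^\star(p))$. The Fill-curves assumption gives $0<f_i<1$, so $-\lambda_i<H_i'\le 0$. Therefore
\[
0\le -\frac{1}{W^Q_\tau}\sum_{i\in\mathcal I_\tau}z_iw_iH_i'(p_i^0-w_i\xi)\le \frac{1}{W^Q_\tau}\sum_{i\in\mathcal I_\tau}z_iw_i\lambda_i=1,
\]
which is exactly the statement that $r^Q_\tau(\xi)\in[0,1]$. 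It follows that $\xi-\kappa_\tau r^{Q,\star}_\tau\le F_\tau(\xi)\le \xi-\kappa_\tau(r^{Q,\star}_\tau-1)$. Both bounds are affine in $\xi$ with slope one, so $F_\tau(\xi)\to\pm\infty$ as $\xi\to\pm\infty$ for any finite target; the restriction $r^{Q,\star}_\tau\in[0,1]$ is not even needed for this. Since $F_\tau$ is continuous (it is $C^1$), the intermediate value theorem gives a root, and strict monotonicity makes it unique. Robustness of a bracketing root solver follows directly from the explicit bracket $[\kappa_\tau(r^{Q,\star}_\tau-1),\,\kappa_\tau r^{Q,\star}_\tau]$, at whose endpoints $F_\tau$ is $\le 0$ and $\ge 0$ respectively.

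\emph{Main obstacle.} The delicate step is justifying the two regularity facts used above: the envelope identity $H_i'=-\Lambda_i\circ\delta_i^\star$, and that $H_i$ is finite and differentiable on all of $\R$ rather than merely on ``the relevant range'' of the assumption. The argument $p_i^0-w_i\xi$ sweeps over all reals as $\xi$ varies. For logistic and exponential fills, $H_i$ has closed form or an attained interior maximizer for every $p$, so I would verify this directly for those families. This is precisely why the statement restricts the limit claim to the usual specifications. Convexity of $H_i$ holds in general, because $H_i$ is a supremum of functions affine in $p$.
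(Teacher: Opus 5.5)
Your proof is correct, and the uniqueness half is the paper's own argument. The paper's justification, in the statement and in the appendix on the exact scalar dual solve, is the same chain-rule computation of $F_\tau'$ and the same sign argument. It then only says to bracket the root and use bisection or Brent's method.

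You genuinely differ on existence. The paper states the limits $F_\tau(\xi)\to\pm\infty$ only for logistic or exponential fills with $r^{Q,\star}_\tau\in[0,1]$, without deriving them or naming a bracket. You observe that $F_\tau'\ge 1$ forces both limits for any globally twice-differentiable convex $H_i$ and any finite target, so neither the fill family nor the target restriction is needed. You then use the identity $F_\tau(\xi)=\xi-\kappa_\tau r^{Q,\star}_\tau+\kappa_\tau r^Q_\tau(\xi)$ with $0\le r^Q_\tau\le 1$ to get the explicit bracket $[\kappa_\tau(r^{Q,\star}_\tau-1),\,\kappa_\tau r^{Q,\star}_\tau]$. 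This makes the appendix's bracketing step concrete and bounds the dual variable a priori.

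Your ``main obstacle'' is also milder than you suggest. The bound $-\lambda_i\le H_i'\le 0$ needs neither the envelope identity nor an attained maximizer. It holds because $H_i$ is a supremum of affine functions of $p$ with slopes $-\Lambda_i(\delta)\in(-\lambda_i,0)$, hence nonincreasing and $\lambda_i$-Lipschitz. Likewise, $F_\tau(\xi_2)-F_\tau(\xi_1)\ge\xi_2-\xi_1$ for $\xi_2>\xi_1$ follows from monotonicity of $H_i'$ (convexity) alone, so $H_i''$ is not really required.

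One caveat: unlike existence, your explicit bracket uses $f_i<1$. An uncapped exponential intensity $\lambda_i e^{-\beta_i\delta}$ violates this, and the left endpoint can then fail. For that family, $[\kappa_\tau r^{Q,\star}_\tau-F_\tau(\kappa_\tau r^{Q,\star}_\tau),\,\kappa_\tau r^{Q,\star}_\tau]$ still brackets the root, by $F_\tau'\ge 1$ and $r^Q_\tau\ge 0$.
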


\section{Closed-form quote map for logistic fill curves}

Assume
\begin{equation}
    f_i(\delta)=\frac{1}{1+\exp(\alpha_i+\beta_i\delta)},
    \qquad \beta_i>0.
    \label{eq:logistic_fill}
\end{equation}
Let $x_i(p)$ be
\begin{equation}
    x_i(p)=W_0\left(\exp(-\alpha_i-\beta_i p-1)\right),
    \label{eq:lambert_x}
\end{equation}
where $W_0$ is the principal branch of Lambert's $W$ function. The optimal quote is
\begin{equation}
    \boxed{
    \delta_i^\star(p)=p+\frac{1+x_i(p)}{\beta_i}.
    }
    \label{eq:delta_star}
\end{equation}
Moreover,
\begin{equation}
    H_i'(p)=-\lambda_i\frac{x_i(p)}{1+x_i(p)},
    \qquad
    H_i''(p)=\lambda_i\beta_i\frac{x_i(p)}{(1+x_i(p))^3}.
    \label{eq:H_derivatives}
\end{equation}

\section{Quadratic approximation and quote decomposition}

For multi-bond portfolios, the exact inventory-grid HJB is not scalable. We use the quadratic approximation
\begin{equation}
    u(t,q)\approx -\frac12 q^\top A(t)q+\ell(t)^\top q-C(t),
    \label{eq:quadratic_u}
\end{equation}
with symmetric positive semidefinite $A(t)$.

For a bid in bond $m$ and size $z_k$,
\begin{equation}
    p^{b,k}_m(t,q)
    =e_m^\top A(t)\left(q+\frac{z_k}{2}e_m\right)-\ell_m(t).
    \label{eq:p_bid_quad}
\end{equation}
For an ask,
\begin{equation}
    p^{a,k}_m(t,q)
    =e_m^\top A(t)\left(-q+\frac{z_k}{2}e_m\right)+\ell_m(t).
    \label{eq:p_ask_quad}
\end{equation}
The full quote argument is therefore
\begin{equation}
    \bar p_i(t,q)
    =p_i(t,q)+\chi_i^{res}-w_i\xi_\tau.
    \label{eq:full_argument}
\end{equation}
The dealer quotes $\delta_i^\star(\bar p_i)$ using \eqref{eq:delta_star} or the corresponding Hamiltonian maximizer.

Under the symmetric local quadratic expansion used in the BEGV/GLFT tradition, the inventory curvature satisfies
\begin{equation}
    A'(t)=A(t)D A(t)-\phi\Sigma,
    \label{eq:riccati}
\end{equation}
where $D$ is the diagonal opportunity matrix with entries of the form
\begin{equation}
    D_{mm}=\sum_{i:m(i)=m} z_i H_i''(\bar p_i^{ref}).
    \label{eq:D_def}
\end{equation}
The reference point $\bar p_i^{ref}$ may include residual toxicity and the current target-dual operating point. In stationary form, $A D A=\phi\Sigma$ and
\begin{equation}
    \boxed{
    A=
    \sqrt{\phi}\,
    D^{-1/2}
    \left(D^{1/2}\Sigma D^{1/2}\right)^{1/2}
    D^{-1/2}.
    }
    \label{eq:A_stationary}
\end{equation}
The linear credit-alpha term is
\begin{equation}
    \boxed{\ell=(AD)^{-1}\mu.}
    \label{eq:ell_stationary}
\end{equation}

Combining these pieces gives the quote decomposition
\begin{equation}
\boxed{
\begin{aligned}
\text{quote offset}
&=\text{riskless spread}+\text{inventory skew}-\text{credit-alpha skew}\\
&\quad +\text{residual-toxicity charge}-\text{quality-hit-ratio subsidy}.
\end{aligned}}
\label{eq:quote_decomposition}
\end{equation}
This is the main practitioner-facing formula.

\begin{remark}[Sign conventions]
A positive residual-toxicity cost $\chi_i^{res}$ widens the quote by increasing the Hamiltonian argument. A positive quality-dual variable $\xi_\tau$ tightens quotes for buckets with large $w_i$, because the argument is shifted by $-w_i\xi_\tau$. In a one-bond or diagonal-risk setting, positive credit alpha produces $\ell_m>0$ in the stationary approximation; this lowers the bid argument and raises the ask argument, so the dealer bids more aggressively and is less eager to sell a positive-alpha bond. In a correlated multi-bond book, $\ell=(AD)^{-1}\mu$ is a risk-adjusted alpha tilt and component signs can also reflect factor hedging and cross-bond correlations.
\end{remark}

\section{Numerical algorithm}

The numerical algorithm is:
\begin{enumerate}[leftmargin=*]
    \item Estimate or specify $\lambda_i$, $f_i$, $\chi_i^{res}$, $w_i$, $\mu$, and $\Sigma$.
    \item Compute the opportunity matrix $D$ and stationary $A$ using \eqref{eq:A_stationary}; compute $\ell$ from \eqref{eq:ell_stationary}.
    \item At each quote state $q$, compute $p_i^0=p_i(q)+\chi_i^{res}$ for all relevant RFQ buckets.
    \item For each targeted tier $\tau$, solve the exact scalar nonlinear equation \eqref{eq:exact_xi}.
    \item Quote $\delta_i^\star(p_i^0-w_i\xi_\tau)$.
    \item In simulation, draw RFQ arrivals, fills, mid-price moves, and update cash and inventory.
\end{enumerate}

The only nonlinear solve in the multi-bond simulation is the scalar dual equation \eqref{eq:exact_xi}. This is a significant simplification relative to a full multi-dimensional inventory-grid HJB.

\section{Synthetic numerical experiments}

\subsection{Design}

The numerical experiments are synthetic. They are intended to test mechanisms, not to claim empirical performance. A synthetic EUR IG universe is generated with $d=20$ bonds, sector/rating/maturity attributes, and a factor covariance model of the form \eqref{eq:factor_cov}. The core flow universe contains two main client-flow types:
\begin{enumerate}[leftmargin=*]
    \item \textbf{Carry/RV flow}: low residual toxicity after controlling for carry, rolldown, and issuer-RV effects.
    \item \textbf{Residual-toxic flow}: high residual adverse markout after the same controls.
\end{enumerate}
The policies compared are raw hit-ratio targeting, residual-quality-adjusted targeting, and several benchmark policies. The headline comparison is between raw hit-ratio targeting and residual-quality-adjusted targeting with credit-alpha drift.

\paragraph{Nonlinear dual implementation.}
The simulations in this section use the nonlinear scalar dual equation \eqref{eq:exact_xi}. To keep the Monte Carlo tractable, the Hamiltonian quote and fill maps are precomputed on a dense grid and the scalar nonlinear dual equation is solved by Newton iterations initialized by the local solution. This is not the linearized dual closure used for preliminary sweeps. The Monte Carlo reported here uses 300 paths per target/policy for the main frontier and 100 paths per policy for the 8\% benchmark run over five trading days. Larger runs can be obtained by increasing the path count in the same simulation design.

\subsection{Attained service/economics frontier}

The target parameter $r^\star$ is a penalty target rather than a hard equality constraint. Therefore, the main frontier is plotted against the \emph{attained} residual-quality hit ratio, not merely the target label. Figure~\ref{fig:attained_frontier} reports mean net PnL per day against attained quality hit ratio, with 95\% confidence intervals for PnL. Table~\ref{tab:attained_frontier} reports the same frontier points.

\begin{table}[H]
\centering
\caption{Nonlinear-dual attained service/economics frontier. PnL is bp$\cdot$MM per day.}
\label{tab:attained_frontier}
\small
\begin{tabular}{lrrrr}
\toprule
Policy & Target & Attained quality HR & PnL/day & 95\% CI \\
\midrule
Raw target & 4.00\% & 4.14\% & 7.32 & [6.08, 8.57] \\
Raw target & 6.00\% & 5.64\% & -2.66 & [-4.11, -1.21] \\
Raw target & 8.00\% & 7.21\% & -15.88 & [-17.46, -14.31] \\
Raw target & 10.00\% & 8.87\% & -33.33 & [-34.88, -31.78] \\
Raw target & 12.00\% & 10.65\% & -55.36 & [-57.08, -53.64] \\
Residual QA + alpha & 4.00\% & 3.56\% & 14.98 & [13.89, 16.08] \\
Residual QA + alpha & 6.00\% & 4.89\% & 12.11 & [10.93, 13.28] \\
Residual QA + alpha & 8.00\% & 6.38\% & 5.74 & [4.42, 7.07] \\
Residual QA + alpha & 10.00\% & 7.96\% & -2.58 & [-3.85, -1.31] \\
Residual QA + alpha & 12.00\% & 9.64\% & -14.22 & [-15.55, -12.88] \\
\bottomrule
\end{tabular}
\end{table}

\begin{figure}[H]
\centering
\includegraphics[width=0.78\textwidth]{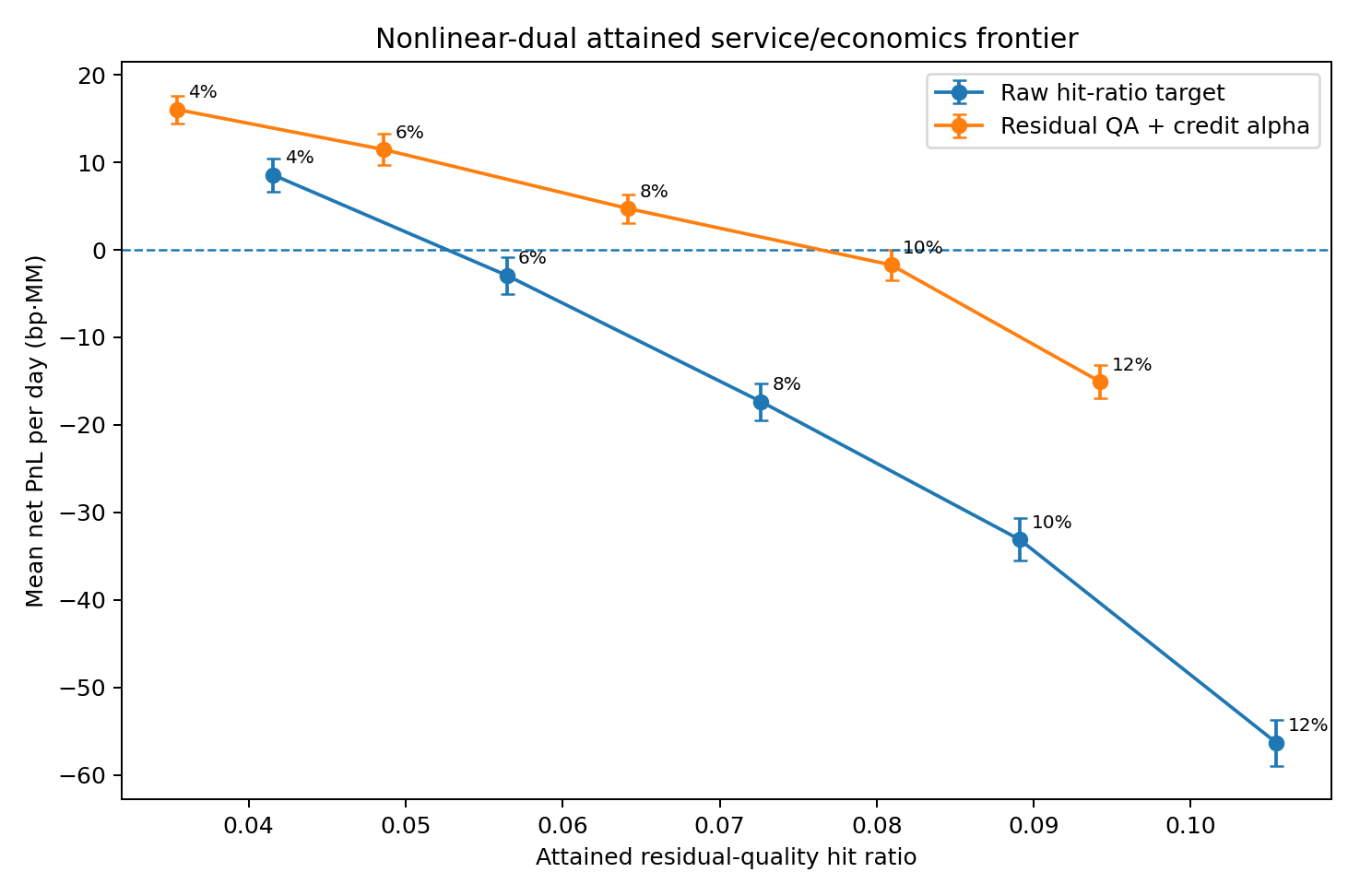}
\caption{Nonlinear-dual attained service/economics frontier. The x-axis is realized residual-quality hit ratio, not the target parameter.}
\label{fig:attained_frontier}
\end{figure}

To compare the two policies at matched attained service levels, Table~\ref{tab:matched_quality} linearly interpolates the two frontiers on their overlapping quality-hit-ratio range. Residual-quality targeting with credit alpha dominates raw targeting throughout the overlap in this synthetic run.

\begin{table}[H]
\centering
\caption{Matched attained-quality comparison. PnL is bp$\cdot$MM per day.}
\label{tab:matched_quality}
\small
\begin{tabular}{rrrr}
\toprule
Quality HR & Raw PnL/day & Residual QA + alpha PnL/day & Saving/day \\
\midrule
4.14\% & 7.32 & 13.73 & 6.41 \\
4.83\% & 2.75 & 12.24 & 9.49 \\
5.51\% & -1.83 & 9.44 & 11.27 \\
6.20\% & -7.38 & 6.52 & 13.90 \\
6.89\% & -13.15 & 3.07 & 16.23 \\
7.58\% & -19.70 & -0.55 & 19.15 \\
8.26\% & -26.95 & -4.68 & 22.27 \\
8.95\% & -34.35 & -9.45 & 24.90 \\
9.64\% & -42.85 & -14.22 & 28.63 \\
\bottomrule
\end{tabular}
\end{table}

\begin{figure}[H]
\centering
\includegraphics[width=0.72\textwidth]{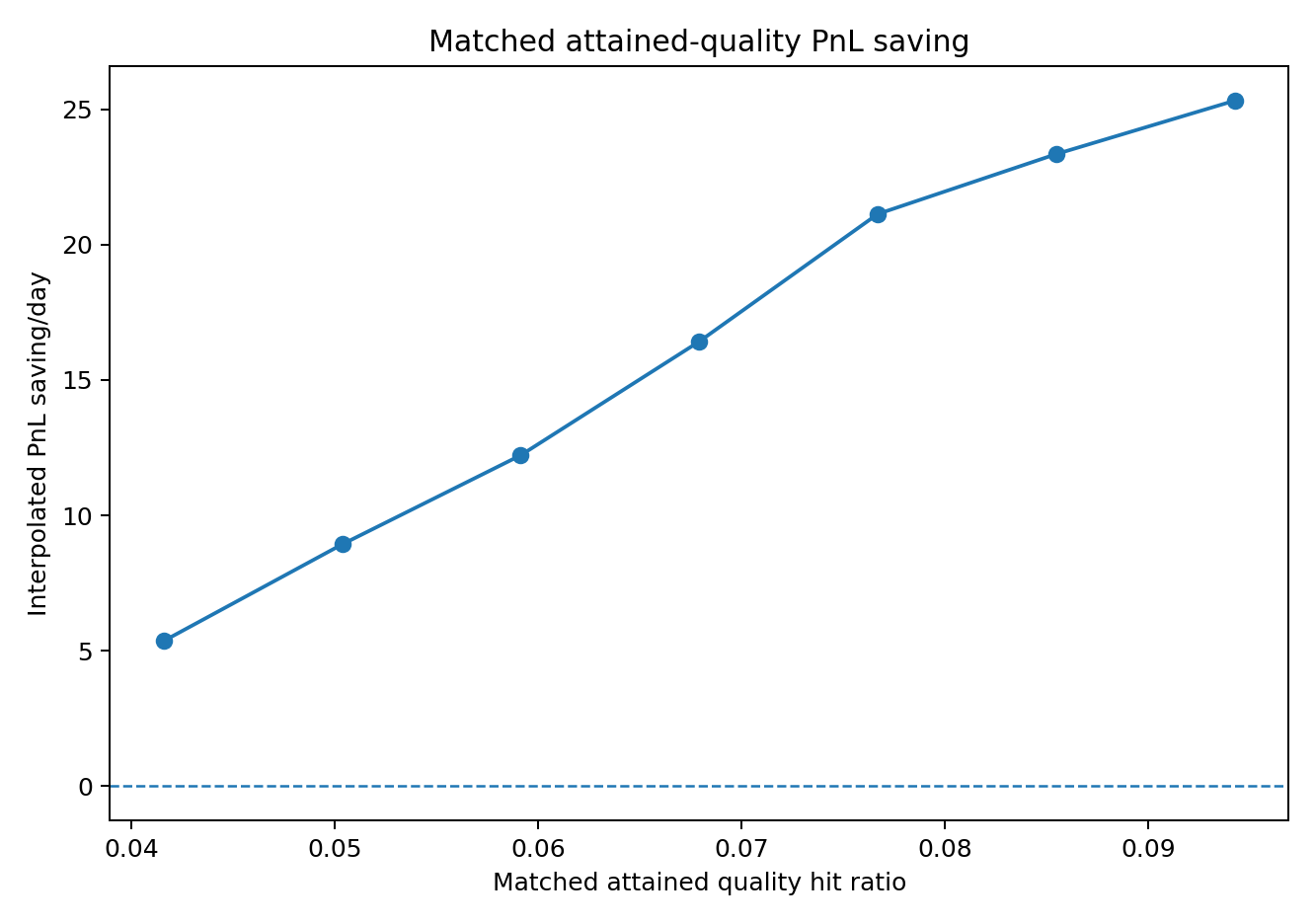}
\caption{Interpolated PnL saving of residual-quality targeting over raw targeting at matched attained quality-hit-ratio levels.}
\label{fig:matched_quality_saving}
\end{figure}

\subsection{Cost of raw hit ratio: PnL attribution}

Table~\ref{tab:pnl_attribution} decomposes the PnL difference between residual-quality targeting with credit alpha and raw hit-ratio targeting at the 8\% target. The key mechanism is that residual-quality targeting gives up some spread capture but saves much more in residual adverse-selection cost.

\begin{table}[H]
\centering
\caption{PnL attribution at the 8\% target: Residual QA + alpha minus Raw target. PnL is bp$\cdot$MM per day. For cost rows, the contribution is reported as cost saved.}
\label{tab:pnl_attribution}
\small
\begin{tabular}{lrrr}
\toprule
Component & Raw & Residual QA + alpha & Contribution \\
\midrule
Spread capture & 34.03 & 24.54 & -9.49 \\
Credit-alpha PnL & -2.30 & -1.17 & 1.14 \\
Factor PnL & 1.10 & -0.80 & -1.90 \\
Idiosyncratic PnL & 0.60 & 0.49 & -0.11 \\
Residual toxicity cost & 41.73 & 13.46 & 28.26 \\
Terminal liquidation cost & 6.02 & 4.35 & 1.67 \\
Net PnL saving & -14.32 & 5.25 & 19.57 \\
\bottomrule
\end{tabular}
\end{table}

\begin{figure}[H]
\centering
\includegraphics[width=0.78\textwidth]{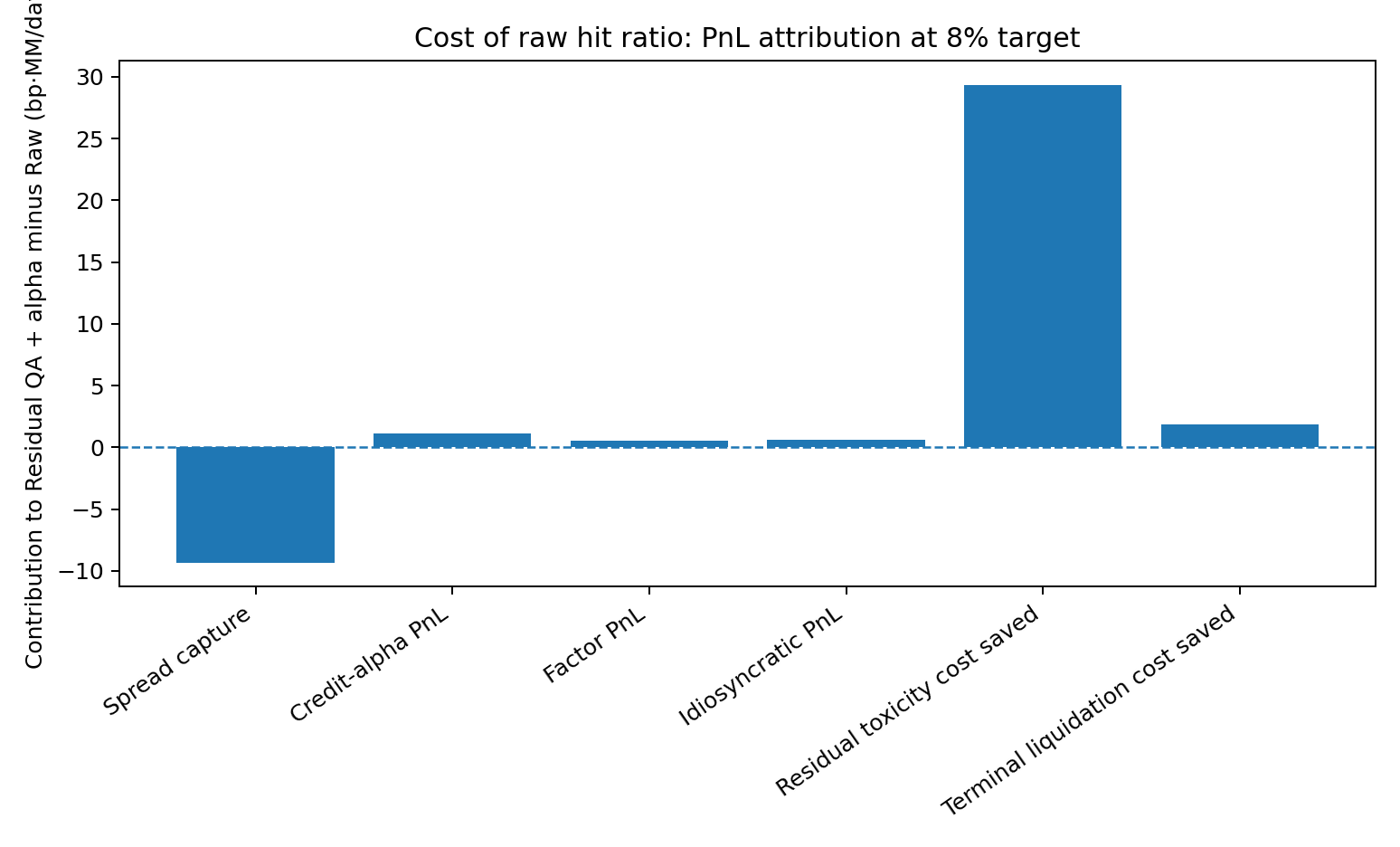}
\caption{Cost of raw hit ratio at the 8\% target. Residual-quality targeting gives up some spread capture but saves substantially more residual-toxicity cost.}
\label{fig:pnl_attribution}
\end{figure}

\subsection{Flow allocation and residual-cost diagnostics}

Table~\ref{tab:flow_allocation} reports how policies allocate service between carry/RV flow and residual-toxic flow at the 8\% target. The raw target wins a similar or higher hit ratio on residual-toxic flow than on carry/RV flow. Residual-quality targeting keeps service to low-residual-toxicity carry/RV flow while sharply reducing residual-toxic wins.

\begin{table}[H]
\centering
\caption{Flow allocation at the 8\% target. Residual cost is bp$\cdot$MM per day.}
\label{tab:flow_allocation}
\small
\begin{tabular}{lrrrr}
\toprule
Policy & Quality HR & Carry/RV HR & Residual-toxic HR & Residual cost/day \\
\midrule
Raw target & 7.37\% & 7.29\% & 7.47\% & 41.73 \\
Residual QA + alpha & 6.57\% & 7.05\% & 0.87\% & 13.46 \\
Residual-toxicity QA & 6.63\% & 7.12\% & 0.83\% & 13.37 \\
Risk only & 4.78\% & 4.75\% & 4.80\% & 27.13 \\
Naive gross toxicity & 1.06\% & 1.08\% & 0.61\% & 4.00 \\
\bottomrule
\end{tabular}
\end{table}

\begin{figure}[H]
\centering
\includegraphics[width=0.78\textwidth]{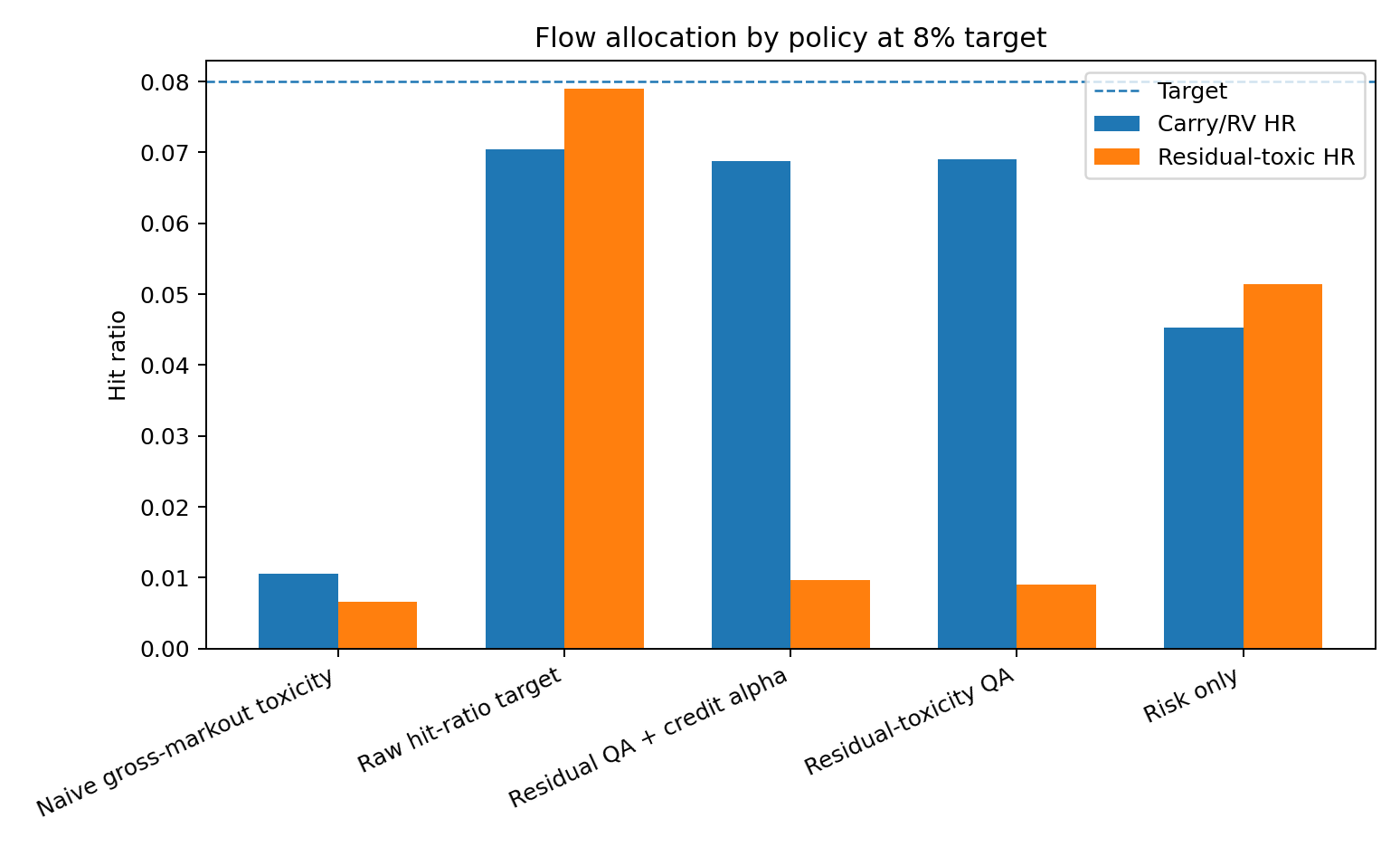}
\caption{Flow allocation by policy at the 8\% target. Residual-quality targeting reallocates wins away from residual-toxic flow.}
\label{fig:flow_allocation}
\end{figure}

\subsection{Distributional and inventory-risk analysis}

Mean PnL alone is not sufficient for a market-making problem. Table~\ref{tab:pnl_distribution} reports the PnL distribution at the 8\% target, and Table~\ref{tab:inventory_risk} reports terminal inventory-risk measures. The residual-quality policy improves PnL relative to raw targeting without increasing terminal inventory risk.

\begin{table}[H]
\centering
\caption{PnL distribution at the 8\% target. Units are bp$\cdot$MM per day.}
\label{tab:pnl_distribution}
\small
\begin{tabular}{lrrrrr}
\toprule
Policy & Mean & Std & 5\% & Median & 95\% \\
\midrule
Raw target & -14.32 & 13.21 & -35.14 & -16.85 & 7.92 \\
Residual QA + alpha & 5.25 & 10.78 & -12.50 & 5.17 & 23.52 \\
Residual-toxicity QA & 4.54 & 11.12 & -13.47 & 3.48 & 26.03 \\
Risk only & 5.04 & 12.38 & -13.81 & 3.32 & 28.59 \\
Naive gross toxicity & 8.22 & 8.09 & 1.13 & 6.44 & 21.31 \\
\bottomrule
\end{tabular}
\end{table}

\begin{table}[H]
\centering
\caption{Inventory-risk measures at the 8\% target.}
\label{tab:inventory_risk}
\small
\begin{tabular}{lrrrr}
\toprule
Policy & Mean PnL/day & PnL std/day & Terminal inv. risk & PnL / inv. risk \\
\midrule
Raw target & -14.32 & 13.21 & 25.73 & -0.56 \\
Residual QA + alpha & 5.25 & 10.78 & 20.08 & 0.26 \\
Residual-toxicity QA & 4.54 & 11.12 & 22.20 & 0.20 \\
Risk only & 5.04 & 12.38 & 22.72 & 0.22 \\
Naive gross toxicity & 8.22 & 8.09 & 9.02 & 0.91 \\
\bottomrule
\end{tabular}
\end{table}

\begin{figure}[H]
\centering
\includegraphics[width=0.72\textwidth]{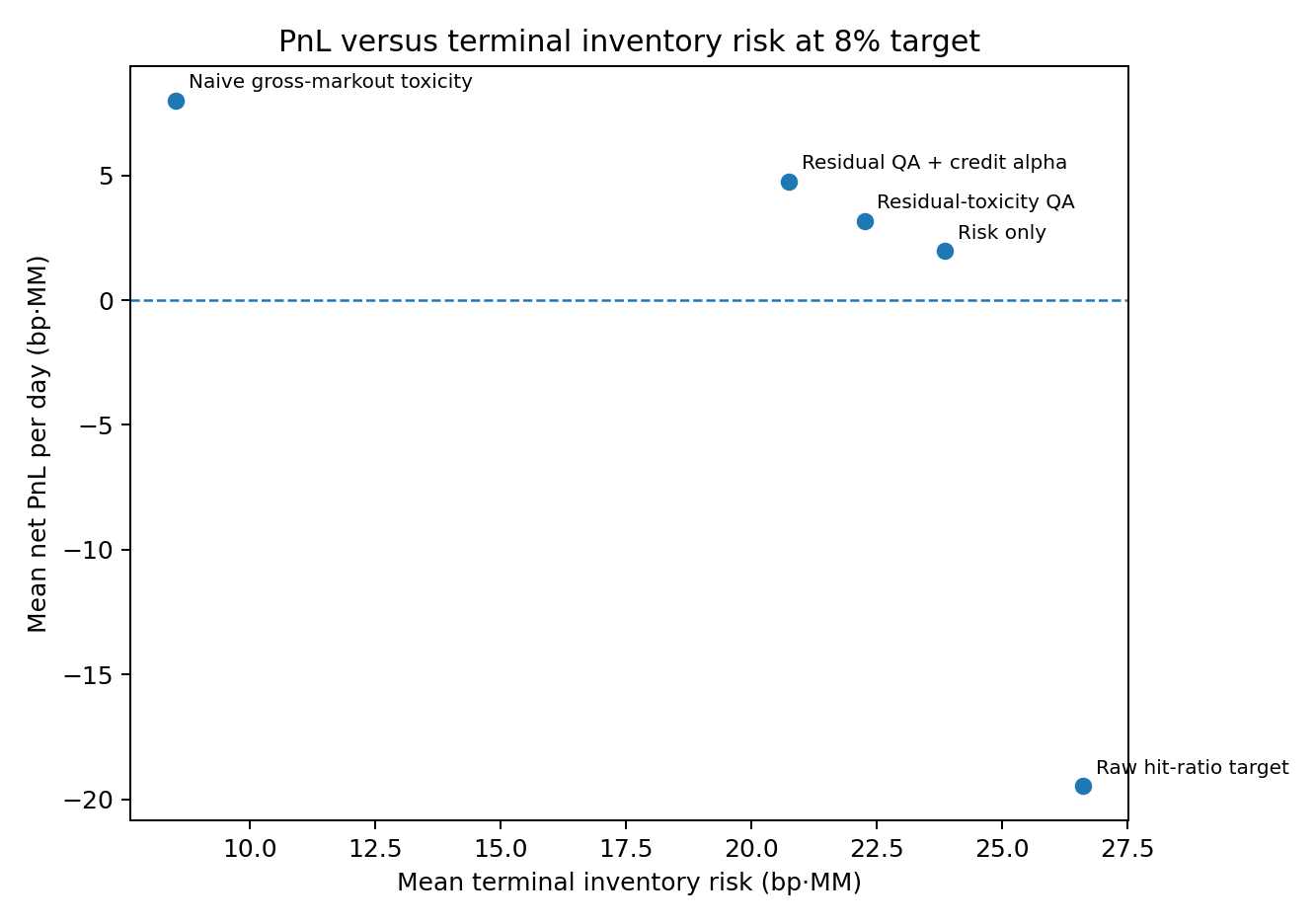}
\caption{Mean PnL versus terminal inventory risk at the 8\% target.}
\label{fig:inventory_risk}
\end{figure}

\subsection{Liquidation-cost sensitivity}

Terminal liquidation cost is a modelling assumption. To check that the PnL saving is not an artefact of a single liquidation parameter, Table~\ref{tab:liq_sensitivity} reports the PnL saving of residual-quality targeting over raw targeting at the 8\% target across several terminal liquidation cost levels. Figure~\ref{fig:liq_sensitivity} reports the same sensitivity across all target levels.

\begin{table}[H]
\centering
\caption{Liquidation-cost sensitivity at the 8\% target. Saving is Residual QA + alpha minus Raw target, in bp$\cdot$MM per day.}
\label{tab:liq_sensitivity}
\small
\begin{tabular}{rrr}
\toprule
Liquidation base bp & Saving/day & 95\% CI \\
\midrule
0.00 & 19.77 & [18.54, 21.00] \\
0.25 & 21.10 & [19.88, 22.32] \\
0.50 & 22.42 & [21.21, 23.64] \\
1.00 & 25.08 & [23.85, 26.30] \\
2.00 & 30.38 & [29.06, 31.70] \\
\bottomrule
\end{tabular}
\end{table}

\begin{figure}[H]
\centering
\includegraphics[width=0.78\textwidth]{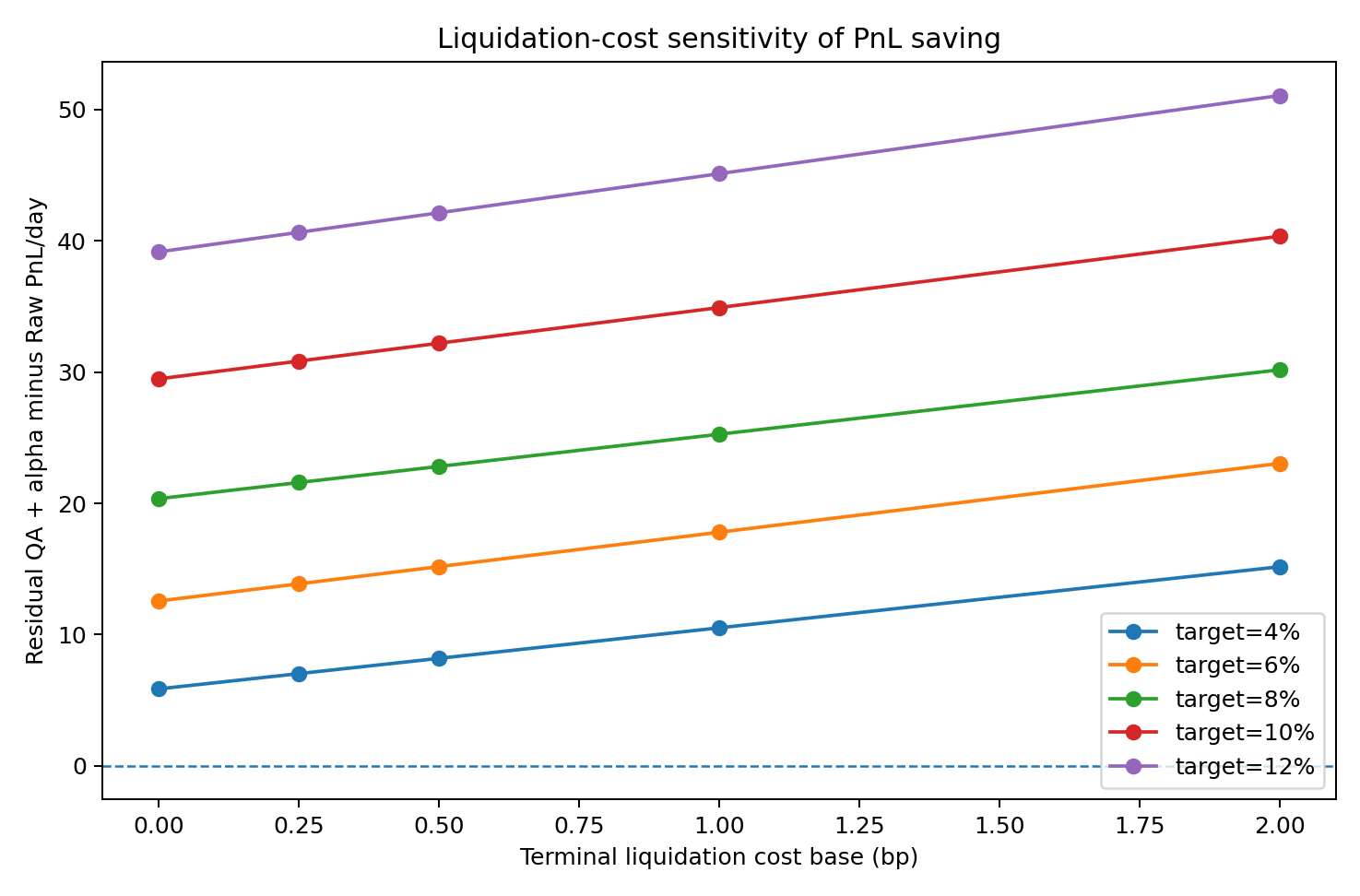}
\caption{Liquidation-cost sensitivity of the residual-quality PnL saving across target levels.}
\label{fig:liq_sensitivity}
\end{figure}
\section{Extension: style-aware client-flow warehousing}
\label{sec:style_warehousing}

Forecastable passive/index demand is one example of a broader mechanism: a client franchise can create value when it gives the dealer access to inventory aligned with the dealer's own credit-style alpha. The passive/index experiment is reported in Appendix~\ref{app:passive_details} as a special case. This section focuses on the more general mechanism, style-aware client-flow warehousing. It remains separate from the main RFQ hit-rate HJB: Sweep or portfolio-trade participation is a second execution problem with random fills. The purpose is to show how the same quadratic value approximation can size participation in a way that is consistent with inventory risk and alpha alignment.

Let $\alpha_m$ denote the dealer's credit-style alpha for bond $m$, combining carry, rolldown, issuer-relative-value, liquidity, and flow signals. Positive $\alpha_m$ means that the dealer prefers to be long bond $m$; negative $\alpha_m$ means that the dealer prefers to be short or underweight. Let $Y_m$ denote a signed client-inventory opportunity, with
\begin{equation}
    Y_m>0
    \quad\Longleftrightarrow\quad
    \text{the dealer buys bond }m\text{ from the client}.
\end{equation}
Flow is style-aligned when
\begin{equation}
    Y_m\alpha_m>0,
\end{equation}
meaning that the dealer buys a bond it wants to own, or sells a bond it prefers to be short. To avoid look-ahead bias, the dealer does not observe the true data-generating alpha. It trades on a forecast
\begin{equation}
    \widehat\alpha_m
    =
    \rho_\alpha \alpha_m
    +
    \sqrt{1-\rho_\alpha^2}\,\varepsilon_m,
\end{equation}
where $\rho_\alpha$ measures style-alpha forecast skill.

Conditional on a participation decision $\pi_m\in[0,1]$, the realised fill is random:
\begin{equation}
    \Delta q_m^{\mathrm{sweep}}
    =
    \varphi_m\pi_mY_m,
    \qquad 0\leq \varphi_m\leq 1.
    \label{eq:risk_sweep_random_fill}
\end{equation}
The random fraction $\varphi_m$ captures line-allocation uncertainty, partial fills, and the fact that the dealer does not necessarily receive the exact desired inventory.

Using the quadratic approximation
\begin{equation}
    u(q)\approx -\frac12 q^\top A q+\ell^\top q,
\end{equation}
the expected incremental value of a one-bond Sweep fill is approximated by
\begin{equation}
\begin{split}
    \E[\Delta V_m]
    \approx&\; \bar\varphi\,\pi_m
    \left[
        Y_m(\ell_m-(Aq)_m)+|Y_m|(b_m-c_m)
    \right] \\
    &-\frac12\overline{\varphi^2}\,\pi_m^2Y_m^2A_{mm},
\end{split}
\end{equation}
where $b_m$ is the per-notional Sweep spread benefit, $c_m$ collects acquisition and residual costs, $\bar\varphi=\E[\varphi_m]$, and $\overline{\varphi^2}=\E[\varphi_m^2]$. This yields the diagonal risk-aware participation rule
\begin{equation}
    \pi_m^\star
    =
    \left[
    \frac{
    \bar\varphi\left(Y_m(\ell_m-(Aq)_m)+|Y_m|(b_m-c_m)\right)
    }{
    \overline{\varphi^2}Y_m^2A_{mm}
    }
    \right]_0^{\pi_{\max}},
    \label{eq:risk_aware_sweep_rule}
\end{equation}
where $[\cdot]_0^{\pi_{\max}}$ denotes clipping. The rule increases participation when offered inventory is aligned with risk-adjusted alpha and current inventory, but reduces participation when incremental inventory risk or execution uncertainty is high.

A full treatment would add a second jump-control term,
\begin{equation}
    \lambda^{\mathrm{sweep}}
    \E_{Y,\varphi}
    \left[
        \sup_{\pi\in[0,1]^d}
        \left(
            C(\pi,Y,\varphi)
            +u(t,q+\varphi\odot\pi\odot Y)-u(t,q)
        \right)
    \right],
\end{equation}
to the HJB. The present experiment does not solve this full problem. It uses the local rule in \eqref{eq:risk_aware_sweep_rule} to test the mechanism while keeping the main paper focused on residual-quality hit-rate targeting.

\subsection{Numerical evidence}

Figure~\ref{fig:risk_aware_frontier} compares four policies: raw hit-rate targeting, residual-quality targeting with alpha, residual-quality targeting with naive Sweep participation, and residual-quality targeting with risk-aware style participation. The Sweep fills are random partial fills. The risk-aware rule shifts the attained-service frontier upward relative to the residual-quality baseline and to naive Sweep participation.

\begin{figure}[H]
\centering
\includegraphics[width=0.78\textwidth]{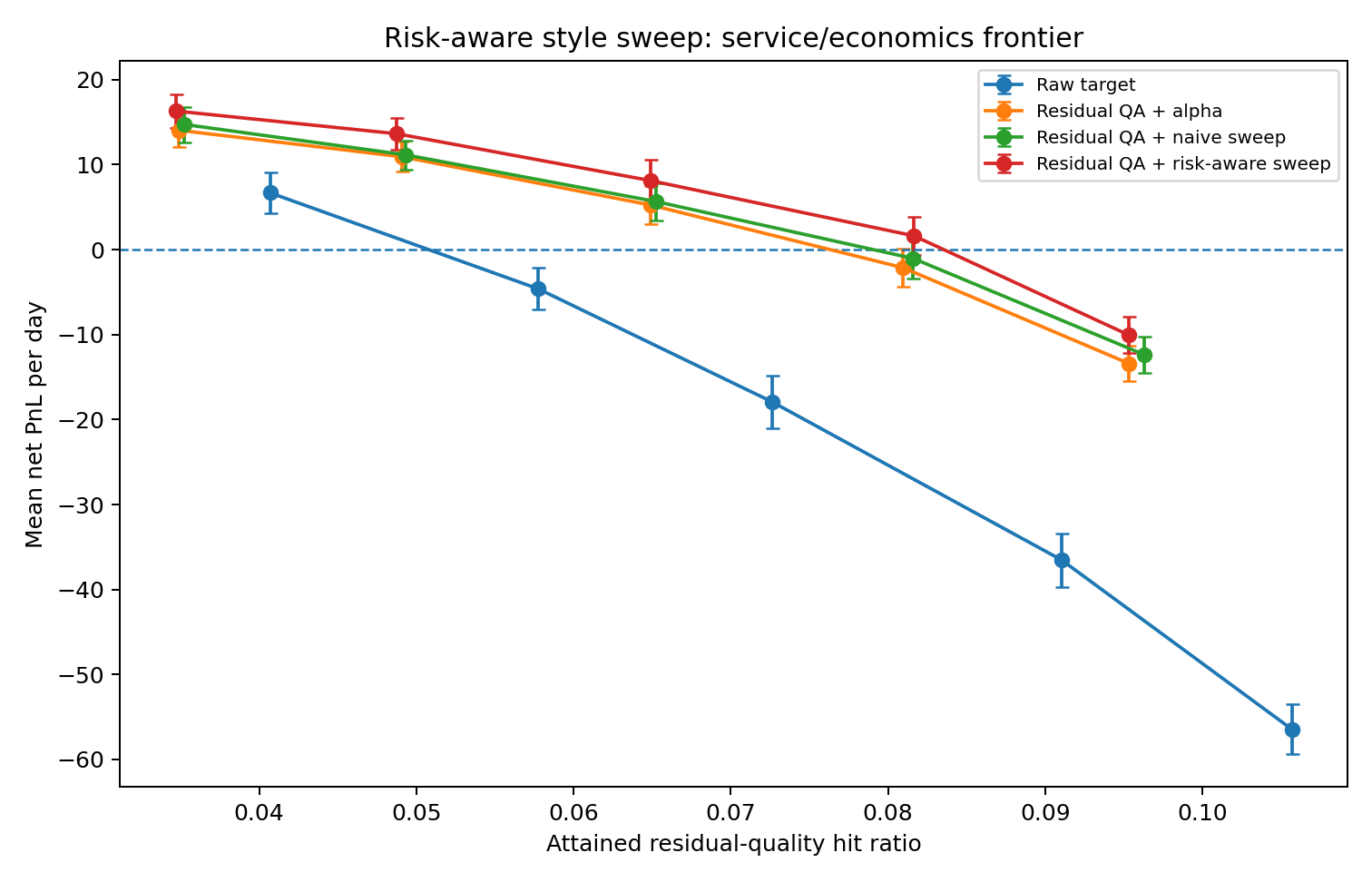}
\caption{Style-aware client-flow warehousing with random Sweep fills. The risk-aware rule improves the service/economics frontier relative to both the residual-quality baseline and naive Sweep participation.}
\label{fig:risk_aware_frontier}
\end{figure}

Table~\ref{tab:risk_aware_diagnostics} reports diagnostics at the 8\% target. The risk-aware policy earns higher PnL than the residual-quality baseline, maintains a similar quality hit ratio, captures more right-way Sweep flow, and reduces terminal inventory risk. The result is not driven by indiscriminately taking more inventory; the participation rule improves the quality of inventory acquired.

\begin{table}[H]
\centering
\caption{Risk-aware style-flow diagnostics at the 8\% target. PnL is bp$\cdot$MM per day.}
\label{tab:risk_aware_diagnostics}
\small
\begin{tabular}{lrrrrrr}
\toprule
Policy & PnL/day & Quality HR & Sweep/day & Right-way & Alpha exposure & Inv. risk \\
\midrule
Raw target & -18.37 & 7.29\% & 0.00 & 0.0\% & -63.8 & 28.4 \\
Residual QA + alpha & 5.85 & 6.44\% & 0.00 & 0.0\% & -25.2 & 20.9 \\
Residual QA + naive sweep & 6.28 & 6.46\% & 6.49 & 59.4\% & -22.7 & 20.6 \\
Residual QA + fixed style sweep & 7.22 & 6.49\% & 6.14 & 77.0\% & -17.2 & 20.8 \\
Residual QA + risk-aware sweep & 8.47 & 6.45\% & 5.96 & 81.9\% & -14.2 & 19.3 \\
\bottomrule
\end{tabular}
\end{table}

Figure~\ref{fig:risk_aware_matched_saving} compares policies at matched attained quality hit ratio. The risk-aware rule improves PnL throughout the overlap region. This supports the interpretation that a high service objective can be more sustainable when marginal client inventory is low-toxicity, style-aligned, and sized by inventory risk.

\begin{figure}[H]
\centering
\includegraphics[width=0.76\textwidth]{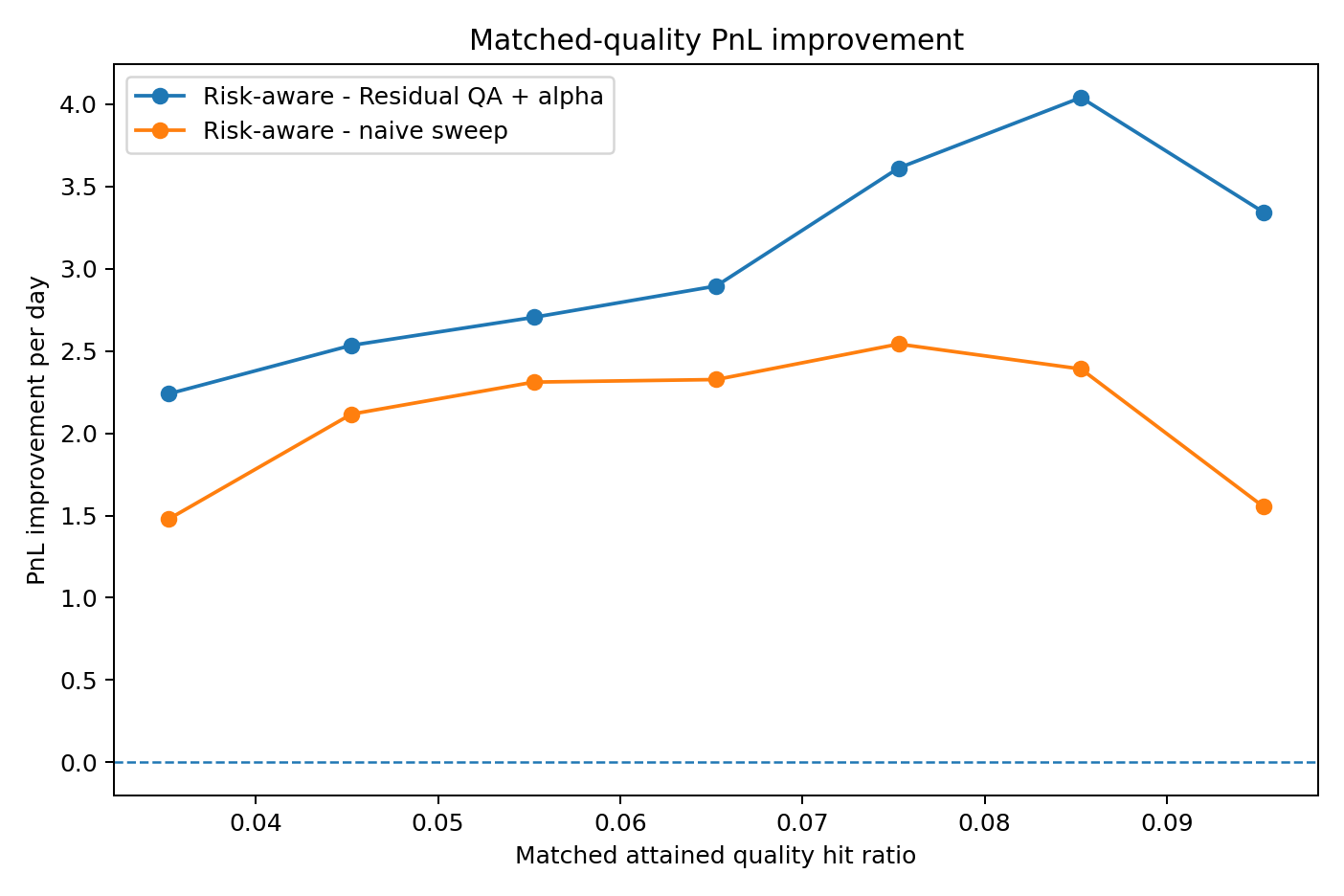}
\caption{Matched-quality comparison for style-aware client-flow warehousing. The risk-aware rule outperforms both the no-Sweep residual-quality baseline and naive Sweep participation.}
\label{fig:risk_aware_matched_saving}
\end{figure}

Table~\ref{tab:risk_aware_matched} reports the matched-quality comparison numerically.

\begin{table}[H]
\centering
\caption{Matched attained-quality comparison. PnL is bp$\cdot$MM per day.}
\label{tab:risk_aware_matched}
\small
\begin{tabular}{rrrrr}
\toprule
Quality HR & Residual QA & Naive Sweep & Risk-aware Sweep & Risk-aware - Residual QA \\
\midrule
3.52\% & 13.98 & 14.74 & 16.22 & 2.24 \\
4.52\% & 11.78 & 12.19 & 14.31 & 2.53 \\
5.53\% & 8.72 & 9.11 & 11.42 & 2.70 \\
6.53\% & 5.09 & 5.65 & 7.98 & 2.89 \\
7.53\% & 0.48 & 1.55 & 4.09 & 3.61 \\
8.53\% & -5.53 & -3.88 & -1.49 & 4.04 \\
9.53\% & -13.39 & -11.60 & -10.05 & 3.34 \\
\bottomrule
\end{tabular}
\end{table}

The PnL attribution in Table~\ref{tab:risk_aware_attr} shows where the improvement comes from. Risk-aware participation adds Sweep spread capture, improves style-alpha PnL, and reduces terminal liquidation cost. It also incurs Sweep residual and acquisition costs, but these are more than offset in the synthetic experiment.

\begin{table}[H]
\centering
\caption{Risk-aware Sweep minus residual-quality baseline at the 8\% target. Positive values improve PnL.}
\label{tab:risk_aware_attr}
\small
\begin{tabular}{lr}
\toprule
Component & Contribution \\
\midrule
RFQ spread capture & 1.15 \\
Sweep spread capture & 1.19 \\
Style-alpha PnL & 0.59 \\
RFQ residual toxicity cost saved & 0.24 \\
Sweep residual cost saved & -0.48 \\
Terminal liquidation cost saved & 0.22 \\
Net PnL improvement & 2.62 \\
\bottomrule
\end{tabular}
\end{table}

The conclusion is deliberately modest. Style alignment alone is not sufficient; a fixed participation rule can over-warehouse risk. However, a participation rule derived from the same quadratic value approximation as the RFQ quote control can convert some client service into alpha-aligned inventory acquisition while controlling inventory risk and random fill uncertainty.

\section{Calibration roadmap without proprietary data}

This paper is currently using synthetic data. A practical calibration would require:
\begin{enumerate}[leftmargin=*]
    \item RFQ arrival intensities $\lambda_{m,\tau,s,k}$ by bond, segment, side, and size.
    \item Fill curves $f_{m,\tau,s,k}(\delta)$ estimated from quote distance and outcomes.
    \item Credit covariance $\Sigma=B\Omega B^\top+\Sigma_\epsilon$ from spread or price returns.
    \item Credit alpha $\mu_m$ from carry, rolldown, issuer-RV, flow, and index signals.
    \item Residual toxicity $\chi^{res}_{m,\tau,s,k}$ from factor- and alpha-adjusted dealer-signed markouts.
\end{enumerate}

The method does not necessarily require a model per client but can be based aggregate archetypal tiers, such as carry/RV (tier 1 or 2), passive/index (tier 1) and residual-toxic flow (tier 3) segments for regulatory purposes.

\section{Limitations and validation requirements}

The model deliberately separates mathematical tractability from empirical calibration. This has three consequences.

First, the simulations are synthetic. They show that residual-quality targeting can dominate raw hit-ratio targeting under plausible flow-quality heterogeneity, but they do not estimate the magnitude of the effect on any real desk.

Second, residual toxicity is only as good as the markout adjustment. If carry, rolldown, issuer-RV, index demand, or factor beta are omitted from the adjustment, the model may misclassify sophisticated but non-toxic clients as toxic. This is precisely why the residual definition \eqref{eq:markout_decomp} is central.

Third, the style-aware warehousing extension and the passive/index special case are reduced-form. They illustrate inventory-recycling value but do not yet solve a full joint optimal acquisition-and-quoting control problem. The style-aware extension uses a quadratic value-based participation rule with random fills; it should be read as a mechanism study rather than as a calibrated empirical estimate.

In an empirical application, the headline frontier would be calibrated with RFQ arrival rates, fill curves, credit factor exposures, and factor-adjusted markout estimates from observed trading data. The present paper should therefore be read as a theoretical and numerical mechanism study.

\section{Conclusion}

Hit ratio is not wrong. Uniform raw hit ratio is wrong. In corporate bond RFQ market making, the dealer should distinguish between flow that is adverse because it contains residual private information and flow that looks adverse only because the dealer failed to price observable credit alpha, carry/rolldown, issuer-RV, or index-demand effects.

A residual-quality-adjusted hit-ratio target preserves the tractability of the hit-ratio HJB while changing the economic objective. The exact-dual Hamiltonian remains separable, the quote map remains explicit for logistic fills, and the multi-bond approximation remains scalable through the Riccati matrix. The result is a desk-interpretable quoting rule: subsidize service selectively for low-residual-toxicity, recyclable, and forecastable flow; do not subsidize residual-toxic flow merely to satisfy a raw hit-ratio KPI. A broader extension is style-aware client-flow warehousing: when client inventory opportunities are aligned with the dealer's risk-adjusted credit alpha, they can turn client service into desired inventory acquisition, but only if participation is sized for execution uncertainty and inventory risk. The reduced-form risk-aware participation experiment supports this mechanism while keeping the main contribution focused on residual-quality hit-rate targeting.

\appendix

\section{Exact scalar dual solve}

For fixed inventory state and tier, define
\begin{equation}
    F_\tau(\xi)=
    \xi-
    \kappa_\tau
    \left[
    r^{Q,\star}_\tau+
    \frac{1}{W^Q_\tau}
    \sum_{i\in\mathcal I_\tau} z_i w_i H_i'(p_i^0-w_i\xi)
    \right].
\end{equation}
Since $H_i''>0$ and $w_i\ge0$,
\begin{equation}
    F_\tau'(\xi)=
    1+
    \frac{\kappa_\tau}{W^Q_\tau}
    \sum_{i\in\mathcal I_\tau} z_i w_i^2 H_i''(p_i^0-w_i\xi)>0.
\end{equation}
Therefore $F_\tau$ is strictly increasing and has at most one root. In implementation, one brackets the root and applies bisection or Brent's method. A fast approximation used during exploratory sweeps is
\begin{equation}
    \xi\approx
    \frac{\kappa_\tau(r^{Q,\star}_\tau-r^Q_{\tau,0})}
    {1+\kappa_\tau J_0},
    \qquad
    J_0=\frac{1}{W^Q_\tau}\sum_i z_i w_i^2 H_i''(p_i^0),
\end{equation}
but the headline algorithm in this paper uses the exact scalar solve.

\section{Numerical precision and replication}
\label{app:replication_plan}

The numerical results can be reproduced and sharpened by rerunning the nonlinear-dual Monte Carlo with a larger path count. A higher-precision configuration is:
\begin{enumerate}[leftmargin=*]
    \item Use the same synthetic universe and opportunity grid for reproducibility.
    \item Use nonlinear scalar dual solves for $\xi_\tau$ at every quote state.
    \item Run at least $1{,}000$ Monte Carlo paths per policy and target for the main frontier.
    \item Report standard errors or bootstrap confidence intervals for PnL/day, quality hit ratio, toxic hit ratio, and residual cost/day.
    \item Keep the linearized dual closure only as an appendix speed approximation, not as the headline result.
\end{enumerate}

A high-precision replication should report
\[
    \Delta \mathrm{PnL}(r^\star),\quad
    \Delta r^{toxic}(r^\star),\quad
    \Delta \text{ResidualCost}(r^\star)
\]
for each target $r^\star\in\{4\%,6\%,8\%,10\%,12\%\}$.

\section{Reduced-form passive/index experiment details}
\label{app:passive_details}

This appendix summarizes a reduced-form passive/index experiment that can be viewed as a special case of forecastable, low-residual-toxicity client flow. The synthetic passive signal assigns each bond a signed event demand $I_m$ based on maturity, rating, sector, credit alpha, and idiosyncratic noise. Positive $I_m$ corresponds to passive clients buying from the dealer. Event demand arrives over a concentrated window, shown in Figure~\ref{fig:passive_event_profile}.

\begin{figure}[H]
\centering
\includegraphics[width=0.72\textwidth]{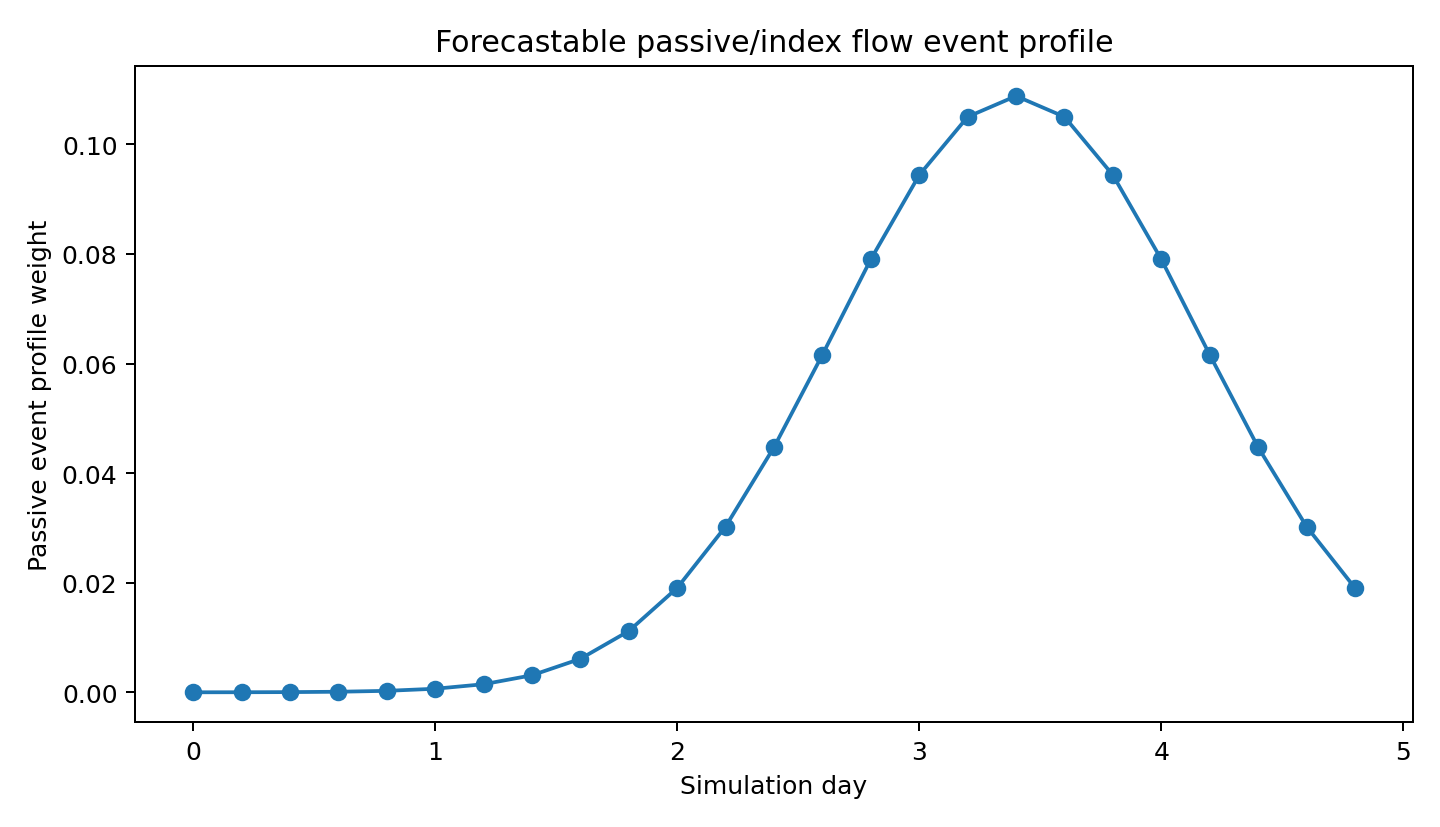}
\caption{Synthetic passive/index event profile used in the reduced-form extension.}
\label{fig:passive_event_profile}
\end{figure}

The sweep grid varies forecast skill $\rho$ and pre-position fraction $\theta$. Figure~\ref{fig:sweep_heatmap} shows that aggressive pre-positioning is costly when forecasts are weak, while moderate pre-positioning becomes attractive only when forecast quality is high.

\begin{figure}[H]
\centering
\includegraphics[width=0.72\textwidth]{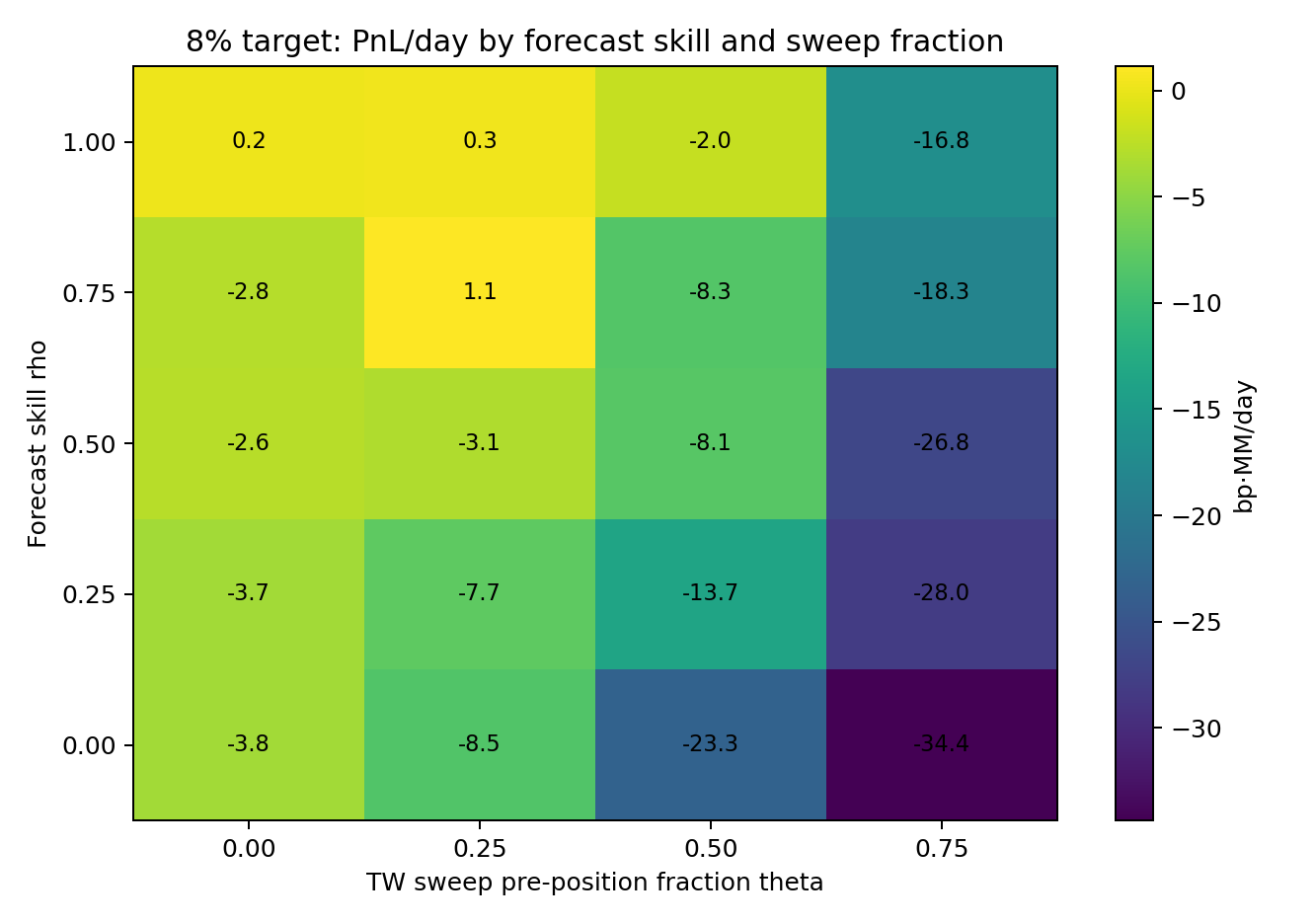}
\caption{TW Sweep scenario grid at an 8\% target. Cells report mean PnL per day.}
\label{fig:sweep_heatmap}
\end{figure}

Figure~\ref{fig:optimal_theta} reports the best sweep fraction in the finite grid. The optimal scenario is no pre-positioning for low forecast skill and modest pre-positioning for high forecast skill.

\begin{figure}[H]
\centering
\includegraphics[width=0.72\textwidth]{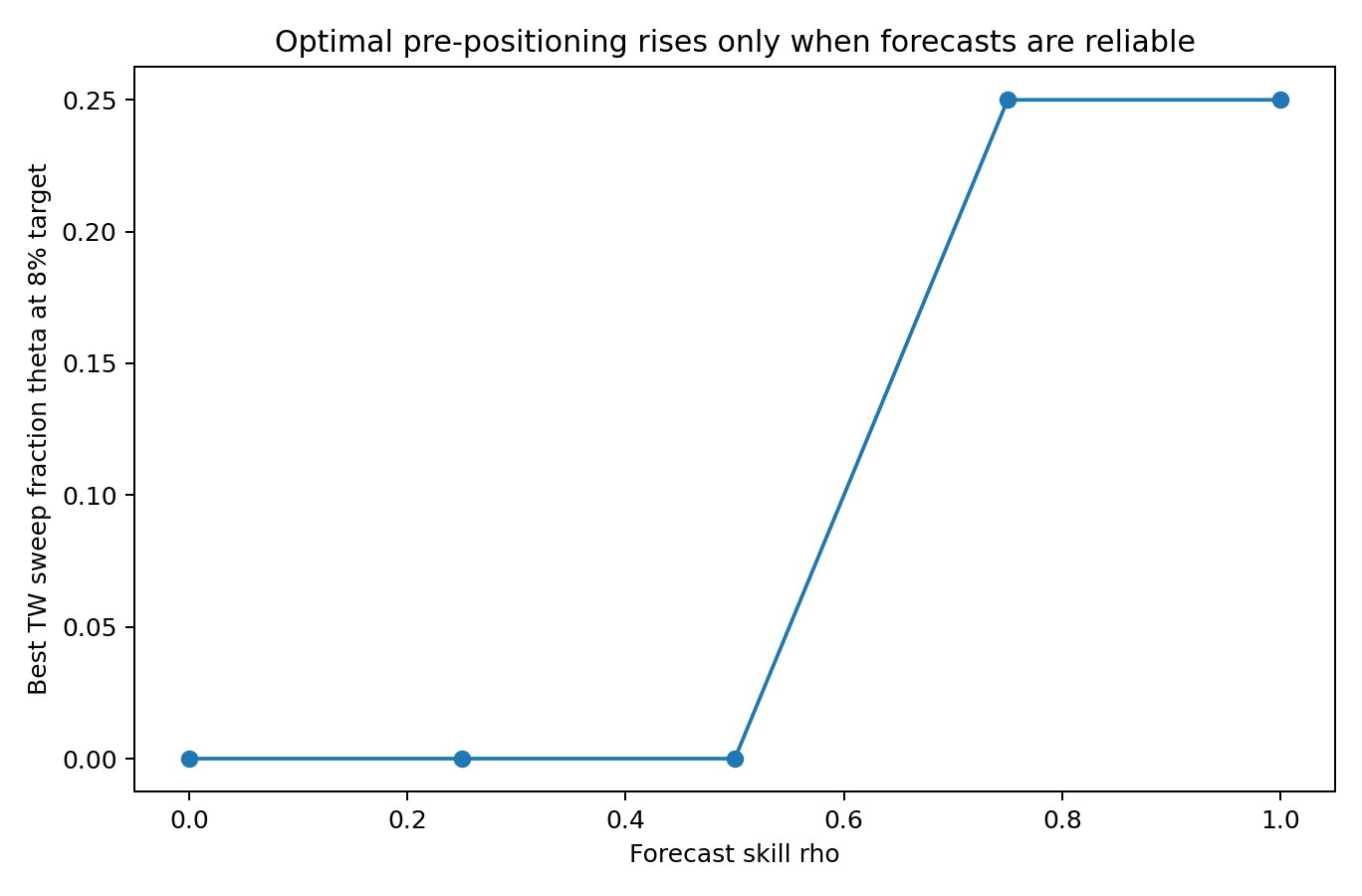}
\caption{Best TW Sweep fraction in the reduced-form grid. Forecast quality, not raw passive demand, determines whether pre-positioning is useful.}
\label{fig:optimal_theta}
\end{figure}

The supplementary files include the passive demand signal, event profile, RFQ opportunity grid with the passive segment, sustainable-hit-rate table, sweep grid, and optimal-sweep table. These are synthetic mechanism checks and should not be read as empirical estimates of passive-flow profitability for any particular desk.

\clearpage
\section{Supplementary numerical figure appendix}
\label{app:all_figures}

This appendix reports supplementary numerical diagnostics for the synthetic experiments. The main text focuses on headline figures and tables; the figures below document robustness checks, implementation checks, and lower-dimensional examples. Some supplementary figures use earlier local-dual approximations or smaller nonlinear-dual runs, as stated in the captions; the main text reports the nonlinear-dual attained-quality frontier.

\subsection{Baseline one-bond diagnostics}

\begin{figure}[H]
\centering
\includegraphics[width=0.76\textwidth]{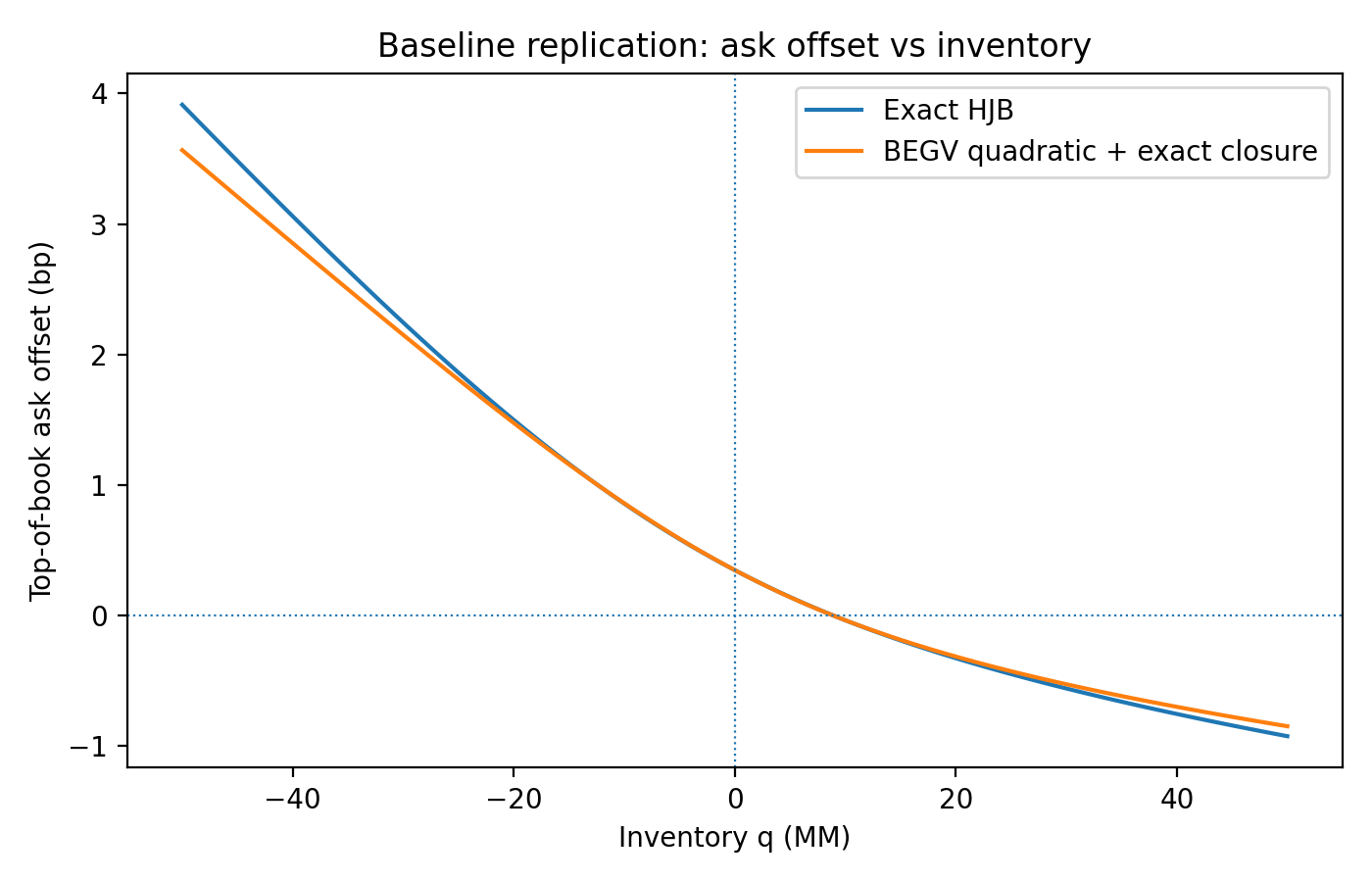}
\caption{Baseline one-bond ask quote offset as a function of inventory. The figure checks the expected inventory skew: a long dealer tightens offers and widens bids, while a short dealer does the opposite.}
\label{fig:app_step1_ask}
\end{figure}

\begin{figure}[H]
\centering
\includegraphics[width=0.76\textwidth]{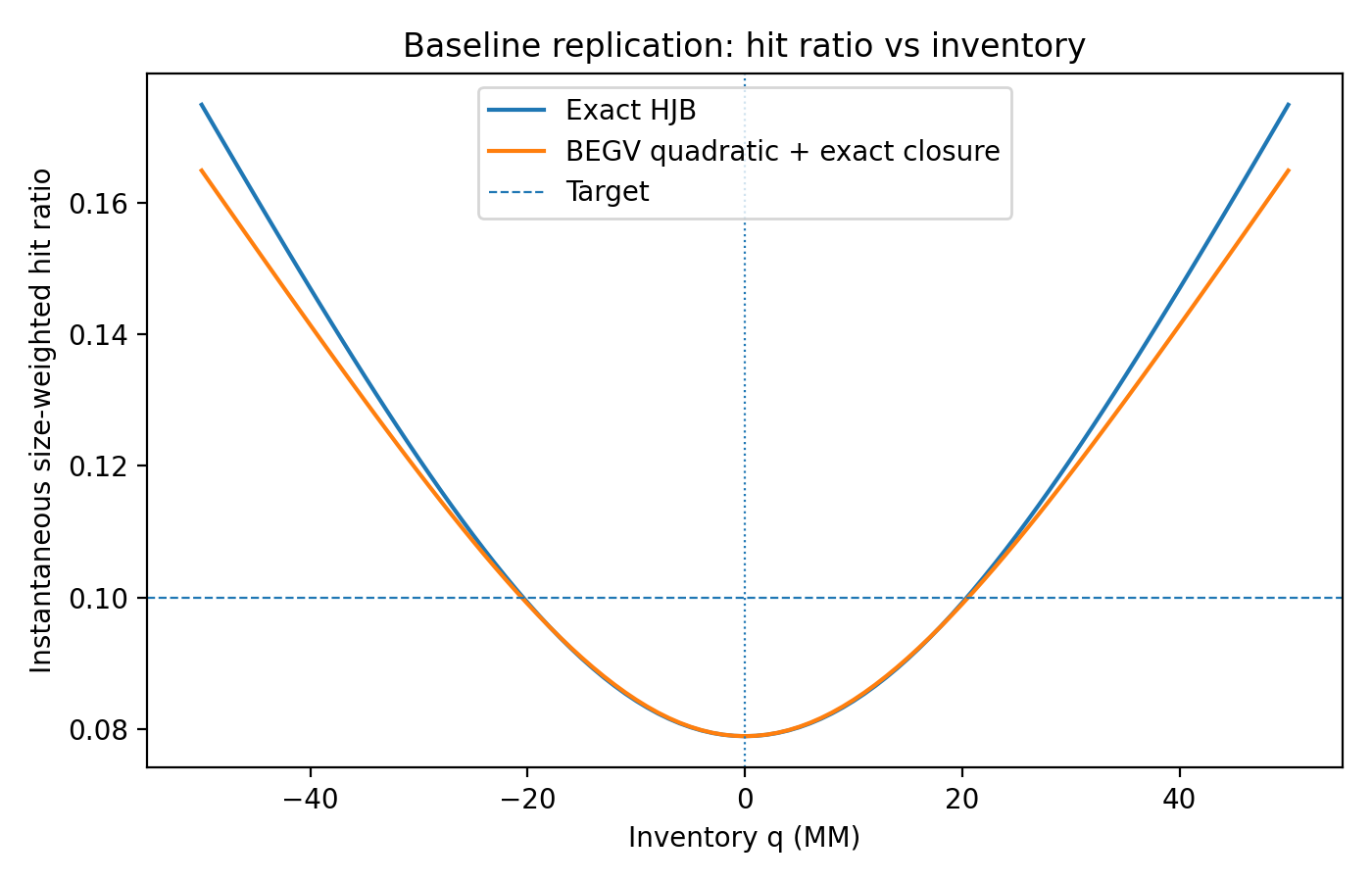}
\caption{Baseline one-bond hit ratio as a function of inventory. The hit-ratio target interacts with inventory control: service subsidy is strongest near inventory states where risk constraints are least binding.}
\label{fig:app_step1_hr}
\end{figure}

\begin{figure}[H]
\centering
\includegraphics[width=0.76\textwidth]{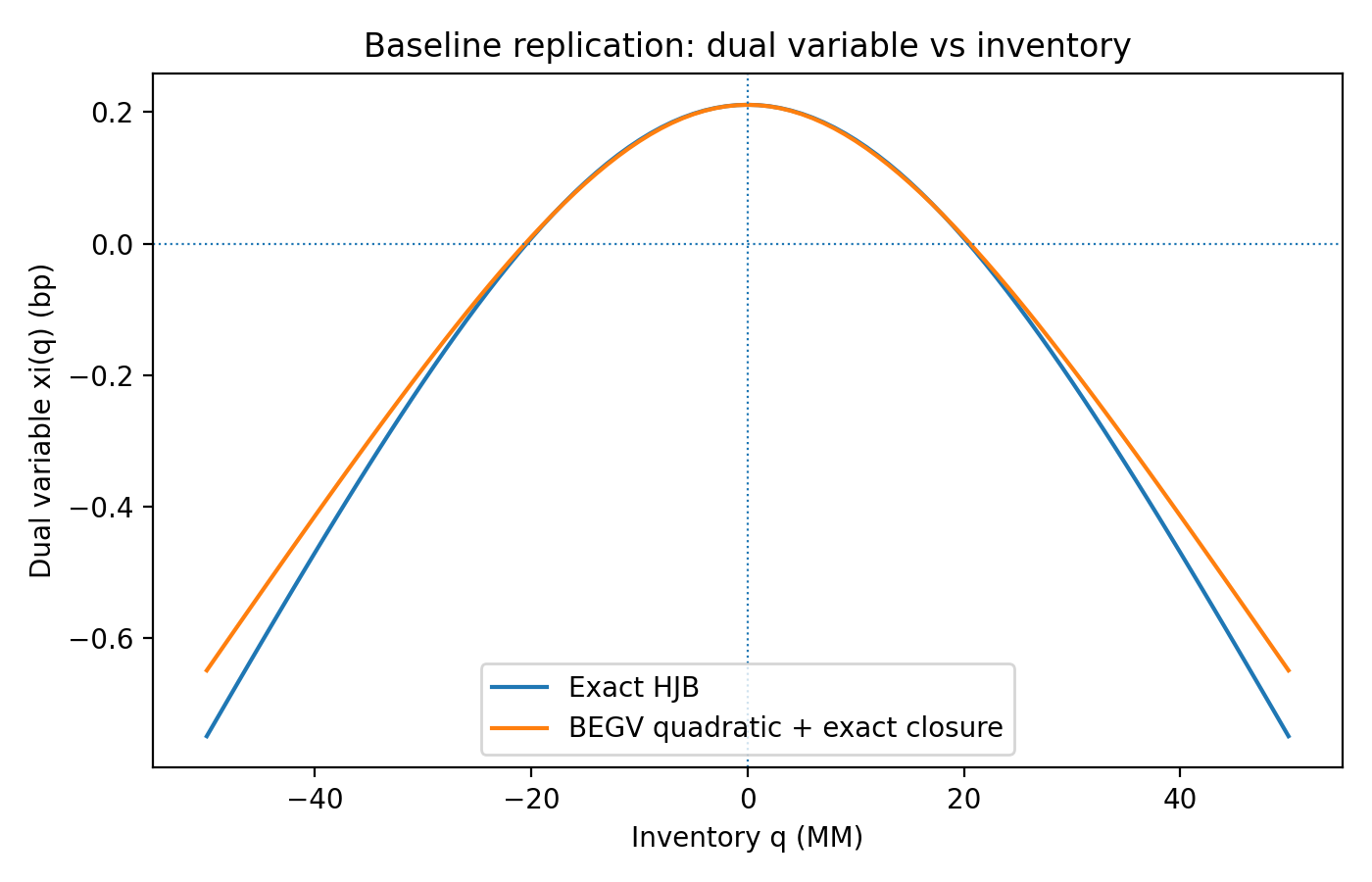}
\caption{Baseline one-bond dual variable as a function of inventory. The nonlinearity of the dual variable motivates solving the scalar fixed point rather than relying on a global linear approximation.}
\label{fig:app_step1_xi}
\end{figure}

\subsection{Residual-toxicity identification diagnostics}

\begin{figure}[H]
\centering
\includegraphics[width=0.76\textwidth]{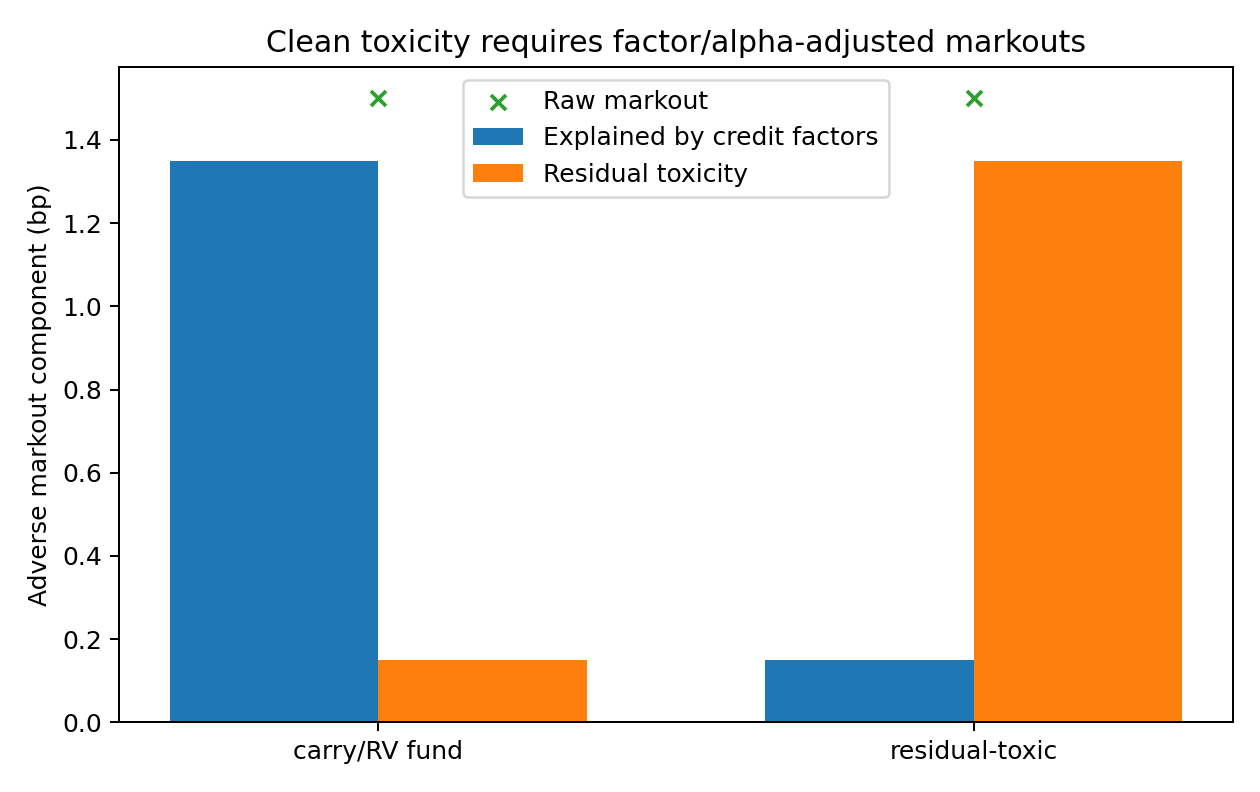}
\caption{Synthetic markout decomposition used to motivate residual toxicity. Two flows can have similar raw adverse markouts but very different residual adverse-selection content after controlling for credit factors and alpha.}
\label{fig:app_step2_decomp}
\end{figure}

\begin{figure}[H]
\centering
\includegraphics[width=0.76\textwidth]{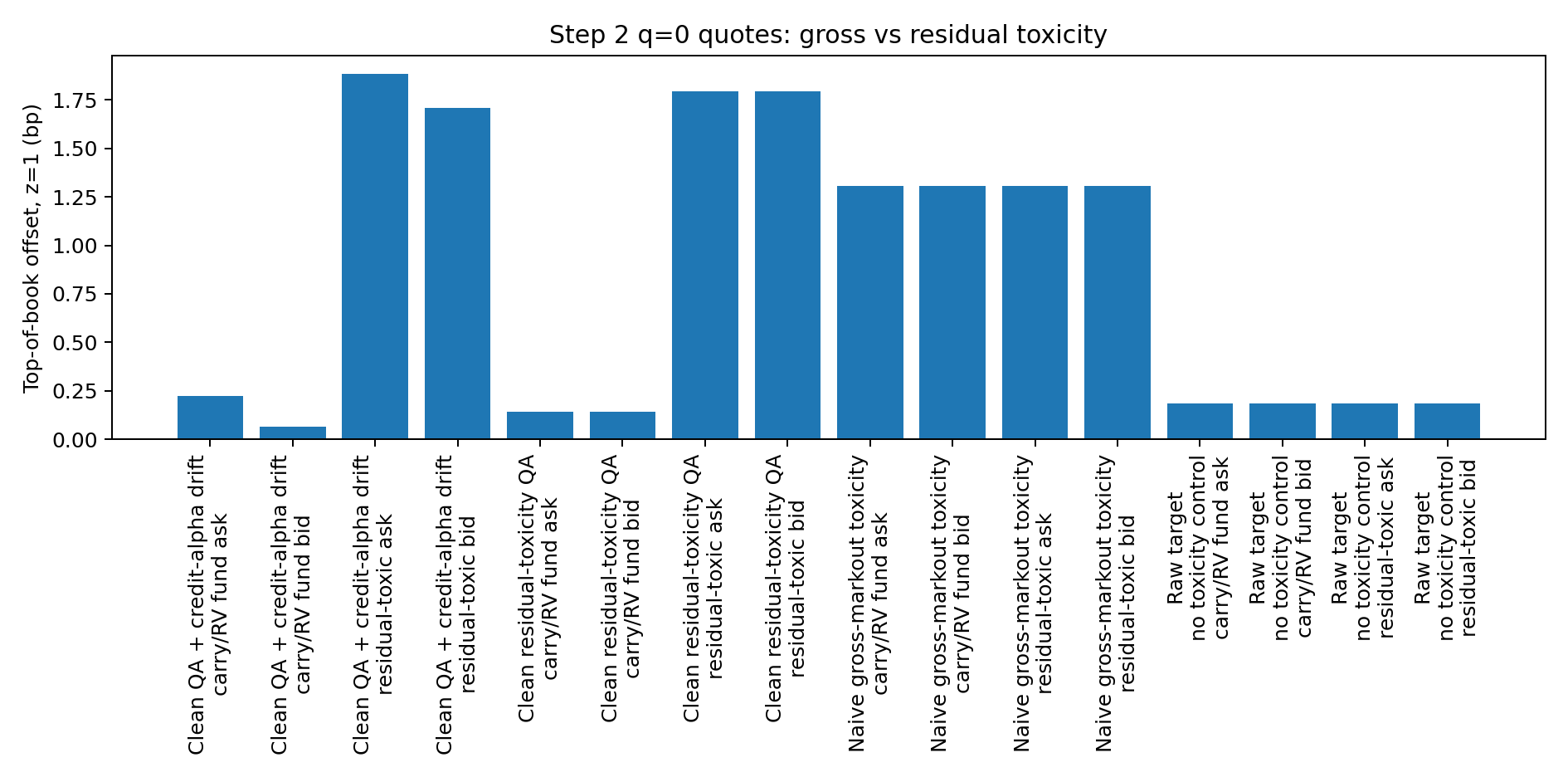}
\caption{Top-of-book quote offsets at zero inventory under raw targeting, naive gross-markout toxicity, and residual-toxicity quality adjustment. Gross-markout toxicity penalizes carry/RV flow that should instead be explained by credit alpha.}
\label{fig:app_step2_q0}
\end{figure}

\begin{figure}[H]
\centering
\includegraphics[width=0.76\textwidth]{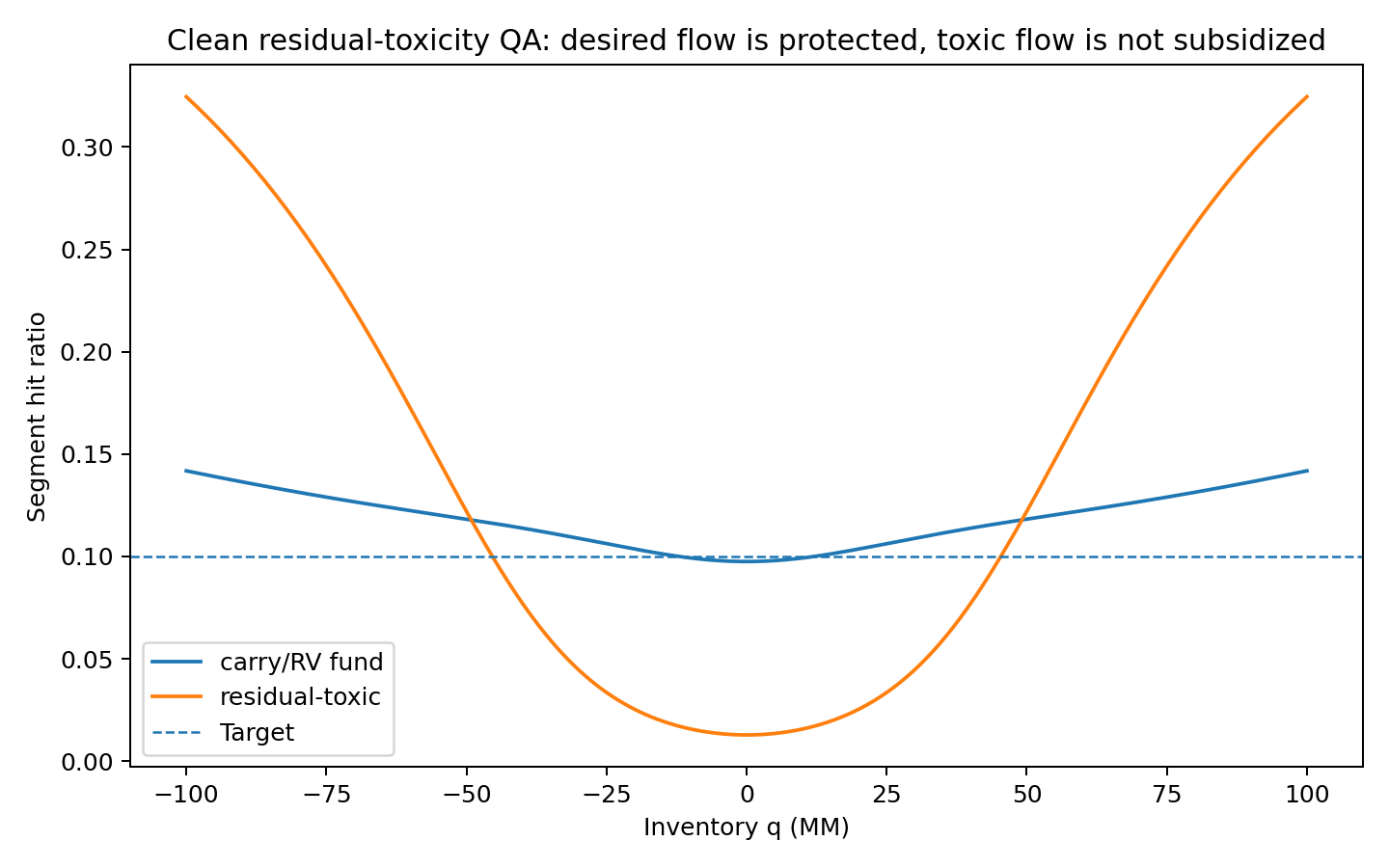}
\caption{Segment hit ratios under the clean residual-toxicity policy as inventory varies. The policy preserves service to low-residual-toxicity flow and avoids subsidizing residual-toxic flow.}
\label{fig:app_step2_clean_hr}
\end{figure}

\begin{figure}[H]
\centering
\includegraphics[width=0.76\textwidth]{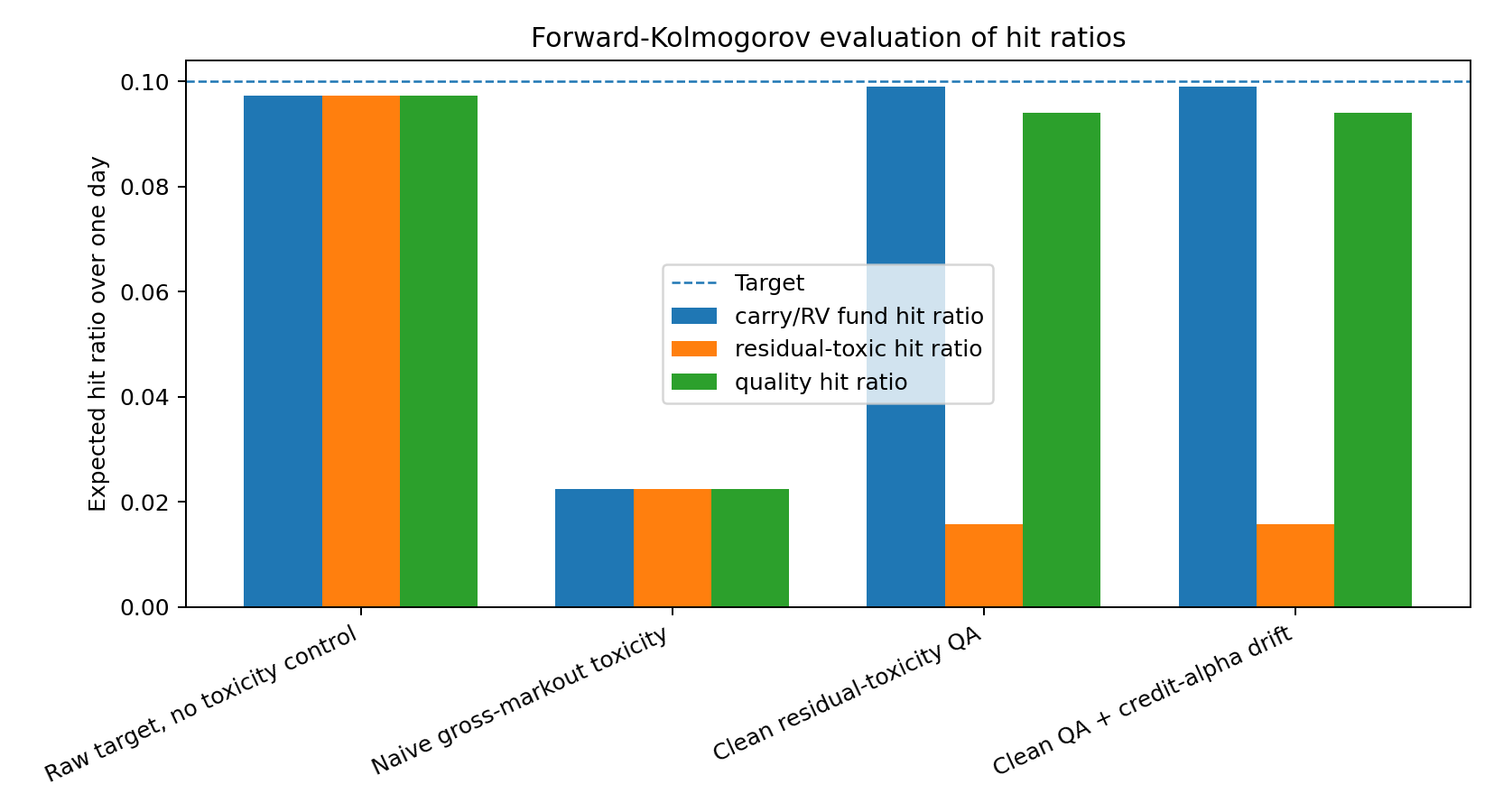}
\caption{Forward-Kolmogorov hit-ratio comparison in the one-bond residual-toxicity example. Residual-quality targeting achieves quality service without raising toxic-flow hit rates.}
\label{fig:app_step2_forward_hr}
\end{figure}

\begin{figure}[H]
\centering
\includegraphics[width=0.76\textwidth]{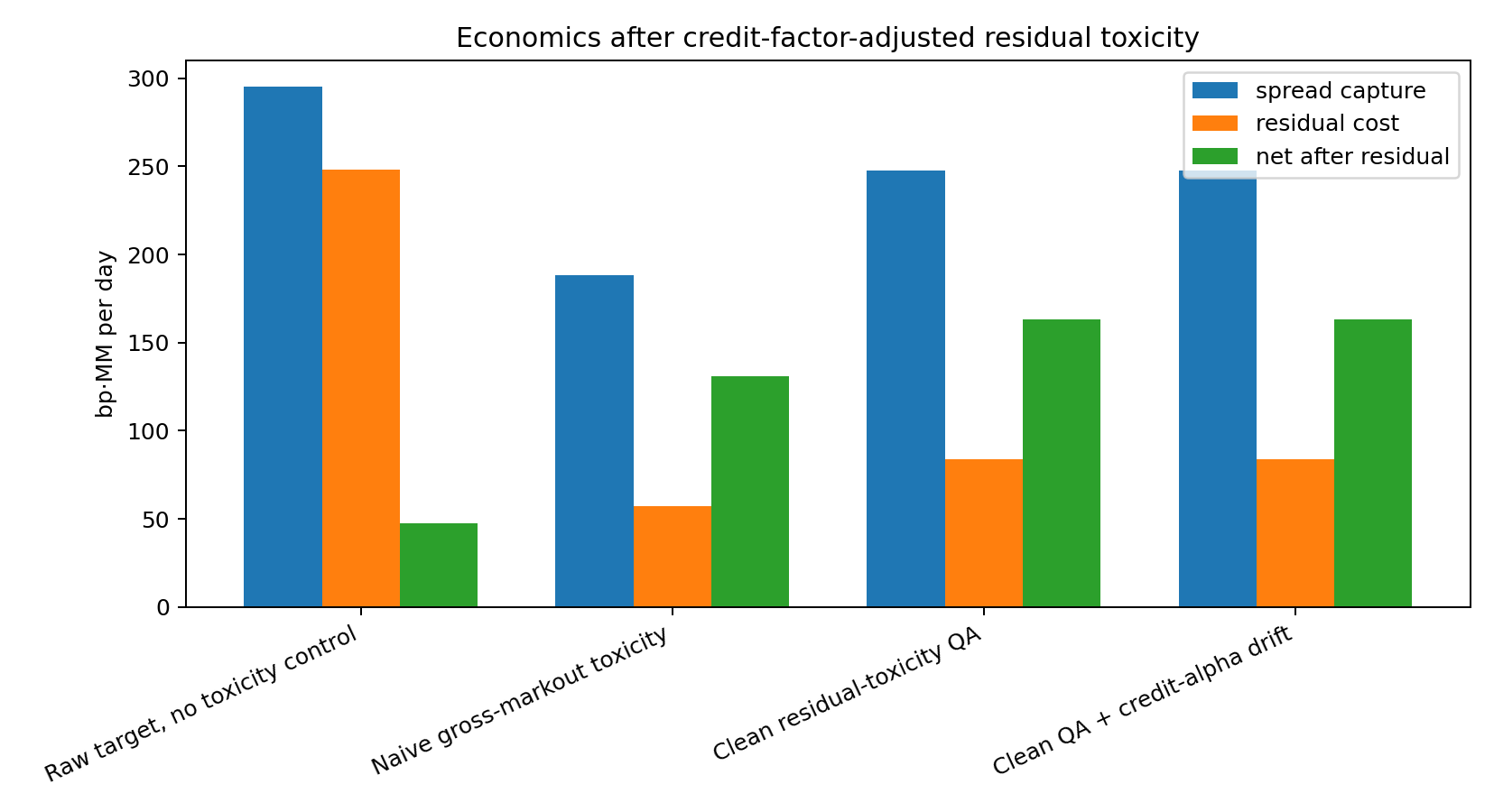}
\caption{Forward-Kolmogorov residual economics in the one-bond example. The clean residual policy reduces residual adverse-selection cost while retaining desirable flow.}
\label{fig:app_step2_forward_econ}
\end{figure}

\subsection{Multi-bond synthetic diagnostics}

\begin{figure}[H]
\centering
\includegraphics[width=0.76\textwidth]{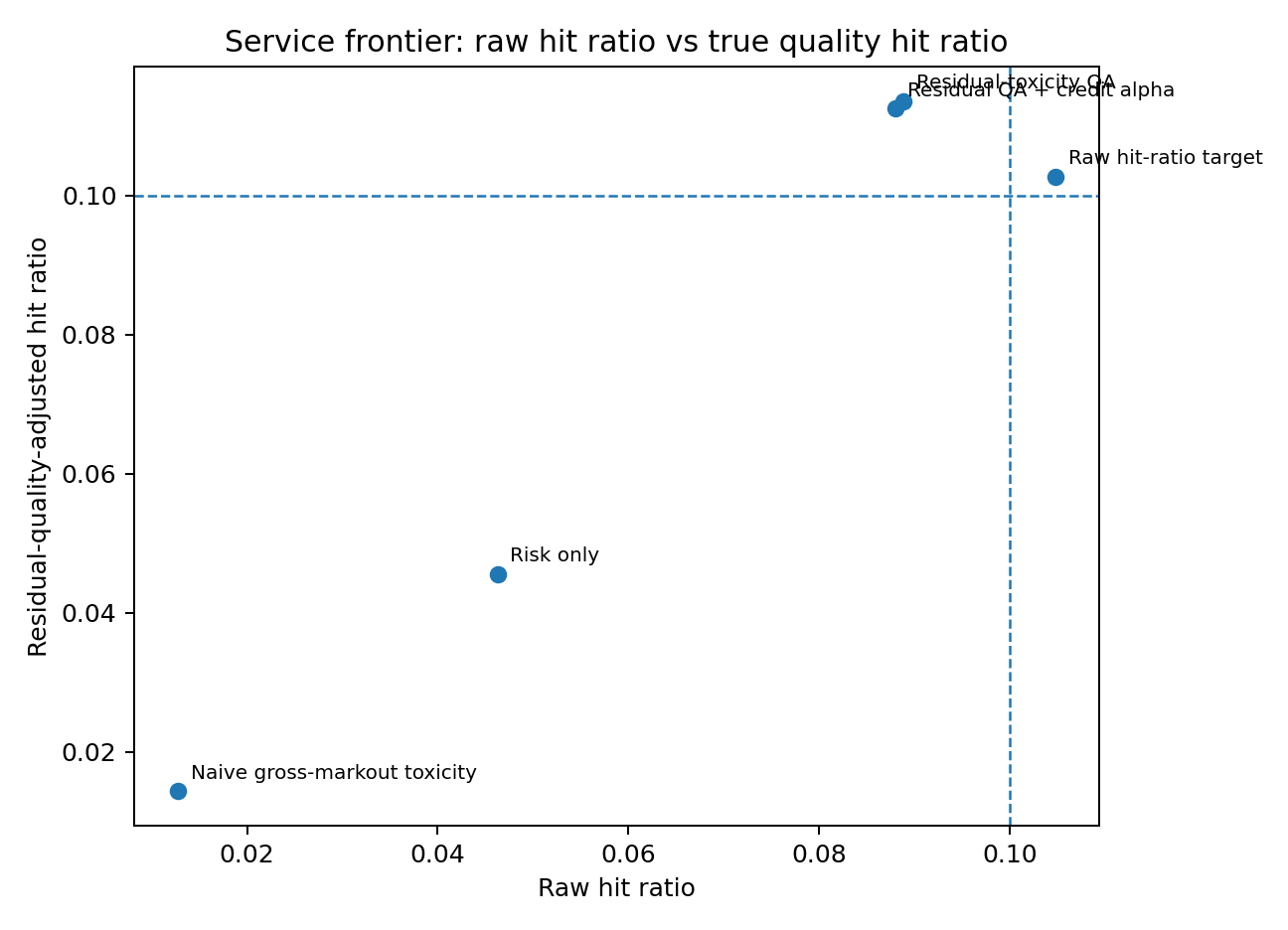}
\caption{Raw hit ratio versus residual-quality hit ratio across policies in the multi-bond synthetic simulation. This figure illustrates why raw service and quality-adjusted service should be distinguished.}
\label{fig:app_step3_raw_quality}
\end{figure}

\begin{figure}[H]
\centering
\includegraphics[width=0.76\textwidth]{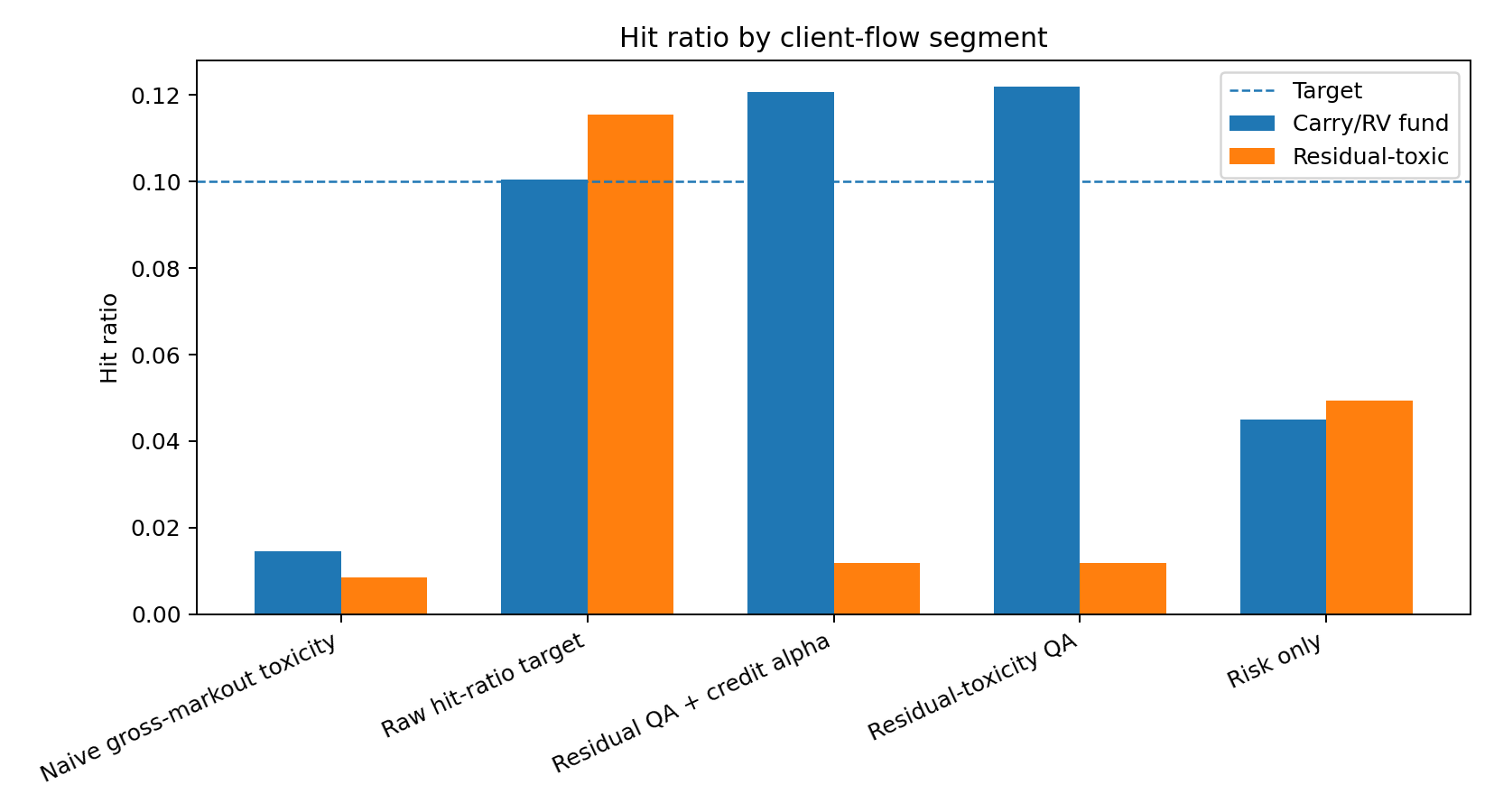}
\caption{Segment hit ratios in the multi-bond simulation. Raw hit-ratio targeting wins residual-toxic flow, while residual-quality targeting reallocates fills toward lower-residual-toxicity flow.}
\label{fig:app_step3_segment}
\end{figure}

\begin{figure}[H]
\centering
\includegraphics[width=0.76\textwidth]{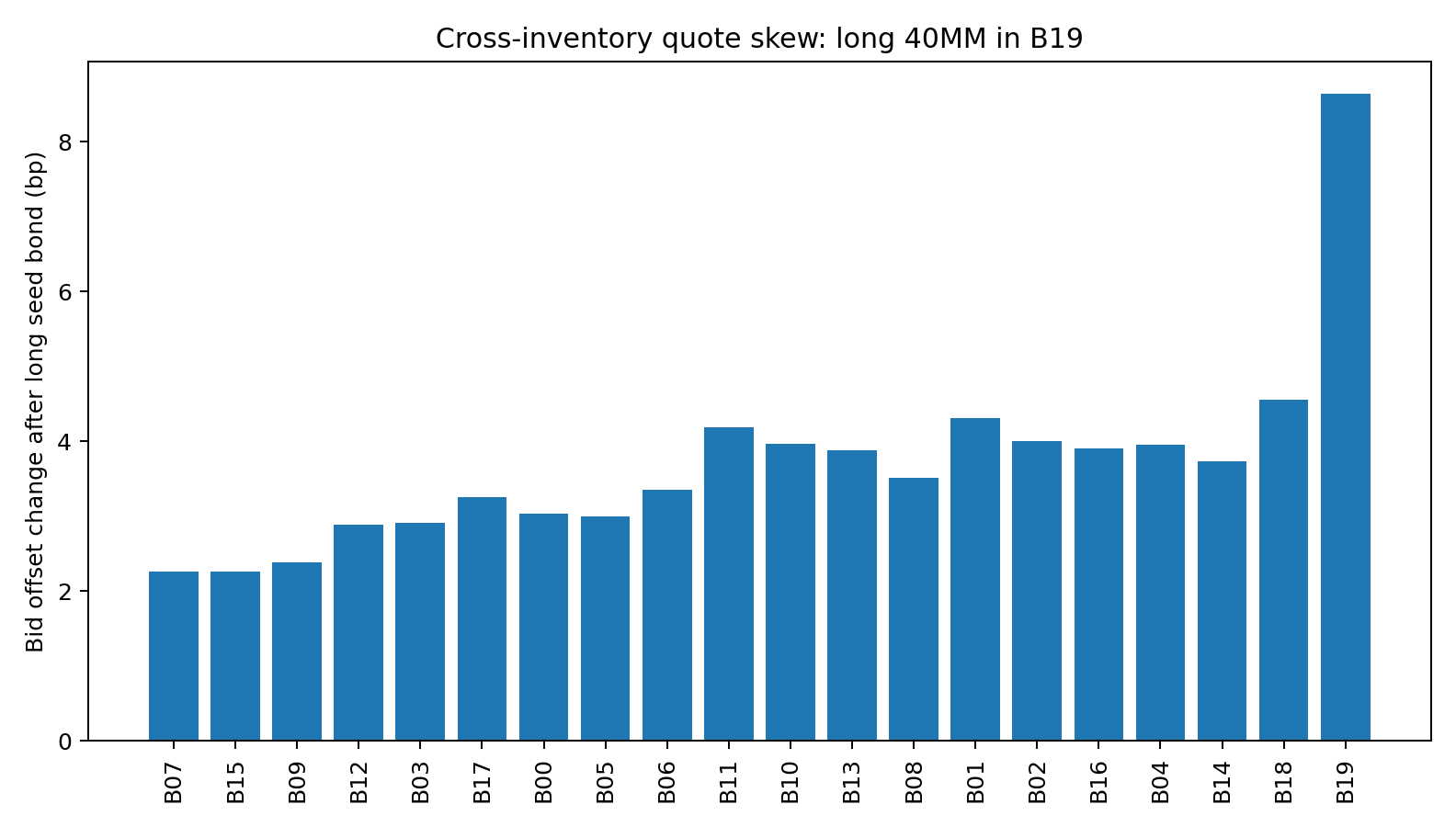}
\caption{Cross-inventory quote skew in a credit-factor book. A long position in one bond widens bids not only in that bond, but also in correlated bonds through the factor covariance matrix.}
\label{fig:app_step3_cross_skew}
\end{figure}

\begin{figure}[H]
\centering
\includegraphics[width=0.76\textwidth]{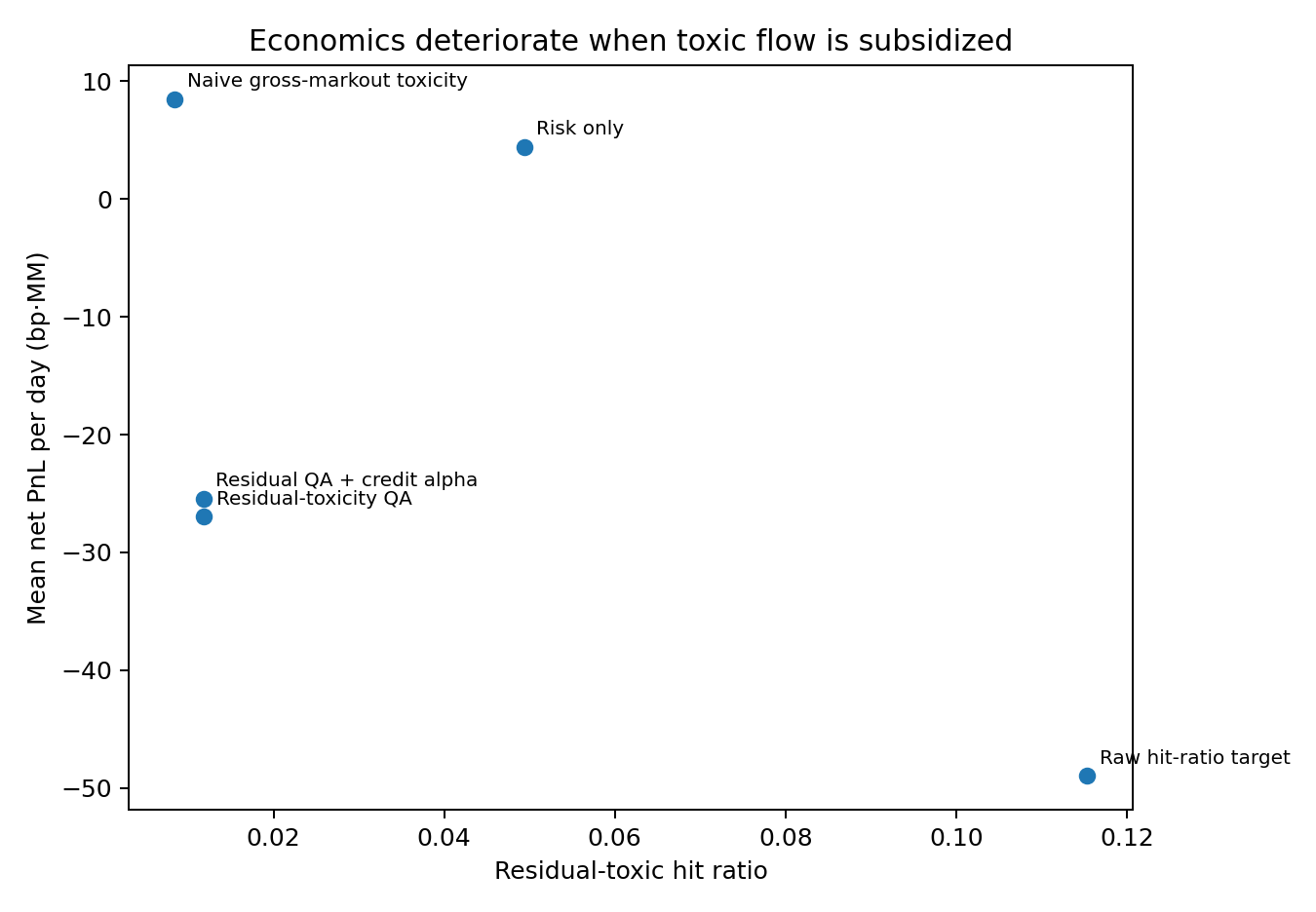}
\caption{Mean PnL versus residual-toxic hit ratio in the multi-bond simulation. Policies that subsidize residual-toxic flow suffer worse economics.}
\label{fig:app_step3_pnl_toxic}
\end{figure}

\begin{figure}[H]
\centering
\includegraphics[width=0.76\textwidth]{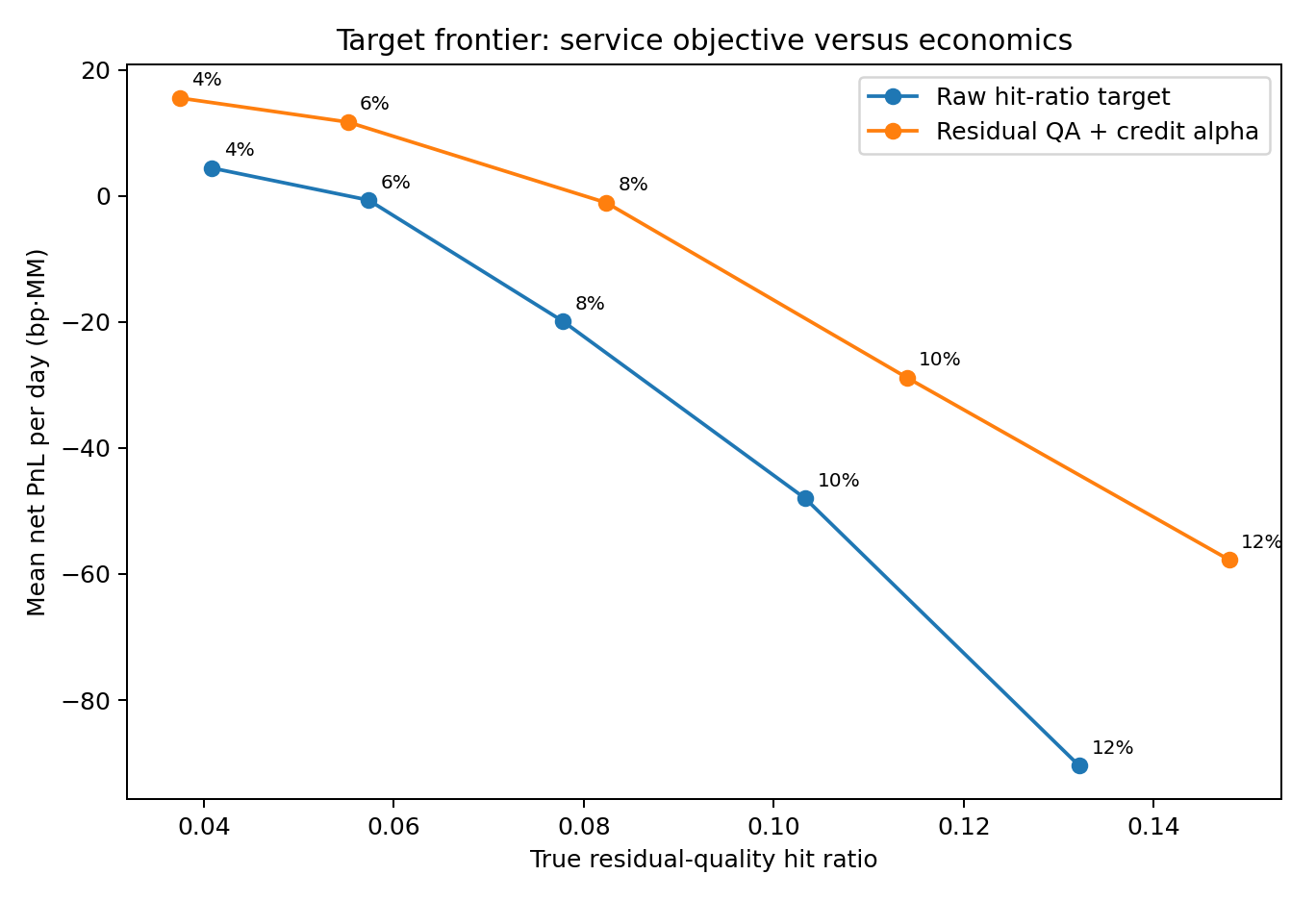}
\caption{Target frontier from the multi-bond simulation before interpolation by attained residual-quality hit ratio. The main text reports the attained-service version of this comparison.}
\label{fig:app_step3_target_frontier}
\end{figure}

\subsection{Robustness and dual-solve diagnostics}

\begin{figure}[H]
\centering
\includegraphics[width=0.76\textwidth]{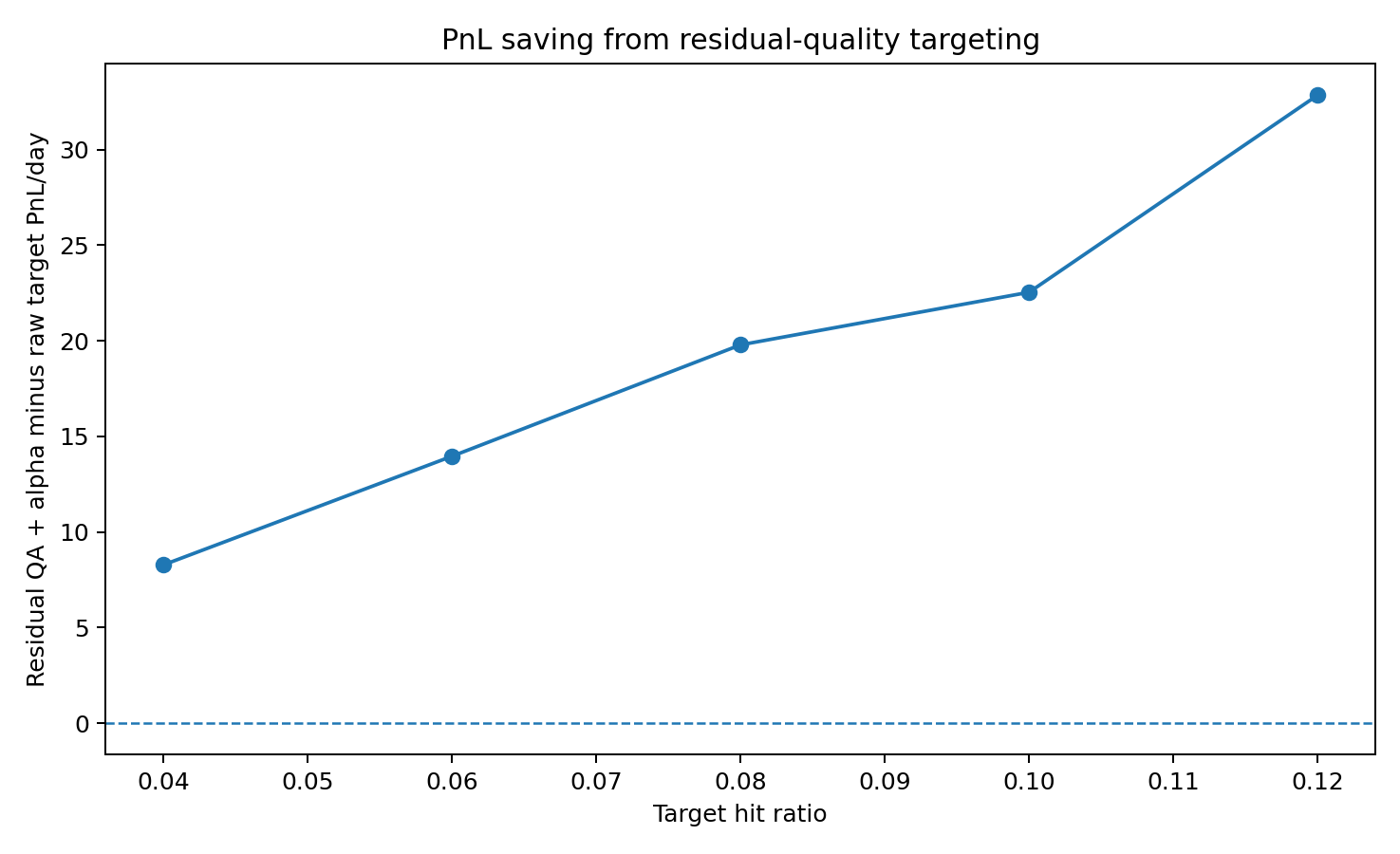}
\caption{PnL saving from residual-quality targeting over raw targeting by hit-ratio target in the robustness run.}
\label{fig:app_step4_saving_target}
\end{figure}

\begin{figure}[H]
\centering
\includegraphics[width=0.76\textwidth]{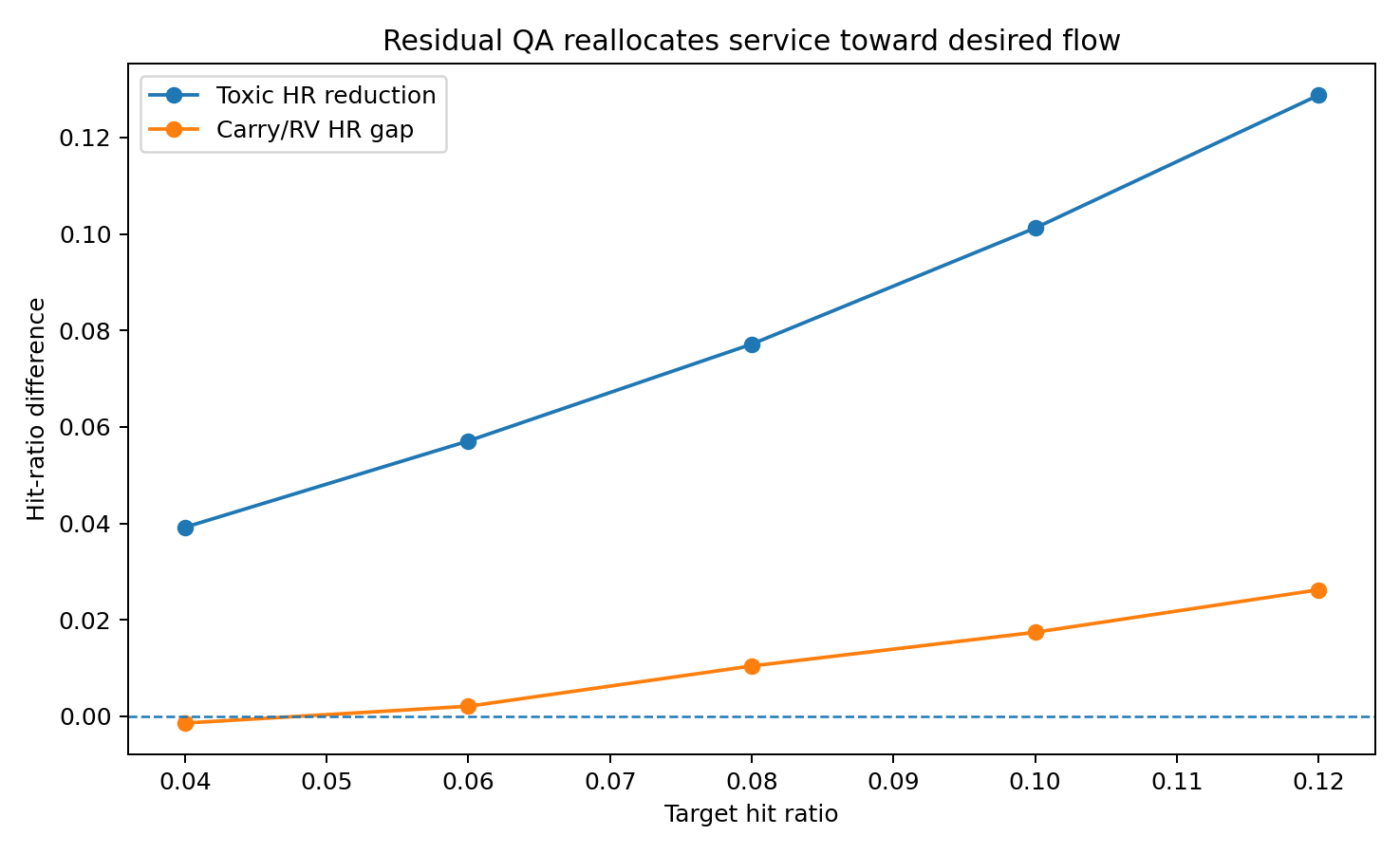}
\caption{Service reallocation by target: residual-quality targeting reduces residual-toxic hit ratio and preserves service to better flow.}
\label{fig:app_step4_reallocation}
\end{figure}

\begin{figure}[H]
\centering
\includegraphics[width=0.76\textwidth]{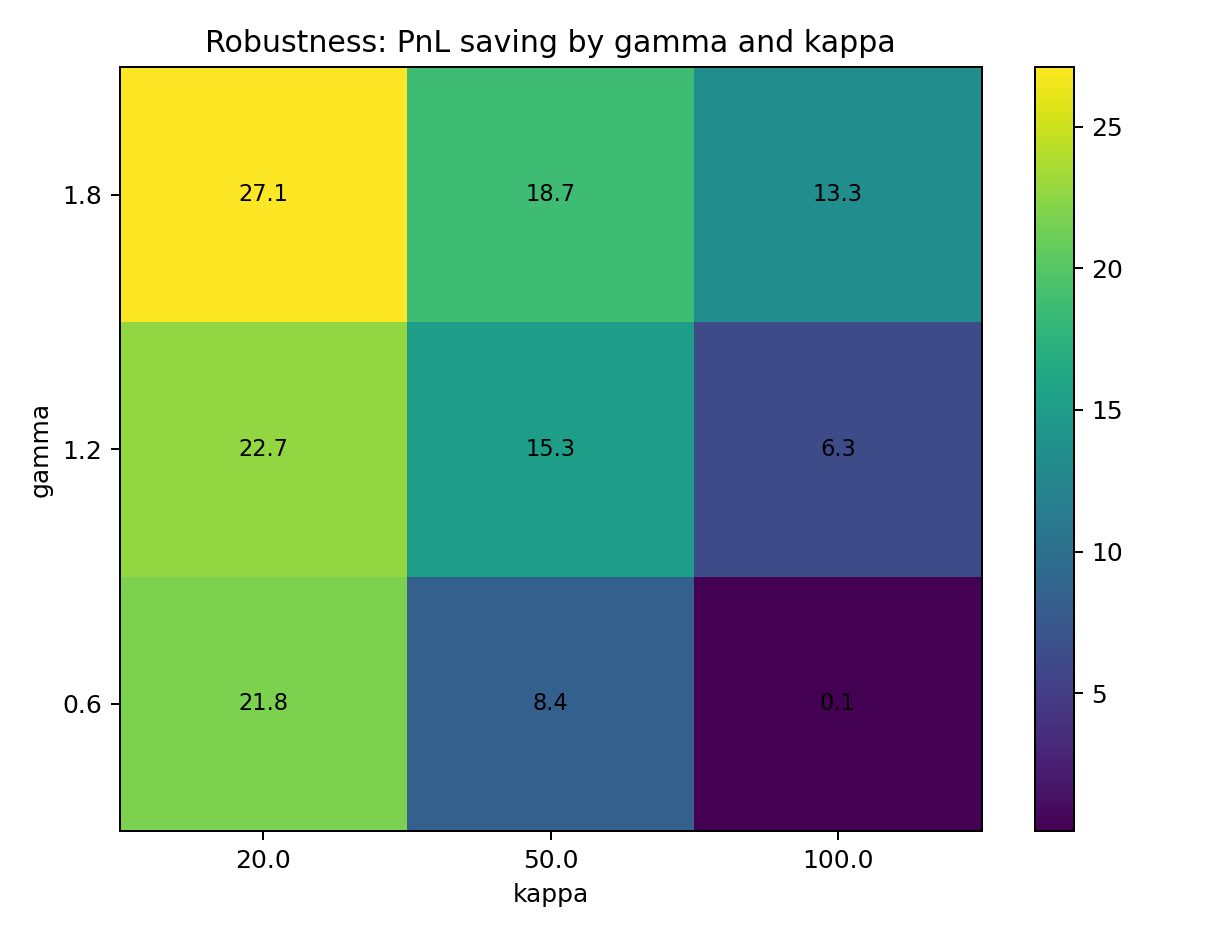}
\caption{Robustness heatmap: PnL saving as a function of quality-weight strength $\gamma$ and hit-ratio penalty strength $\kappa$. This sweep uses a local dual approximation and is reported as supplementary evidence; the main text uses the nonlinear-dual solve.}
\label{fig:app_step4_gk_pnl}
\end{figure}

\begin{figure}[H]
\centering
\includegraphics[width=0.76\textwidth]{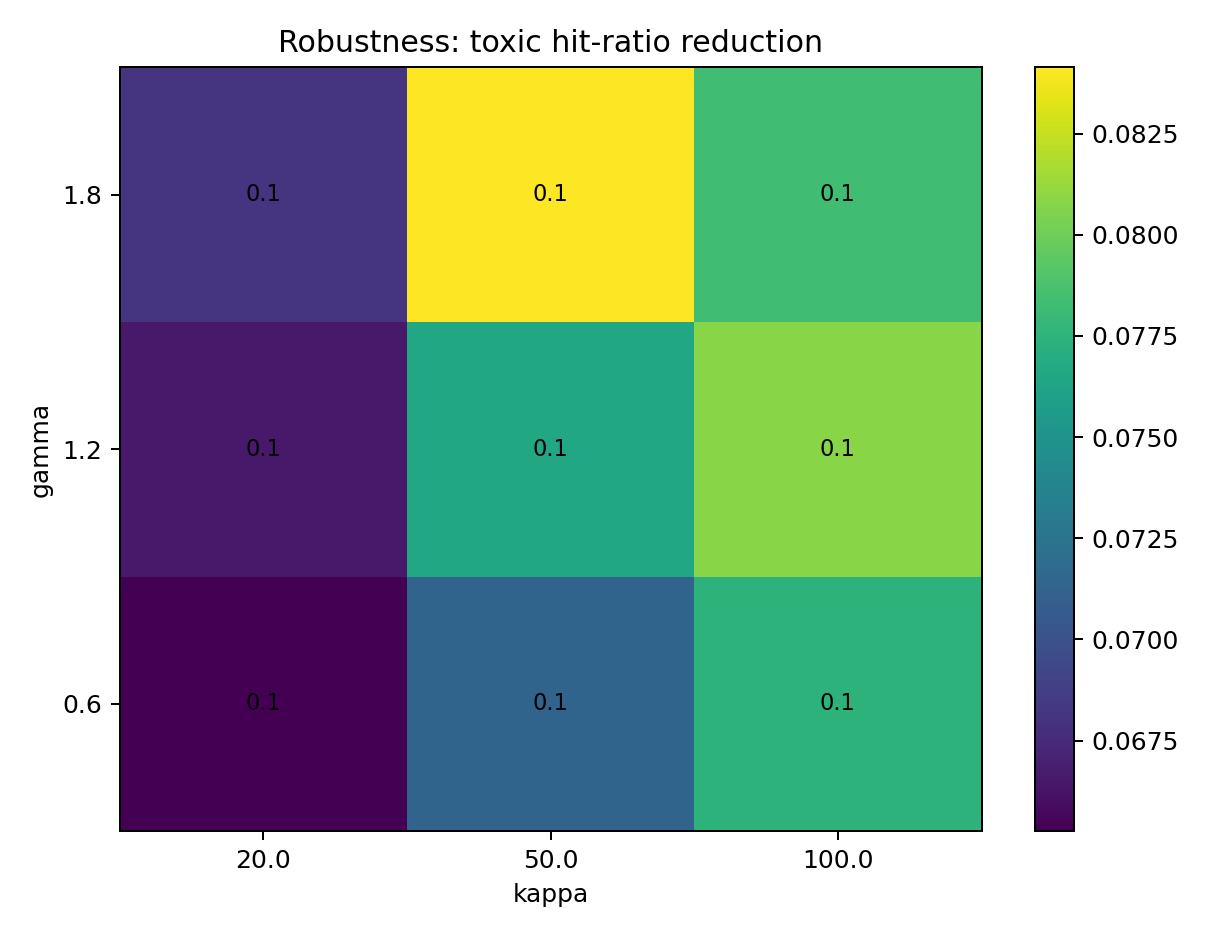}
\caption{Robustness heatmap: residual-toxic hit-ratio reduction as a function of $\gamma$ and $\kappa$.}
\label{fig:app_step4_gk_toxic}
\end{figure}

\begin{figure}[H]
\centering
\includegraphics[width=0.76\textwidth]{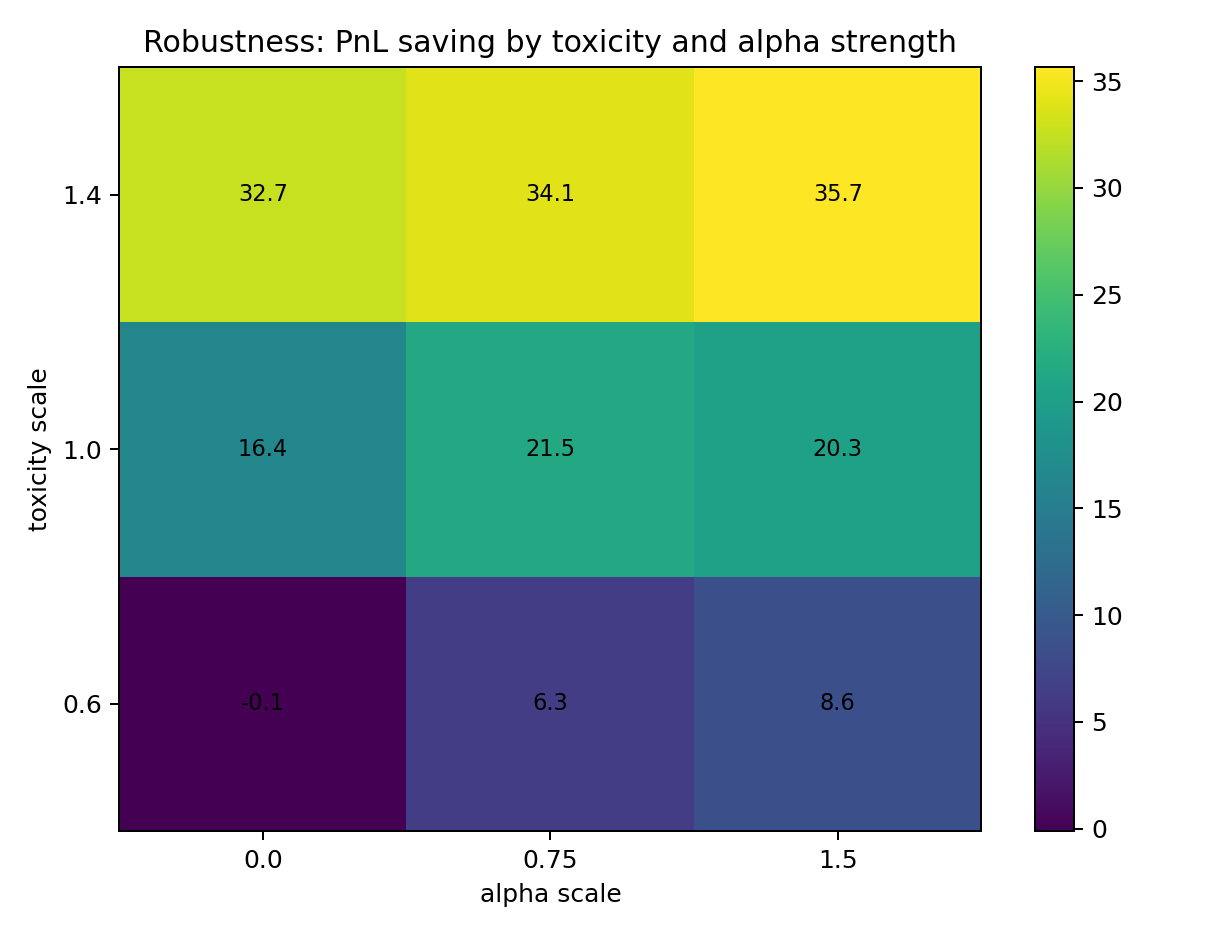}
\caption{Robustness heatmap: PnL saving as a function of residual-toxicity scale and credit-alpha scale.}
\label{fig:app_step4_tox_alpha}
\end{figure}

\begin{figure}[H]
\centering
\includegraphics[width=0.76\textwidth]{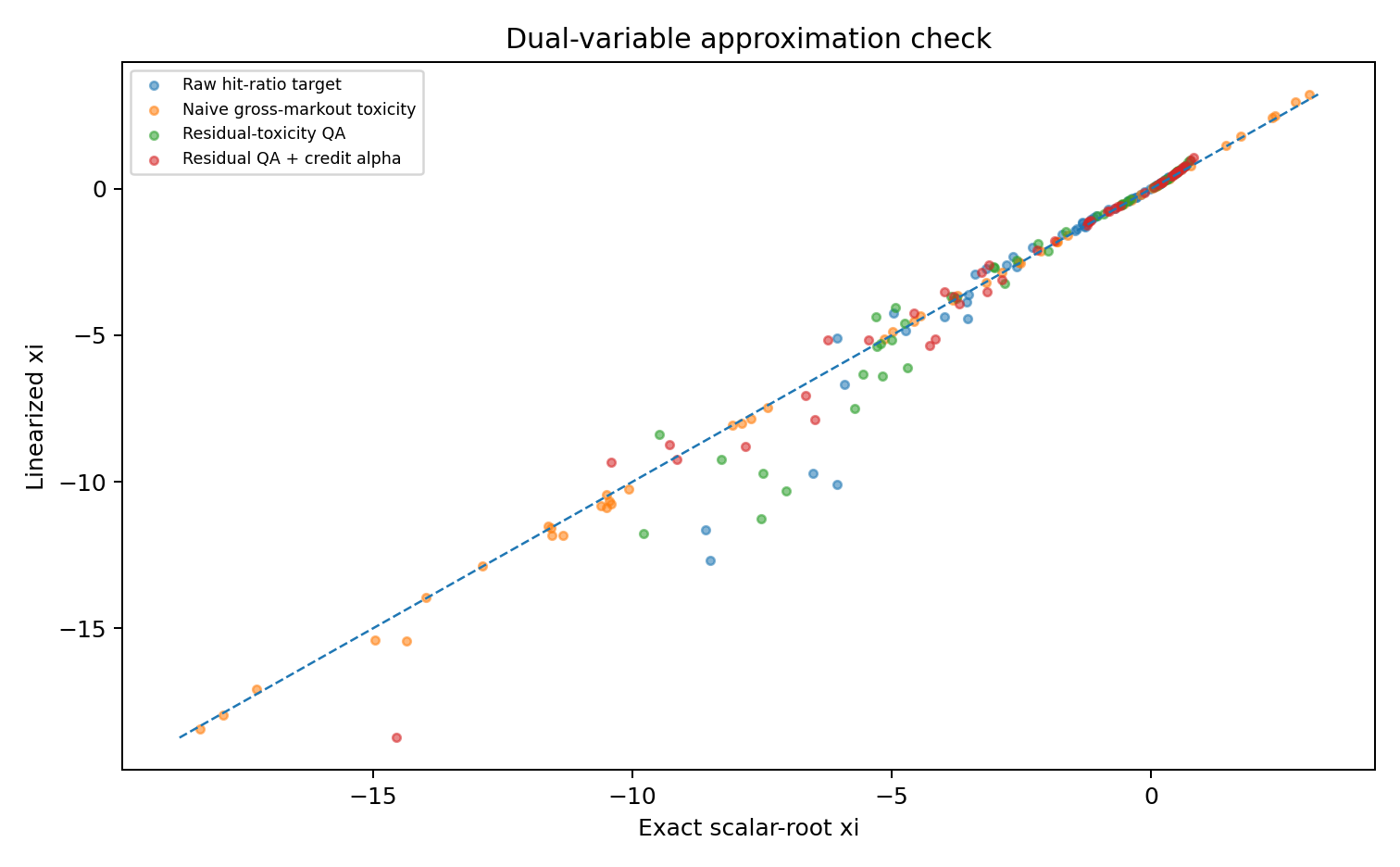}
\caption{Exact scalar dual variable versus the local linearized dual approximation. The comparison motivates using the nonlinear solve for headline results.}
\label{fig:app_step4_exact_linear_xi}
\end{figure}

\begin{figure}[H]
\centering
\includegraphics[width=0.76\textwidth]{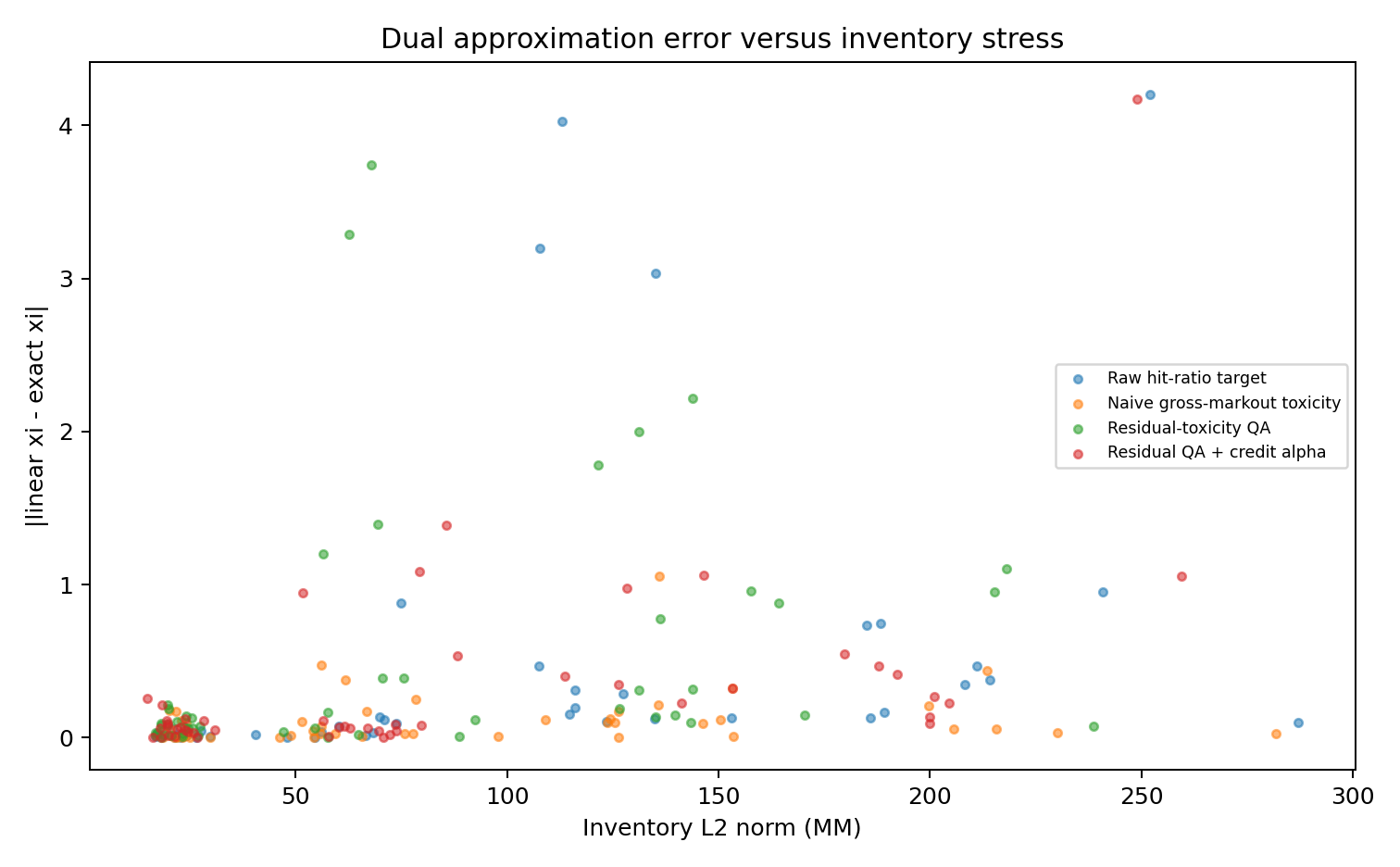}
\caption{Error of the linearized dual approximation versus inventory stress. Approximation errors are larger in stressed states, so the headline results use the nonlinear scalar solve.}
\label{fig:app_step4_xi_error}
\end{figure}

\subsection{Passive/index extension supplementary diagnostics}

\begin{figure}[H]
\centering
\includegraphics[width=0.76\textwidth]{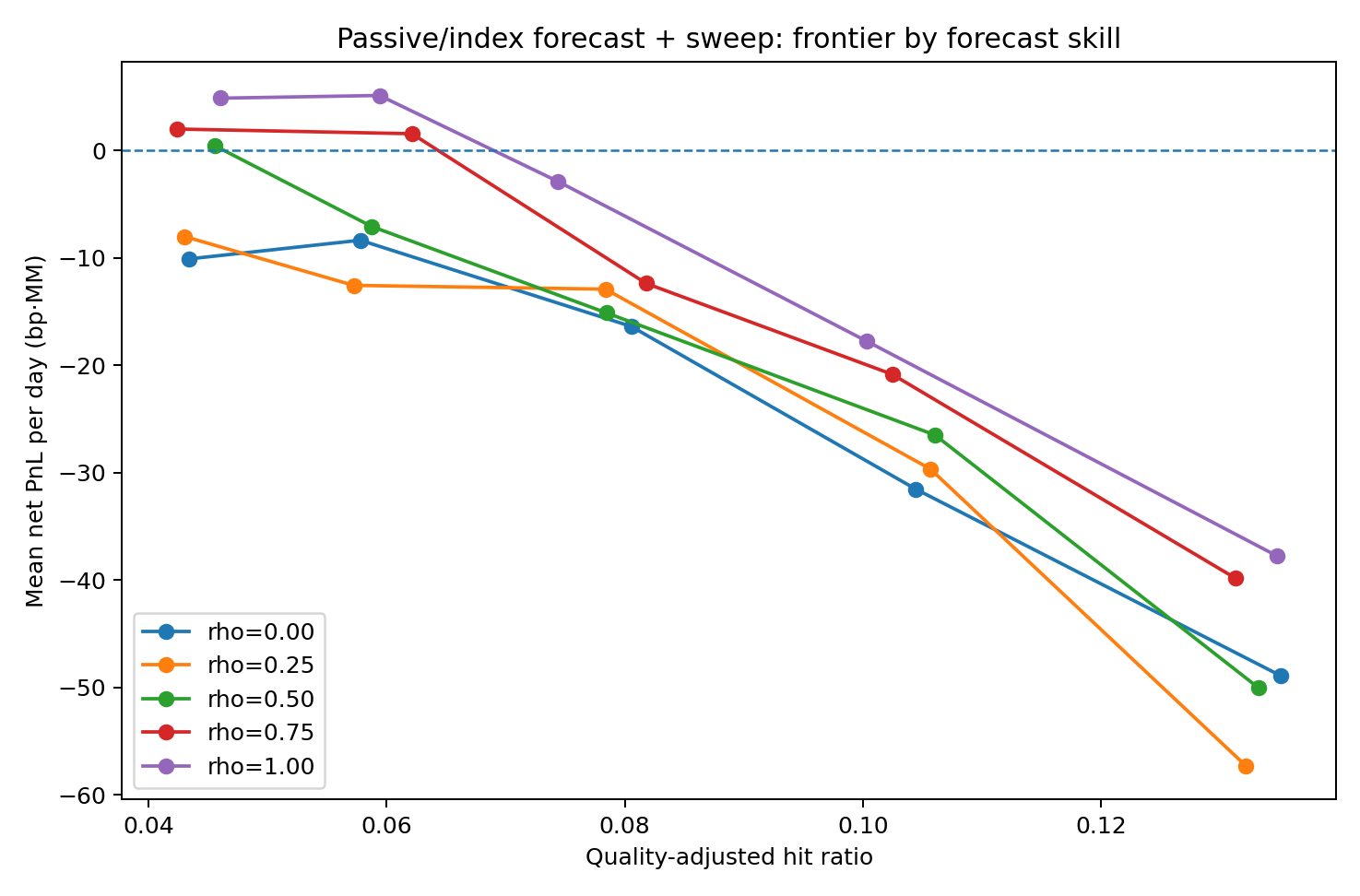}
\caption{Passive/index extension frontier by forecast skill with a fixed 50 percent sweep pre-positioning fraction. The figure shows that aggressive pre-positioning is not uniformly beneficial.}
\label{fig:app_step5_frontier_skill}
\end{figure}

\begin{figure}[H]
\centering
\includegraphics[width=0.76\textwidth]{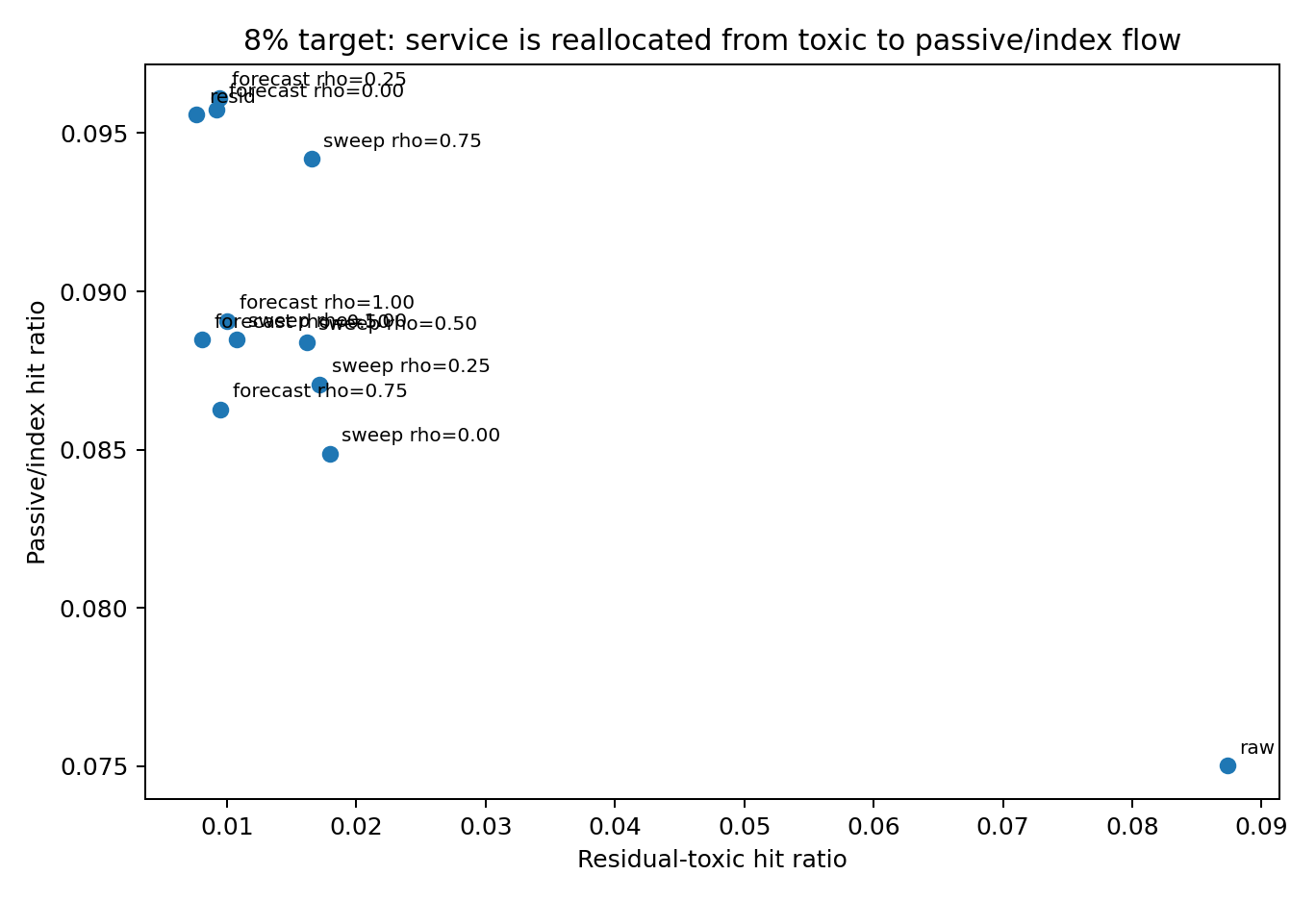}
\caption{Passive/index versus residual-toxic hit ratios at the 8 percent target. The extension reallocates service toward low-residual-toxicity passive/index flow.}
\label{fig:app_step5_passive_toxic}
\end{figure}

\begin{figure}[H]
\centering
\includegraphics[width=0.76\textwidth]{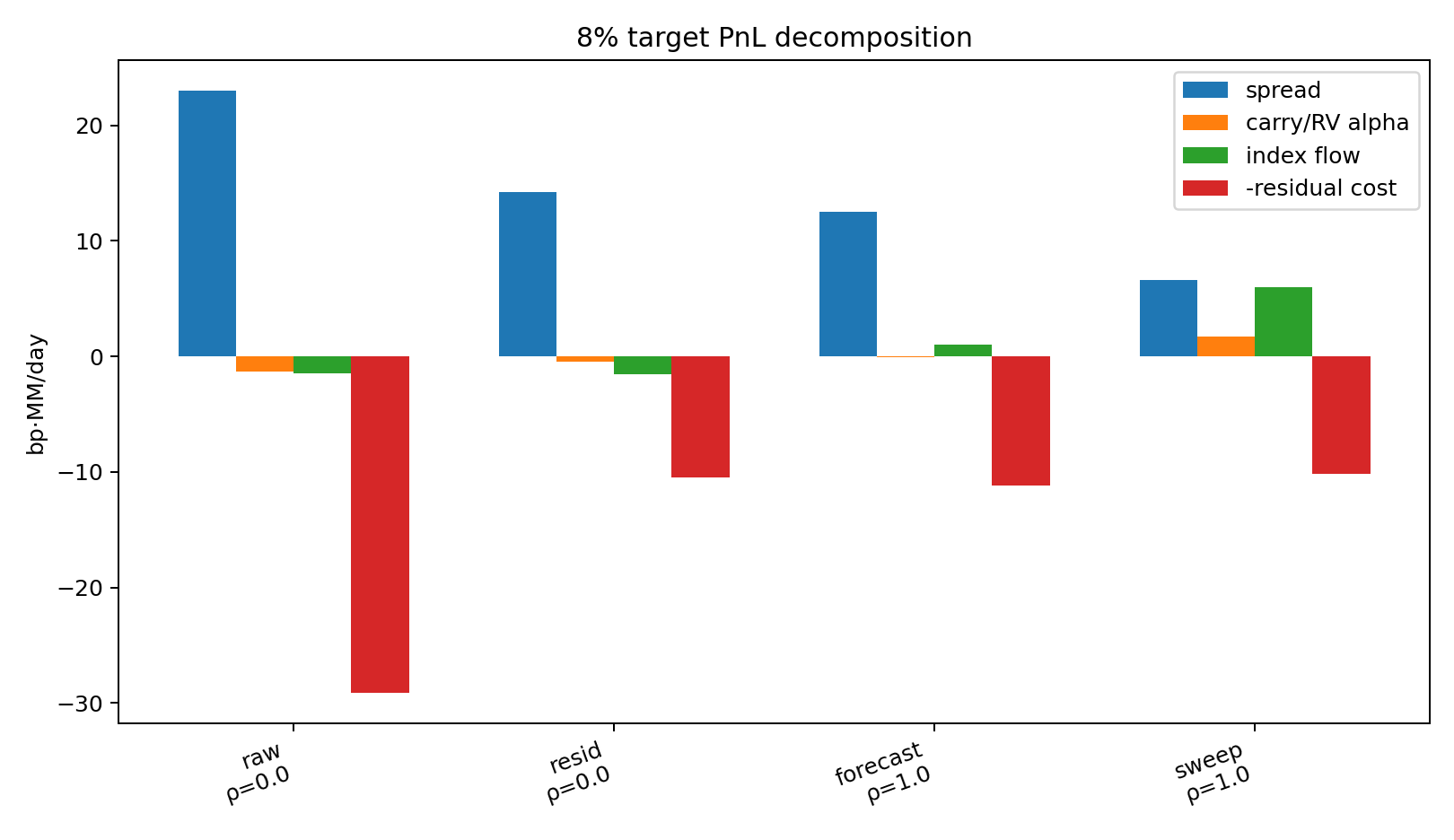}
\caption{PnL decomposition for the passive/index extension. Forecastable index flow adds recycling value but residual costs and liquidation costs remain important.}
\label{fig:app_step5_pnl_decomp}
\end{figure}

\begin{figure}[H]
\centering
\includegraphics[width=0.76\textwidth]{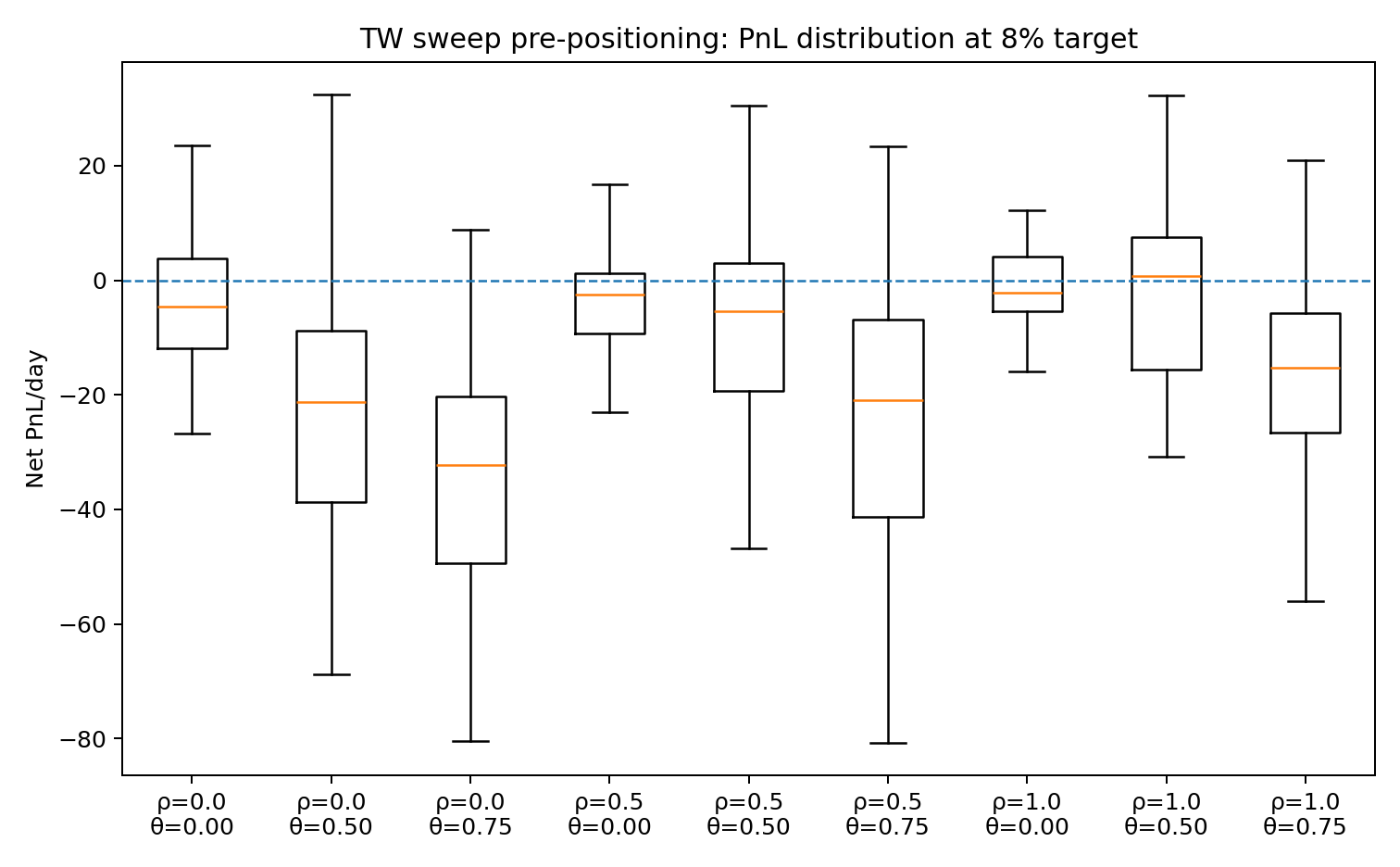}
\caption{TW Sweep pre-positioning PnL distribution at the 8 percent target across selected forecast-skill and sweep-fraction scenarios.}
\label{fig:app_step5_sweep_dist}
\end{figure}

\subsection{Additional nonlinear-dual implementation checks}

\begin{figure}[H]
\centering
\includegraphics[width=0.76\textwidth]{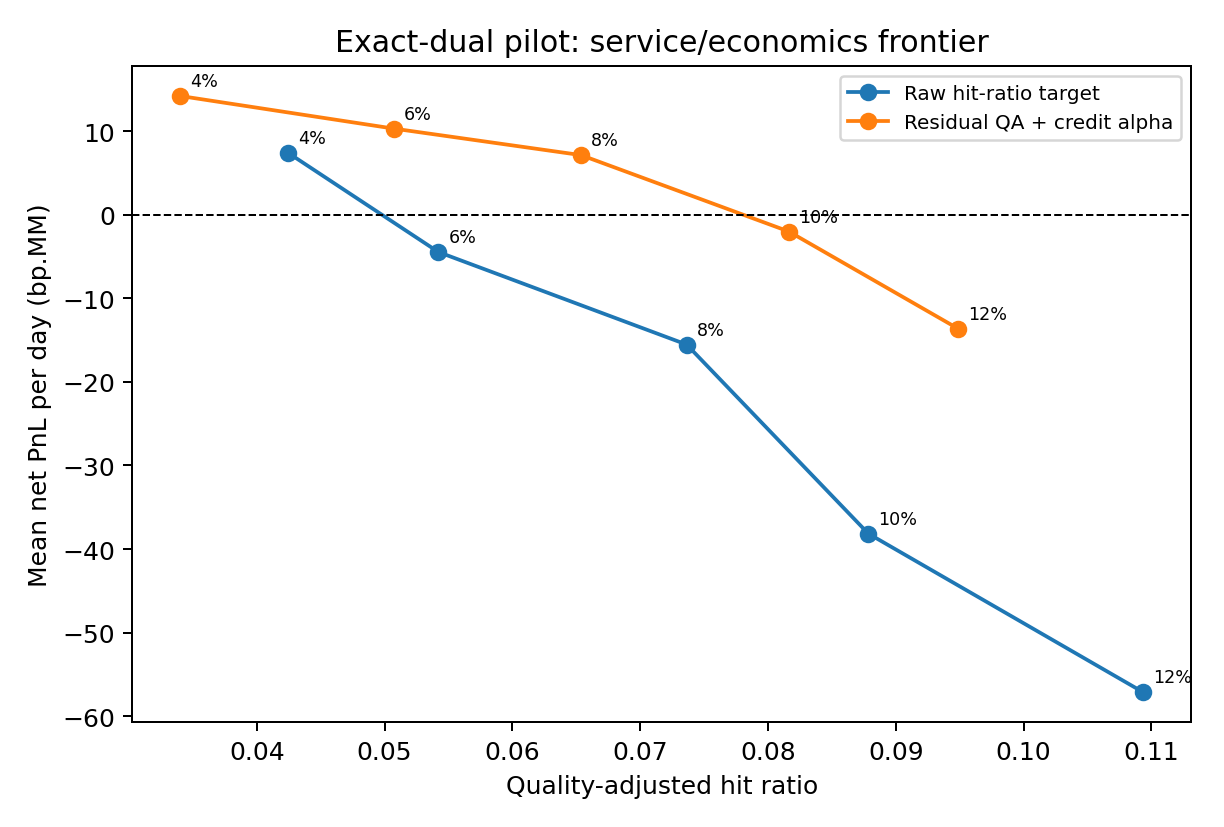}
\caption{Small-sample nonlinear-dual frontier used as an implementation check before the larger attained-quality experiment.}
\label{fig:app_pilot_frontier}
\end{figure}

\begin{figure}[H]
\centering
\includegraphics[width=0.76\textwidth]{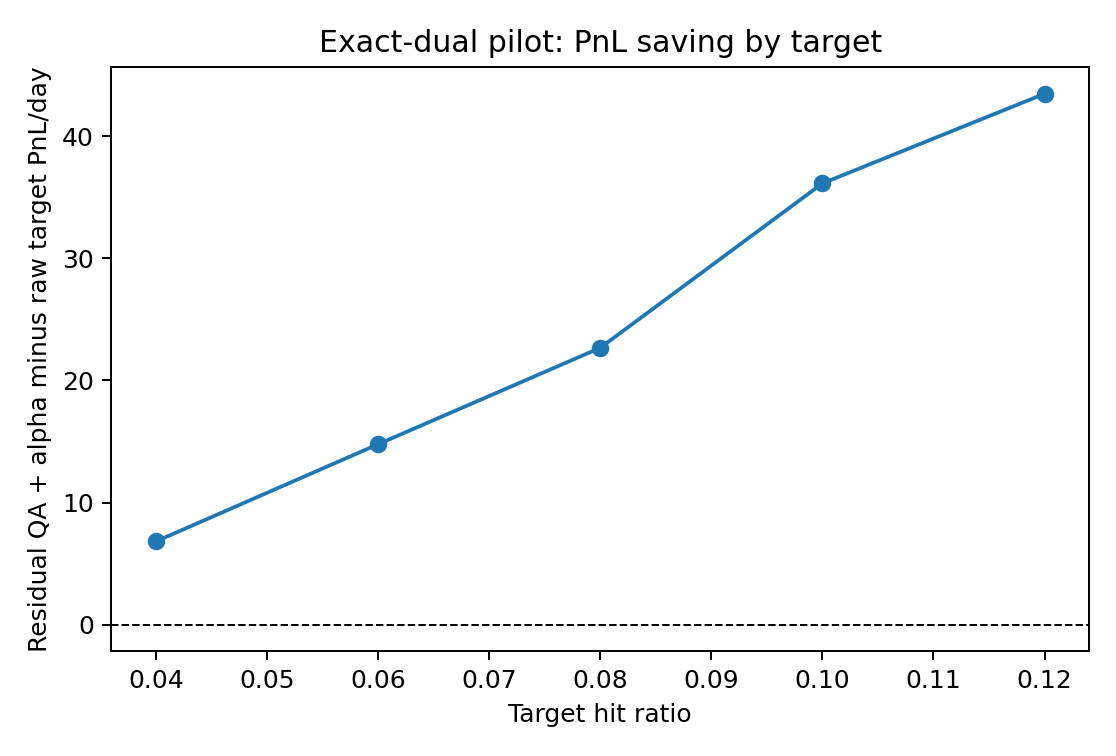}
\caption{Small-sample nonlinear-dual PnL-saving check. The main text reports the larger attained-quality frontier.}
\label{fig:app_pilot_saving}
\end{figure}

\begin{figure}[H]
\centering
\includegraphics[width=0.76\textwidth]{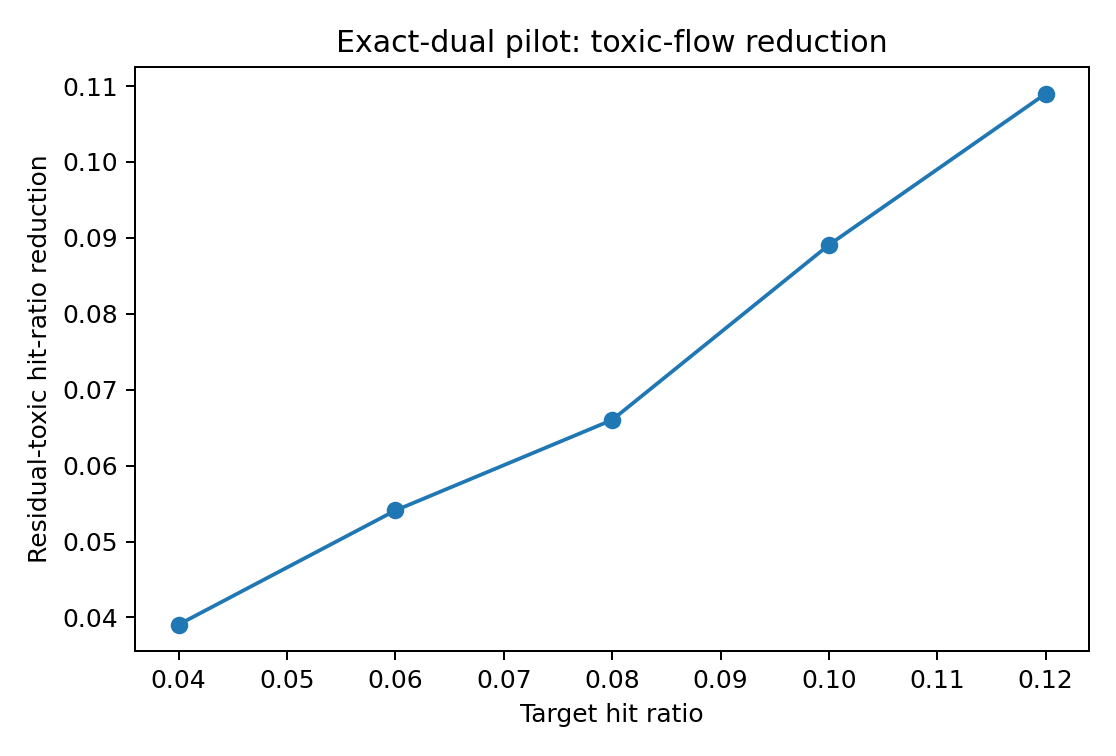}
\caption{Small-sample nonlinear-dual toxic-flow reduction check. The main text reports the larger flow-allocation analysis.}
\label{fig:app_pilot_toxic}
\end{figure}

\subsection{Risk-aware style-flow warehousing supplementary diagnostics}

\begin{figure}[H]
\centering
\includegraphics[width=0.76\textwidth]{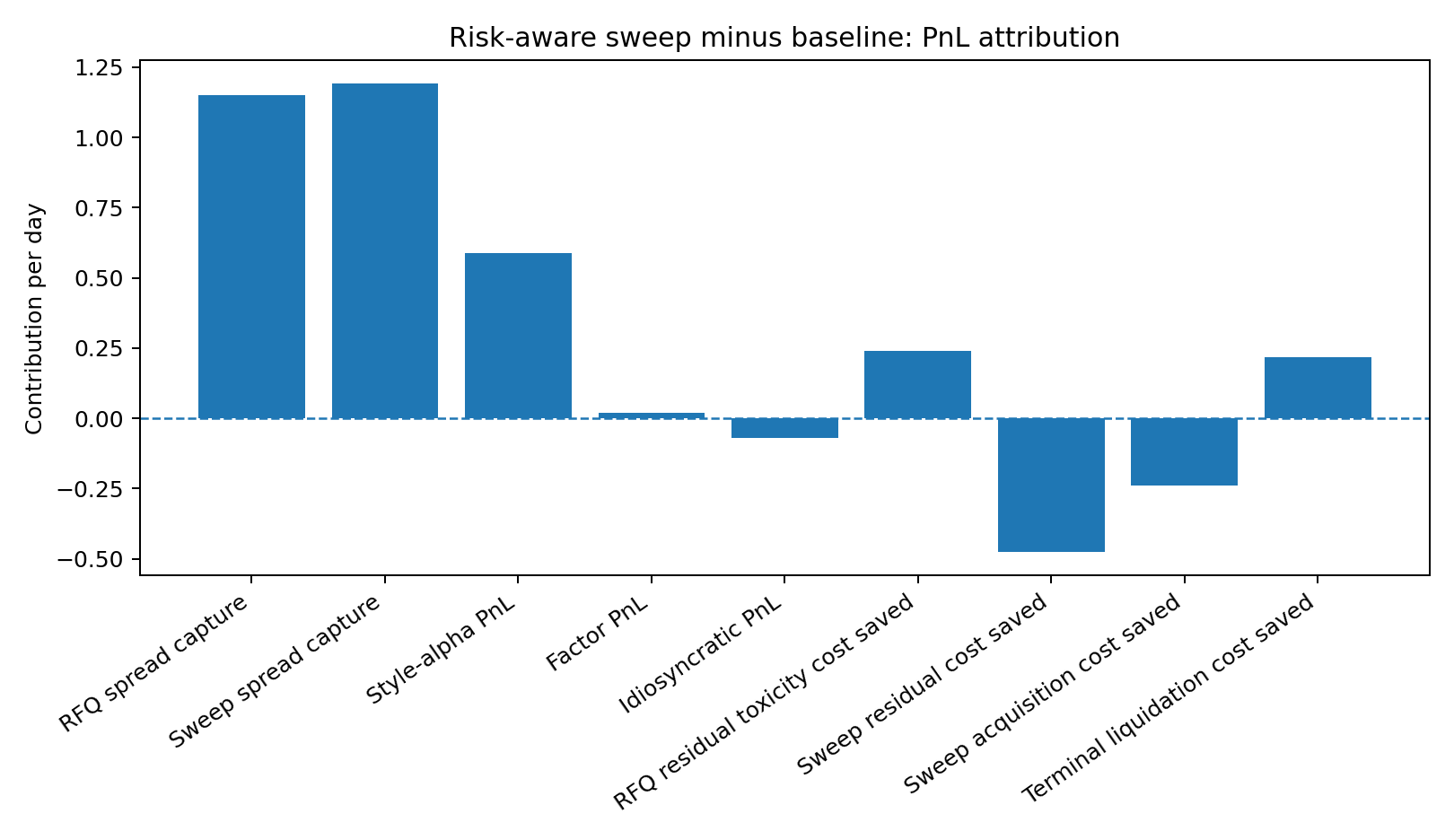}
\caption{Risk-aware style-flow PnL attribution relative to the residual-quality baseline. The attribution shows the trade-off between Sweep spread capture, style-alpha PnL, Sweep costs, and liquidation-cost savings.}
\label{fig:app_risk_aware_attr}
\end{figure}

\begin{figure}[H]
\centering
\includegraphics[width=0.76\textwidth]{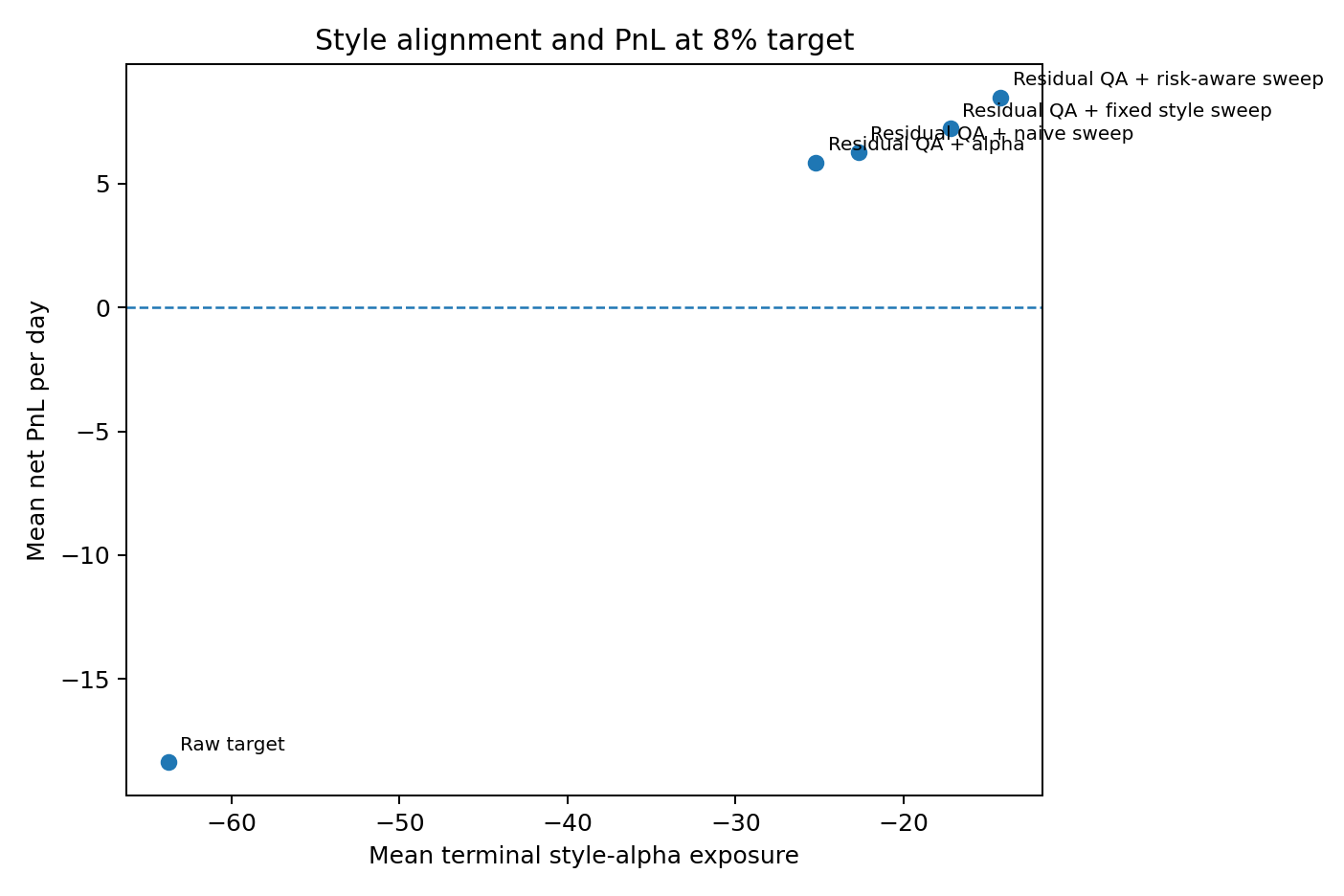}
\caption{Style-alpha exposure and PnL at the 8 percent target. Risk-aware participation improves the quality of acquired inventory while keeping inventory risk controlled.}
\label{fig:app_risk_aware_alignment}
\end{figure}

\begin{figure}[H]
\centering
\includegraphics[width=0.76\textwidth]{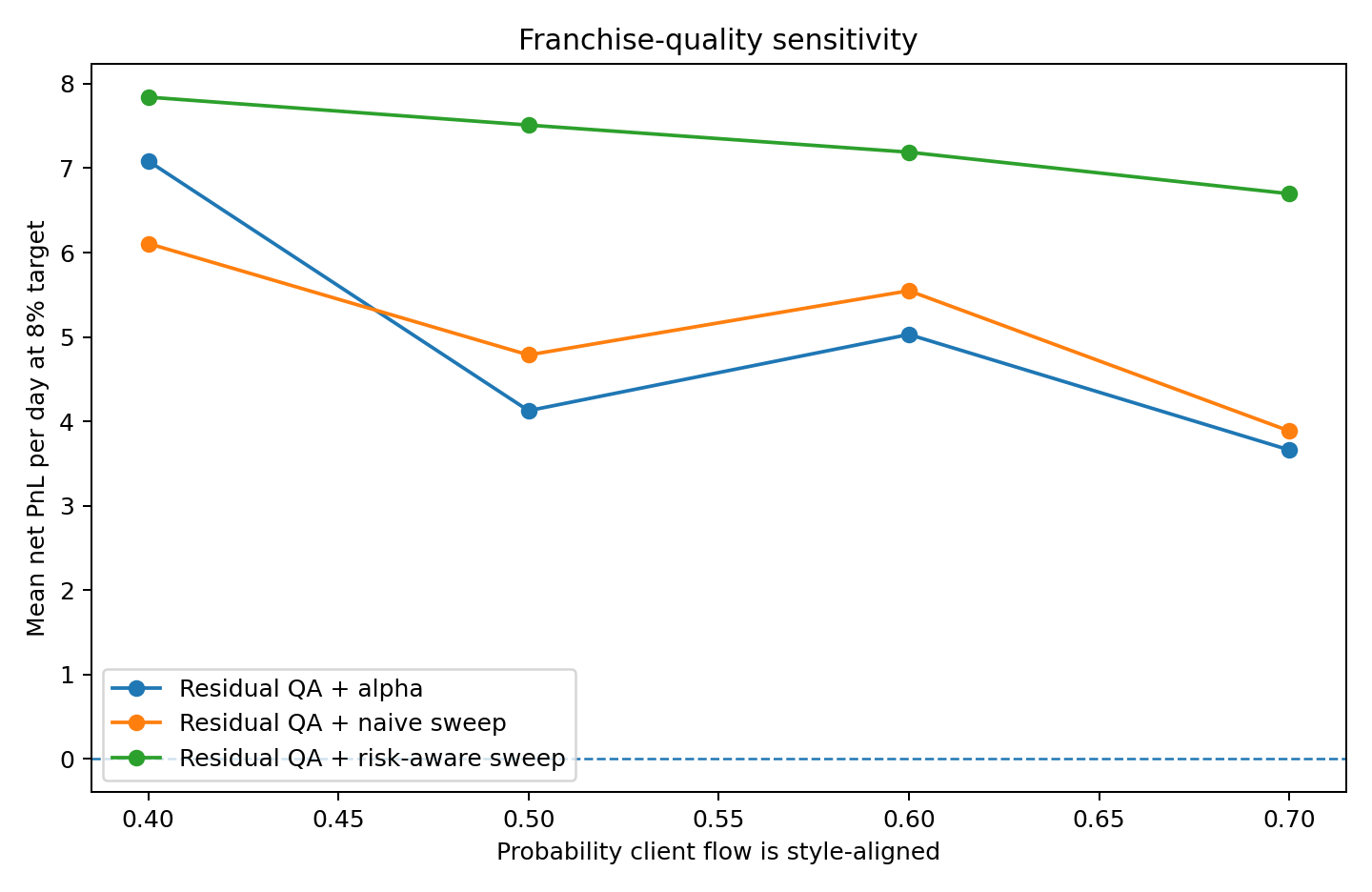}
\caption{Franchise-quality sensitivity. The probability that client inventory is style-aligned affects the economics of Sweep participation. The risk-aware rule remains above the no-Sweep and naive-Sweep baselines in this synthetic experiment.}
\label{fig:app_risk_aware_franchise}
\end{figure}

\end{document}